%
%
%
%
%
%
%
\documentclass[twoside]{aiml16}
\hyphenation{a-bo-ve ad-mis-si-bi-li-ty as-su-me be-low bet-ween Boo-le-an clas-ses com-ple-te con-di-tion con-di-tions con-se-quent-ly Con-se-quent-ly con-si-der cons-tant cons-tants cor-res-ponds cor-res-pon-den-ce de-fi-na-bi-li-ty de-fi-na-ble ele-men-ta-ry fol-lo-wing fol-lows for-ma-li-ze for-mu-la fra-me fra-mes ga-me in-duc-ti-ve-ly ins-tan-ce in-tui-tio-nis-tic Lem-ma Lem-mas lo-cal-ly Lo-gic mi-ni-mal mi-ni-ma-li-ty mo-dal mo-del more-over More-over Ne-ver-the-less non-ne-ga-ti-ve no-ti-ce nul-la-ri-ness ob-ser-ved or-der po-si-ti-ve pro-blem pro-po-si-tion Pro-po-si-tion pro-po-si-tion Pro-po-si-tions quan-ti-fier ran-ges re-la-tion-ship re-la-tion-ships res-pect re-sult Sahlq-vist sen-ten-ce sen-ten-ces sin-ce stra-te-gy substi-tu-ti-on theo-ry the-re-fo-re un-de-ci-da-bi-li-ty uni-fi-cation va-lua-tion wi-thin}
\usepackage{aiml16macro,amsfonts,amsmath,amssymb,color,graphicx,latexsym}
\usepackage{yfonts,bm,enumitem,multicol}
%
%
%
%
%
%
%
%
%
%
%
%

%
%

%
%

%
%
%
%

%
%

%
%

%
%

%
%
\def\N{\mathbb{N}}
\def\degre{\mathtt{deg}}
\def\NP{\mathbf{NP}}
\def\PSPACE{\mathbf{PSPACE}}
\def\L{\mathbf{L}}
\def\S{\mathbf{S}}
\def\K{\mathbf{K}}
\def\KD{\mathbf{KD}}
\def\KB{\mathbf{KB}}
\def\KT{\mathbf{KT}}
\def\KTB{\mathbf{KTB}}
\def\Alt{\mathbf{Alt}}
\def\LTL{\mathbf{LTL}}
\def\T{\mathbf{T}}
\def\D{\mathbf{D}}
\def\VAR{\mathbf{VAR}}
\def\BIT{\mathbf{BIT}}
\def\FOR{\mathbf{FOR}}
\def\SUB{\mathbf{SUB}}
\def\CHA{\mathbf{CHA}}
\def\for{\mathbf{for}}
\begin{document}
\begin{frontmatter}
\title{About the unification types of the modal logics
determined by classes of deterministic frames}
\author{Philippe Balbiani$^{a}$
\hspace{0.2cm}
\c{C}i\u{g}dem Gencer$^{a,b}$}
\author{Maryam Rostamigiv$^{a}$
\hspace{0.2cm}
Tinko Tinchev$^{c}$}
\address{$^{a}$Toulouse Institute of Computer Science Research
\\
CNRS~---~Toulouse University, Toulouse, France
\\
$^{b}$Faculty of Arts and Sciences
\\
Istanbul Ayd\i n University, Istanbul, Turkey
\\
$^{c}$Faculty of Mathematics and Informatics
\\
Sofia University St. Kliment Ohridski, Sofia, Bulgaria}
\begin{abstract}
The unification problem in a propositional logic is to determine, given a formula $\varphi$, whether there exists a substitution $\sigma$ such that $\sigma(\varphi)$ is in that logic.
In that case, $\sigma$ is a unifier of $\varphi$.
When a unifiable formula has minimal complete sets of unifiers, the formula is either infinitary, finitary, or unitary, depending on the cardinality of its minimal complete sets of unifiers.
In this paper, we prove that for all $d{\geq}2$, in modal logic $\Alt_{1}+\square^{d}\bot$, unifiable formulas are unitary.
\end{abstract}
\begin{keyword}
Modal logics.
Deterministic frames.
Unification types.
\end{keyword}
\end{frontmatter}
\section{Introduction}\label{section:introduction}
The unification problem in a propositional logic is to determine, given a formula $\varphi$, whether there exists a substitution $\sigma$ such that $\sigma(\varphi)$ is in that logic.
In that case, $\sigma$ is a unifier of $\varphi$.
We shall say that a set of unifiers of a unifiable formula $\varphi$ is complete if for all unifiers $\sigma$ of $\varphi$, there exists a unifier $\tau$ of $\varphi$ in that set such that $\tau$ is more general than $\sigma$.
Now, an important question is to determine whether a given unifiable formula has minimal complete sets of unifiers~\cite{Baader:Ghilardi:2011}.
When such sets exist, they all have the same cardinality.
In that case, a unifiable formula is either infinitary, or finitary, or unitary, depending whether its complete sets of unifiers are either infinite, or finite, or with cardinality $1$.
Otherwise, the formula is nullary.
\\
\\
Within the context of the unification problem in a propositional logic, we usually distinguish between elementary unification and unification with constants.
In unification with constants, some variables (called constants) are never replaced by formulas when one applies a substitution whereas in elementary unification, all variables are likely to be replaced.
About the unification type of modal logics\footnote{In this paper, all modal logics are normal.
We follow the same conventions as in~\cite{Blackburn:deRijke:Venema:2001,Chagrov:Zakharyaschev:1997,Kracht:1999} for talking about them: $\S5$ is the least modal logic containing the formulas usually denoted $\T$, $4$ and $5$, $\KD$ is the least modal logic containing the formula usually denoted $\D$, etc.
In other respect, $\LTL$ is the modal logic with ``next'' and ``until'' interpreted over $\N$.
For more on $\LTL$, see~\cite{Emerson:1990,Goldblatt:1992}.}, it is known that $\KT$, $\KD$ and $\KB$ are nullary~\cite{Balbiani:2019,Balbiani:Gencer:2017a,Balbiani:Gencer:to:appear}, $\S5$ and $\S4.3$ are unitary~\cite{Dzik:2003,Dzik:2007,Dzik:Wojtylak:2012}, transitive modal logics like $\K4$ and $\S4$ are finitary~\cite{Ghilardi:2000,Iemhoff:2016}, $\KD45$ and $\K45$ are unitary~\cite{Ghilardi:Sacchetti:2004,Jerabek:2013}, $\K$ is nullary~\cite{Jerabek:2015} and $\K4\D1$ is unitary~\cite{Kost:2018}, the nullariness of $\KT$, $\KD$ and $\KB$ having only been obtained within the context of unification with constants.
\\
\\
The importance of the unification problem lies in its connection with the admissibility problem.
In a consistent propositional logic $\L$, unification is reducible to non-admissibility, seeing that the unifiability in $\L$ of a formula $\varphi$ is equivalent to the non-admissibility in $\L$ of the inference rule $\frac{\varphi}{\bot}$.
As observed by Ghilardi~\cite{Ghilardi:2000}, when $\L$ has a decidable membership problem and $\L$ is either unitary, or finitary, algorithms for computing minimal complete sets of unifiers in $\L$ can be used as a key component of algorithms for solving the admissibility problem in $\L$, seeing that the admissibility in $\L$ of an inference rule $\frac{\varphi_{1},\ldots,\varphi_{p}}{\psi}$ is equivalent to the inclusion in $\L$ of the set $\{\sigma(\psi):\ \sigma{\in}\Sigma\}$, where $\Sigma$ is an arbitrary minimal complete set of unifiers of $\varphi_{1}\wedge\ldots\wedge\varphi_{p}$ in $\L$.
\\
\\
$\LTL$ is the standard modal logic used in the specification and verification of reactive systems~\cite{Emerson:1990,Goldblatt:1992}.
Owing to its significance within the context of applied non-classical logics, it is natural to answer the question of its unification type.
This has been done by Babenyshev and Rybakov~\cite{Babenyshev:Rybakov:2010} who have proved that $\LTL$ is unitary within the context of elementary unification.
It is also natural to answer the question of the unification type of $\LTL$ when its syntax is restricted somehow or other.
For instance, one can consider the syntactic restriction of $\LTL$ to its next fragment.
When interpreted over $\N$, this syntactic restriction is equivalent to $\Alt_{1}+\lozenge\top$ (the least modal logic containing all formulas of the form $\lozenge\varphi\rightarrow\square\varphi$ and the formula $\lozenge\top$).
\\
\\
There is no link between the unification type of a propositional logic and the unification types of its syntactic restrictions\footnote{For instance, Boolean Logic is unitary~\cite{Martin:Nipkow:1989} while its implication fragment is finitary within the context of unification with constants~\cite{Balbiani:Mojtahedi:submitted}.
Of course, seeing that the unification type of an equational theory depends not only on the equational theory itself but also on the set of function symbols that can occur in the considered unification problems, this phenomenon is already well-known from the theory of unification~\cite{Baader:Snyder:2001}.}.
About the unification type of $\Alt_{1}$ (the least modal logic containing all formulas of the form $\lozenge\varphi\rightarrow\square\varphi$) and its extensions, the line of reasoning determining in~\cite{Balbiani:Gencer:2017a} the unification type (nullary) of $\KD$ within the context of unification with constants can be adapted to $\Alt_{1}+\lozenge\top$ whereas the line of reasoning determining in~\cite{Jerabek:2015} the unification type (nullary) of $\K$ has been adapted to $\Alt_{1}$~\cite{Balbiani:Tinchev:2016}.
In this paper, within the context of elementary unification, we prove that for all $d{\geq}2$, in $\Alt_{1}+\square^{d}\bot$ (the least modal logic containing all formulas of the form $\lozenge\varphi\rightarrow\square\varphi$ and the formula $\square^{d}\bot$), unifiable formulas are unitary\footnote{We assume the reader is at home with tools and techniques in modal logics.
For more on them, see~\cite{Blackburn:deRijke:Venema:2001,Chagrov:Zakharyaschev:1997,Kracht:1999}.}.
\section{Preliminaries}
In this section, we introduce a handful of definitions that will be useful throughout the paper.
We also introduce a result (Proposition~\ref{main:one}) that will be useful in Section~\ref{section:about:the:proof:of:the:existence:of:f:bounded:deterministic} immediately after the proof of Proposition~\ref{cardinality:ordered:appropriate:way}.
For all sets $S$, ${\parallel}S{\parallel}$ will denote the cardinality of $S$.
For all nonempty sets $S$, for all equivalence relations $\sim$ on $S$ and for all $\alpha{\in}S$, $\lbrack\alpha\rbrack$ will denote the equivalence class modulo $\sim$ with $\alpha$ as its representative.
For all nonempty sets $S$, for all equivalence relations $\sim$ on $S$ and for all $T{\subseteq}S$, $T/{\sim}$ will denote the quotient set of $T$ modulo $\sim$.
Notice that for all nonempty sets $S$, for all equivalence relations $\sim$ on $S$ and for all $\alpha,\beta{\in}S$, $\alpha{\sim}\beta$ iff $\alpha{\in}\lbrack\beta\rbrack$ iff $\lbrack\alpha\rbrack\cap\lbrack\beta\rbrack{\not=}\emptyset$.
\begin{proposition}\label{main:one}
Let $S, T$ be finite nonempty sets.
Let $\sim$ be an equivalence relation on $S$.
The following conditions are equivalent:
\begin{enumerate}
\item ${\parallel}S/{\sim}{\parallel}{\leq}{\parallel}T{\parallel}{\leq}{\parallel}S{\parallel}$,
\item there exists a surjective function $f$ from $S$ to $T$ such that for all $\alpha,\beta{\in}S$, if $f(\alpha){=}f(\beta)$ then $\alpha{\sim}\beta$.
\end{enumerate}
\end{proposition}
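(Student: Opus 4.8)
The plan is to prove the two implications separately, the direction (2)$\Rightarrow$(1) being the easy bookkeeping part and (1)$\Rightarrow$(2) carrying the real content.

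First I would treat (2)$\Rightarrow$(1). Suppose $f\colon S\to T$ is surjective and $f$-related elements are $\sim$-related. Surjectivity of $f$ immediately gives ${\parallel}T{\parallel}{\leq}{\parallel}S{\parallel}$. For the other inequality, note that the fibre partition $\{f^{-1}(t):t{\in}T\}$ of $S$ refines the partition $S/{\sim}$, because $f(\alpha){=}f(\beta)$ implies $\alpha{\sim}\beta$; hence ${\parallel}S/{\sim}{\parallel}{\leq}{\parallel}\{f^{-1}(t):t{\in}T\}{\parallel}{=}{\parallel}T{\parallel}$, the last equality again by surjectivity. This settles (2)$\Rightarrow$(1).

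For (1)$\Rightarrow$(2), write $n={\parallel}S{\parallel}$, $k={\parallel}S/{\sim}{\parallel}$ and $m={\parallel}T{\parallel}$, so $k{\leq}m{\leq}n$. The idea is to build $f$ in a way that "collapses along $\sim$ only as much as is forced": we want the fibres of $f$ to be exactly $m$ pairwise disjoint nonempty subsets of $S$, each contained in a single $\sim$-class, whose union is all of $S$. List the $\sim$-classes as $C_{1},\dots,C_{k}$ with sizes $c_{1},\dots,c_{k}$, so $c_{1}+\cdots+c_{k}=n$ and each $c_{i}{\geq}1$. I need to split these $k$ classes into a total of $m$ nonempty blocks, each block lying inside one class; this is possible precisely because $k{\leq}m$ (so we have enough classes to start from) and $m{\leq}n$ (so the classes contain enough elements to be cut into $m$ pieces): concretely, distribute the "budget" $m$ as $m=m_{1}+\cdots+m_{k}$ with each $1{\leq}m_{i}{\leq}c_{i}$, which is feasible since $\sum 1=k{\leq}m{\leq}n=\sum c_{i}$, and then partition each $C_{i}$ into $m_{i}$ nonempty pieces. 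Enumerate the resulting $m$ pieces as $B_{1},\dots,B_{m}$, fix a bijection $t\colon\{1,\dots,m\}\to T$, and define $f(\alpha)=t(j)$ for the unique $j$ with $\alpha{\in}B_{j}$. Then $f$ is surjective since every $B_{j}$ is nonempty, and $f(\alpha){=}f(\beta)$ forces $\alpha,\beta$ to lie in the same $B_{j}$, hence in the same $\sim$-class, so $\alpha{\sim}\beta$.

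The only slightly delicate point is the existence of the decomposition $m=m_{1}+\cdots+m_{k}$ with $1{\leq}m_{i}{\leq}c_{i}$; I would dispatch it by a one-line greedy or induction-on-$k$ argument using exactly the hypotheses $k{\leq}m$ and $m{\leq}\sum c_{i}$, so I expect no real obstacle here, only care in the write-up. Everything else is elementary counting for finite sets.
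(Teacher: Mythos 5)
Your proof is correct, but its organization differs from the paper's in both directions. For $(1)\Rightarrow(2)$ the paper does not count class sizes at all: it fixes a transversal $S_{0}$ of $S/{\sim}$ (one representative $h(\lbrack\alpha\rbrack)$ per class), injects it into $T$ via a bijection $f_{0}\colon S_{0}\to T_{0}$ (possible since ${\parallel}S/{\sim}{\parallel}{\leq}{\parallel}T{\parallel}$), covers the remainder $T_{1}{=}T\setminus T_{0}$ bijectively by a set $S_{1}{\subseteq}S\setminus S_{0}$ (possible since ${\parallel}T{\parallel}{\leq}{\parallel}S{\parallel}$), and sends every leftover element $\alpha{\in}S_{2}$ to $f_{0}(h(\lbrack\alpha\rbrack))$, the image of its own representative; the required properties then follow by a short case analysis. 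Your construction instead fixes the fibre structure in advance: you split the budget ${\parallel}T{\parallel}$ as $m_{1}+\cdots+m_{k}$ with $1{\leq}m_{i}{\leq}c_{i}$ and cut each class $C_{i}$ into $m_{i}$ nonempty pieces, which is a genuinely different decomposition. What each buys: the paper's piecewise definition needs no arithmetic feasibility step (everything is handled by choosing subsets of the right cardinalities), whereas your version isolates the counting content in the small lemma on the decomposition of $m$, which you only sketch ("greedy or induction") but which is indeed routine given $k{\leq}m{\leq}n$; in exchange your fibres are described uniformly rather than by a three-case definition, which makes the verification of surjectivity and of the implication $f(\alpha){=}f(\beta)\Rightarrow\alpha{\sim}\beta$ immediate. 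For the converse direction the paper argues by contradiction, picking $p{>}{\parallel}T{\parallel}$ pairwise non-equivalent elements whose $f$-images must be distinct, while you observe directly that the fibre partition refines the quotient partition, so ${\parallel}S/{\sim}{\parallel}{\leq}{\parallel}T{\parallel}$; both are sound, yours being slightly more direct.
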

Let $\VAR$ be a countably infinite set of {\em variables}\/ (with typical members denoted $x$, $y$, etc).
Let $(x_{1},x_{2},\ldots)$ be an enumeration of $\VAR$ without repetitions.
A {\em frame}\/ is a couple $(W,R)$ where $W$ is a non-empty set (with typical members denoted $s$, $t$, etc) and $R$ is a binary relation on $W$.
We shall say that a frame $(W,R)$ is {\em deterministic}\/ if for all $s,t,u{\in}W$, if $sRt$ and $sRu$ then $t{=}u$.
For all $d{\geq}2$, we shall say that a frame $(W,R)$ is {\em $d$-bounded}\/ if for all $s_{0},\ldots,s_{d}{\in}W$, there exists $i{\in}\N$ such that $i{<}d$ and not $s_{i}Rs_{i+1}$.
For all $d{\geq}2$, let ${\mathcal C}^{d}_{det}$ be the class of all deterministic $d$-bounded frames.
For all $n{\geq}1$, an {\em $n$-tuple of bits}\/ (denoted $\alpha$, $\beta$, etc) is a function from $\{1,\ldots,n\}$ to $\{0,1\}$.
For all $n{\geq}1$, let $\BIT_{n}$ be the set of all $n$-tuples of bits.
The well-founded strict partial order $\ll$ on $\N\times\N$ is defined by
\begin{itemize}
\item $(d^{\prime},d^{\prime\prime}){\ll}(d^{\prime\prime\prime},d^{\prime\prime\prime\prime})$ iff $d^{\prime}{<}d^{\prime\prime\prime}$ and $d^{\prime\prime}{<}d^{\prime\prime\prime\prime}$,
\end{itemize}
where $(d^{\prime},d^{\prime\prime}),(d^{\prime\prime\prime},d^{\prime\prime\prime\prime})$ range over $\N\times\N$.
\section{Syntax}\label{section:syntax}
Let $n{\geq}1$.
The set $\FOR_{n}$ of all {\em $n$-formulas}\/ (with typical members denoted $\varphi$, $\psi$, etc) is inductively defined as follows:
\begin{itemize}
\item $\varphi,\psi::=x_{i}\mid\bot\mid\neg\varphi\mid(\varphi\vee\psi)\mid\square\varphi$,
\end{itemize}
where $i$ ranges over $\{1,\ldots,n\}$.
We adopt the standard rules for omission of the parentheses.
The Boolean connectives $\top$, $\wedge$, $\rightarrow$ and $\leftrightarrow$ are defined by the usual abbreviations.
The modal connective $\lozenge$ is defined by
%
%
%
%
$\lozenge\varphi::=\neg\square\neg\varphi$.
%
%
%
%
For all $\varphi{\in}\FOR_{n}$, we write ``$\varphi^{0}$'' to mean ``$\neg\varphi$'' and we write ``$\varphi^{1}$'' to mean ``$\varphi$''.
For all $d{\in}\N$, the modal connective $\square^{d}$ is inductively defined as follows:
\begin{itemize}
\item $\square^{0}\varphi::=\varphi$,
\item $\square^{d+1}\varphi::=\square\square^{d}\varphi$.
\end{itemize}
The {\em $n$-degree}\/ of $\varphi{\in}\FOR_{n}$ (in symbols $\degre_{n}(\varphi)$) is the nonnegative integer inductively defined as follows:
\begin{itemize}
\item $\degre_{n}(x_{i}){=}0$,
\item $\degre_{n}(\bot){=}0$,
\item $\degre_{n}(\neg\varphi){=}\degre_{n}(\varphi)$,
\item $\degre_{n}(\varphi\vee\psi){=}\max\{\degre_{n}(\varphi),\degre_{n}(\psi)\}$,
\item $\degre_{n}(\square\varphi){=}\degre_{n}(\varphi)+1$.
\end{itemize}
\section{Semantics}\label{section:semantics}
Let $n{\geq}1$.
An {\em $n$-model}\/ based on a frame $(W,R)$ is a triple $(W,R,V)$ where $V$ is a function assigning for all $i{\in}\{1,\ldots,n\}$, a subset $V(x_{i})$ of $W$ to the variable $x_{i}$.
Given an $n$-model $(W,R,V)$, the {\em $n$-satisfiability}\/ of $\varphi{\in}\FOR_{n}$ at $s{\in}W$ (in symbols $s\models_{n}\varphi$) is inductively defined as usual.
In particular:
\begin{itemize}
%
%
%
%
%
%
%
%
%
%
\item $s{\models_{n}}\square\varphi$ iff for all $t{\in}W$, if $sRt$ then $t{\models_{n}}\varphi$.
\end{itemize}
Obviously,
%
%
%
%
$s{\models_{n}}\lozenge\varphi$ iff there exists $t{\in}W$ such that $sRt$ and $t{\models_{n}}\varphi$.
%
%
%
%
We shall say that $\varphi{\in}\FOR_{n}$ is {\em $n$-true}\/ in an $n$-model $(W,R,V)$ if $\varphi$ is $n$-satisfied at all $s{\in}W$.
We shall say that $\varphi{\in}\FOR_{n}$ is {\em $n$-valid}\/ in a frame $(W,R)$ if $\varphi$ is $n$-true in all $n$-models based on $(W,R)$.
We shall say that $\varphi{\in}\FOR_{n}$ is {\em $n$-valid}\/ in a class ${\mathcal C}$ of frames (in symbols ${\mathcal C}{\models}\varphi$) if $\varphi$ is $n$-valid in all frames in ${\mathcal C}$.
Let ${\mathcal C}$ be a class of frames.
Let $\equiv_{{\mathcal C}}^{n}$ be the equivalence relation on $\FOR_{n}$ defined by
\begin{itemize}
\item $\varphi{\equiv_{{\mathcal C}}^{n}}\psi$ iff ${\mathcal C}{\models}\varphi\leftrightarrow\psi$,
\end{itemize}
where $\varphi,\psi$ range over $\FOR_{n}$.
We shall say that ${\mathcal C}$ is {\em locally $n$-tabular}\/ if $\equiv_{{\mathcal C}}^{n}$ possesses finitely many equivalence classes.
The next result follows from~\cite[Proposition~$2.29$]{Blackburn:deRijke:Venema:2001} and the fact that for all $d{\geq}2$ and for all $\varphi{\in}\FOR_{n}$, there exists $\psi{\in}\FOR_{n}$ such that $\degre_{n}(\psi){<}d$ and ${\mathcal C}^{d}_{det}{\models}\varphi\leftrightarrow\psi$.
\begin{proposition}\label{proposition:locally:tabular:classes:of:frames}
For all $d{\geq}2$, ${\mathcal C}^{d}_{det}$ is locally $n$-tabular.
\end{proposition}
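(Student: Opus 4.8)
The plan is to prove Proposition~\ref{proposition:locally:tabular:classes:of:frames} exactly as the sentence preceding it suggests, namely by combining \cite[Proposition~$2.29$]{Blackburn:deRijke:Venema:2001} with a normal-form result bounding the modal degree of $n$-formulas modulo $\equiv^{n}_{{\mathcal C}^{d}_{det}}$. Proposition~$2.29$ of Blackburn--de~Rijke--Venema states that, up to logical equivalence, there are only finitely many formulas built from a finite set of propositional variables whose modal degree is bounded by a fixed number. Hence the whole task reduces to establishing the claimed ``degree-reduction'' fact: for all $d\geq2$ and all $\varphi\in\FOR_{n}$, there is $\psi\in\FOR_{n}$ with $\degre_{n}(\psi)<d$ and ${\mathcal C}^{d}_{det}\models\varphi\leftrightarrow\psi$.

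To prove the degree-reduction fact I would argue by induction on the structure of $\varphi$, the only interesting case being $\varphi=\square\chi$ where by the induction hypothesis we may already assume $\degre_{n}(\chi)<d$. If $\degre_{n}(\chi)<d-1$ then $\degre_{n}(\square\chi)<d$ and we are done, so assume $\degre_{n}(\chi)=d-1$. The key observation is the semantics of $\square$ on $d$-bounded deterministic frames: in such a frame every point $s$ has at most one $R$-successor, and any $R$-path starting at $s$ has length at most $d-1$ (by $d$-boundedness together with determinism, the path $s=s_{0}Rs_{1}R\cdots$ cannot be extended to length $d$). Therefore, for a point $s$ with no successor, $\square\chi$ holds vacuously and is equivalent to $\top$; for a point $s$ with a (unique) successor, the truth of $\square\chi$ at $s$ depends only on the submodel rooted at that successor, whose $R$-depth is at most $d-2$. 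Intuitively one can ``unravel'' $\square\chi$ into a Boolean combination, indexed by how far the chain from $s$ runs, of formulas describing the at most $d-1$ points reachable from $s$; each such description only needs nesting depth $\leq d-1$ at $s$ and depth $<d-1$ beyond the first step. Making this precise, I would show $\square\chi\equiv^{n}_{{\mathcal C}^{d}_{det}}\psi$ for an explicit $\psi$ whose outermost $\square$ is applied only to formulas of degree $<d-1$, so $\degre_{n}(\psi)\leq d-1<d$.

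A cleaner way to package the inductive step is to first prove, by induction on $k$, that for every formula $\theta$ with $\degre_{n}(\theta)=d-1$ one has ${\mathcal C}^{d}_{det}\models\square\theta\leftrightarrow(\lozenge\top\rightarrow\square\theta')$ where $\theta'$ is obtained from $\theta$ by replacing each maximal subformula of the form $\square^{d-1}\xi$ (necessarily with $\xi$ variable-free after pushing negations, and in fact every such subformula sits ``too deep'' to see any point in a $d$-bounded frame when prefixed by one more $\square$) by $\top$; on deterministic $d$-bounded frames such deepest boxes are vacuously true once prefixed by $\square$, because no point is $d$ steps away. This turns $\square\theta$ of degree $d$ into a formula of degree $d-1$. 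Iterating, or equivalently formalizing the single inductive step carefully, yields the claim.

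The main obstacle I anticipate is purely bookkeeping: making the ``replace the deepest box by $\top$'' operation precise and verifying its soundness over ${\mathcal C}^{d}_{det}$. One must be careful that the replacement is performed on subformulas in the scope of the outer $\square$ at exactly the critical depth, that the resulting $\psi$ genuinely lies in $\FOR_{n}$ (it does, since we only substitute $\top$, which is an abbreviation in our syntax), and that determinism is actually used — on a non-deterministic $d$-bounded frame the naive reduction fails, which is why the hypothesis ${\mathcal C}^{d}_{det}$ and not merely the $d$-bounded frames appears. Once the degree bound is in hand, the appeal to \cite[Proposition~$2.29$]{Blackburn:deRijke:Venema:2001} is immediate: $\equiv^{n}_{{\mathcal C}^{d}_{det}}$ has at most as many classes as there are non-equivalent $n$-formulas of degree $<d$, which is finite, so ${\mathcal C}^{d}_{det}$ is locally $n$-tabular.
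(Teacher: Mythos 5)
Your overall route is exactly the paper's: local $n$-tabularity of ${\mathcal C}^{d}_{det}$ is obtained by combining \cite[Proposition~2.29]{Blackburn:deRijke:Venema:2001} with the degree-reduction fact that every $\varphi{\in}\FOR_{n}$ is ${\mathcal C}^{d}_{det}$-equivalent to some $\psi$ with $\degre_{n}(\psi){<}d$; the paper asserts this fact without proof, and your added value is a sketch of it. However, the one place where you try to be precise does not work as stated. Replacing in $\theta$ ``each maximal subformula of the form $\square^{d-1}\xi$'' by $\top$ need not lower the degree: for $d{\geq}3$ take $\theta{=}\square(\square^{d-2}x_{1}\wedge x_{1})$, which has $\degre_{n}(\theta){=}d-1$ but contains no subformula of the form $\square^{d-1}\xi$ at all, so your $\theta'$ is $\theta$ itself and $\lozenge\top\rightarrow\square\theta'$ still has degree $d$. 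The correct operation is to replace, inside $\square\theta$, every box occurrence lying under $d-1$ boxes (equivalently, every box occurrence at maximal modal depth $d-2$ inside $\theta$) by $\top$: such an occurrence is only evaluated at points reachable in exactly $d-1$ steps from the point of evaluation, and in a $d$-bounded frame such points have no successors, so the boxed subformula is vacuously true there. Applying this replacement to all box occurrences at depth ${\geq}d-1$ in an arbitrary $\varphi$ yields the required $\psi$ with $\degre_{n}(\psi){<}d$, and then the appeal to \cite[Proposition~2.29]{Blackburn:deRijke:Venema:2001} finishes the argument exactly as you say.

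A second inaccuracy: your claim that determinism is essential and that ``the naive reduction fails'' on non-deterministic $d$-bounded frames is mistaken. The replacement argument above uses only $d$-boundedness, never the functionality of $R$; indeed $\K+\square^{d}\bot$ is locally tabular as well (the paper itself calls it locally tabular in its conclusion). Exploiting determinism, as in your unravelling of the pointed model into a chain of at most $d$ points, is harmless for the class ${\mathcal C}^{d}_{det}$ under consideration, but it is not what makes the degree reduction sound, and presenting it as indispensable misidentifies where the hypothesis of determinism is actually needed in the paper (namely for the unification-theoretic results, not for local tabularity).
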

For all $d{\geq}2$, let ${\mathcal{GC}}_{det}^{d}$ be the class consisting of all frames of the form $(W,R)$ where $W{=}\{s:\ s{\in}\N$ and $0{\leq}s{\leq}d^{\prime}\}$ and $R{=}\{(s,t):\ s,t{\in}W$ and $t{-}s{=}1\}$ for some $d^{\prime}{\in}\N$ such that $d^{\prime}{<}d$.
Notice that for all $d{\geq}2$, ${\mathcal{GC}}_{det}^{d}{\subseteq}{\mathcal C}^{d}_{det}$.
The next result shows that for all $d{\geq}2$, ${\mathcal C}^{d}_{det}$ and ${\mathcal{GC}}_{det}^{d}$ determine the same modal logic: $\Alt_{1}+\square^{d}\bot$.
Its proof is standard.
\begin{proposition}\label{proposition:C:determines:alt1}
Let $d{\geq}2$.
For all $\varphi{\in}\FOR_{n}$, $\varphi{\in}\Alt_{1}+\square^{d}\bot$ iff ${\mathcal C}^{d}_{det}{\models}\varphi$ iff ${\mathcal{GC}}_{det}^{d}{\models}\varphi$.
\end{proposition}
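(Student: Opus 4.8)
The plan is to establish the equivalence of the three conditions by a cycle of implications, $\varphi{\in}\Alt_{1}+\square^{d}\bot\Rightarrow{\mathcal C}^{d}_{det}{\models}\varphi\Rightarrow{\mathcal{GC}}_{det}^{d}{\models}\varphi\Rightarrow\varphi{\in}\Alt_{1}+\square^{d}\bot$, the middle implication being immediate from the inclusion ${\mathcal{GC}}_{det}^{d}{\subseteq}{\mathcal C}^{d}_{det}$ noted just above the statement.

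For the first implication (soundness) I would verify that every frame in ${\mathcal C}^{d}_{det}$ validates the axioms of $\Alt_{1}+\square^{d}\bot$ and that $n$-validity in a class of frames is preserved under modus ponens, necessitation and uniform substitution, so that a routine induction on derivations yields the claim. Concretely, given an $n$-model $(W,R,V)$ based on a deterministic frame and $s{\in}W$: if $s{\models_{n}}\lozenge\psi$ then $s$ has a successor, hence by determinism exactly one successor $t$, and $t{\models_{n}}\psi$, so every successor of $s$ satisfies $\psi$ and $s{\models_{n}}\square\psi$; thus $\lozenge\psi\rightarrow\square\psi$ is $n$-valid. And if the frame is $d$-bounded, no point admits an $R$-chain $s_{0}Rs_{1}R\cdots Rs_{d}$, so $s{\models_{n}}\square^{d}\bot$ holds vacuously for every $s$, i.e. $\square^{d}\bot$ is $n$-valid.

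For the last implication I would argue the contrapositive. Suppose $\varphi{\notin}\Alt_{1}+\square^{d}\bot$. By the canonical model theorem there is a point $s_{0}$ of the canonical model $M^{c}$ of $\Alt_{1}+\square^{d}\bot$ at which $\neg\varphi$ holds, and every theorem of the logic belongs to every point of $M^{c}$. Passing to the submodel $M^{\prime}$ generated by $s_{0}$ preserves the truth of $\neg\varphi$ at $s_{0}$. The crucial observation is that the frame of $M^{\prime}$ is isomorphic to a member of ${\mathcal{GC}}_{det}^{d}$: since each instance $\lozenge\psi\rightarrow\square\psi$ belongs to every point of $M^{\prime}$, its relation is a partial function --- were $t{\not=}u$ two successors of some point $s$, picking $\psi$ with $\psi{\in}t$ and $\neg\psi{\in}u$ would make $\lozenge\psi$ true and $\square\psi$ false at $s$, contradicting the truth at $s$ of that axiom instance; and since $\square^{d}\bot$ belongs to every point of $M^{\prime}$, no point admits an $R$-chain of length $d$. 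A one-point-generated deterministic frame with no such chain is necessarily a finite chain $s_{0}Rs_{1}R\cdots Rs_{d^{\prime}}$ of pairwise distinct points with $d^{\prime}{<}d$: determinism forces a (a priori possibly cyclic or infinite) chain, a repeated point would produce arbitrarily long chains, and a chain of length $d$ is excluded by $d$-boundedness. Sending $s_{i}$ to $i$ exhibits an isomorphism onto the frame in ${\mathcal{GC}}_{det}^{d}$ determined by $d^{\prime}$, so $\neg\varphi$ is satisfiable in an $n$-model based on a frame in ${\mathcal{GC}}_{det}^{d}$, i.e. ${\mathcal{GC}}_{det}^{d}{\not\models}\varphi$.

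The whole argument is routine: indeed both axioms are Sahlqvist, so one could instead quote the Sahlqvist completeness theorem to obtain completeness of $\Alt_{1}+\square^{d}\bot$ with respect to ${\mathcal C}^{d}_{det}$ directly and then descend to ${\mathcal{GC}}_{det}^{d}$ by the same generated-submodel argument. The only step needing a little care is that last descent, where $d$-boundedness is precisely what rules out cycles and pins the point-generated subframe down to one of the grid-like chains in ${\mathcal{GC}}_{det}^{d}$.
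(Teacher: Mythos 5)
Your proof is correct. The paper does not actually spell out a proof of this proposition (it only remarks that the proof is standard), and your argument~---~soundness by checking $\lozenge\psi\rightarrow\square\psi$ and $\square^{d}\bot$ on deterministic $d$-bounded frames, the trivial step from ${\mathcal C}^{d}_{det}$ to ${\mathcal{GC}}_{det}^{d}$ via the inclusion ${\mathcal{GC}}_{det}^{d}{\subseteq}{\mathcal C}^{d}_{det}$, and completeness via the canonical model together with the observation that a point-generated, deterministic, $d$-bounded frame is a finite chain isomorphic to a member of ${\mathcal{GC}}_{det}^{d}$~---~is exactly the standard soundness-and-completeness argument the authors have in mind.
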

\section{Unification}
Let $n{\geq}1$.
An {\em $n$-substitution}\/ is a couple $(k,\sigma)$ where $k{\geq}1$ and $\sigma$ is a homomorphism from $\FOR_{n}$ to $\FOR_{k}$, i.e. $\sigma$ is a function from $\FOR_{n}$ to $\FOR_{k}$ such that
\begin{itemize}
\item $\sigma(\bot){=}\bot$,
\item $\sigma(\neg\varphi){=}\neg\sigma(\varphi)$,
\item $\sigma(\varphi\vee\psi){=}\sigma(\varphi)\vee\sigma(\psi)$,
\item $\sigma(\square\varphi){=}\square\sigma(\varphi)$,
\end{itemize}
where $\varphi,\psi$ range over $\FOR_{n}$.
Let $\SUB_{n}$ be the set of all $n$-substitutions.
Let ${\mathcal C}$ be a class of frames.
The equivalence relation $\simeq_{{\mathcal C}}^{n}$ on $\SUB_{n}$ is defined by
\begin{itemize}
\item $(k,\sigma){\simeq_{{\mathcal C}}^{n}}(l,\tau)$ iff for all $i{\in}\{1,\ldots,n\}$, ${\mathcal C}{\models}\sigma(x_{i})\leftrightarrow\tau(x_{i})$,
\end{itemize}
where $(k,\sigma),(l,\tau)$ range over $\SUB_{n}$.
The preorder $\preccurlyeq_{{\mathcal C}}^{n}$ on $\SUB_{n}$ is defined by
\begin{itemize}
\item $(k,\sigma){\preccurlyeq_{{\mathcal C}}^{n}}(l,\tau)$ iff there exists a $k$-substitution $(m,\upsilon)$ such that for all $i{\in}\{1,\ldots,n\}$, ${\mathcal C}{\models}\upsilon(\sigma(x_{i}))\leftrightarrow\tau(x_{i})$,
\end{itemize}
where $(k,\sigma),(l,\tau)$ range over $\SUB_{n}$.
Obviously, $\simeq_{{\mathcal C}}^{n}$ is contained in $\preccurlyeq_{{\mathcal C}}^{n}$.
A {\em $({\mathcal C},n)$-unifier}\/ of $\varphi{\in}\FOR_{n}$ is an $n$-substitution $(k,\sigma)$ such that ${\mathcal C}{\models}\sigma(\varphi)$.
We shall say that $\varphi{\in}\FOR_{n}$ is {\em $({\mathcal C},n)$-unifiable}\/ if there exists a $({\mathcal C},n)$-unifier of $\varphi$.
We shall say that a set $\Sigma$ of $({\mathcal C},n)$-unifiers of a $({\mathcal C},n)$-unifiable $\varphi{\in}\FOR_{n}$ is {\em $({\mathcal C},n)$-complete}\/ if for all $({\mathcal C},n)$-unifiers $(k,\sigma)$ of $\varphi$, there exists $(l,\tau){\in}\Sigma$ such that $(l,\tau){\preccurlyeq_{{\mathcal C}}^{n}}(k,\sigma)$.
The next result is standard.
\begin{proposition}\label{mcs:have:the:same:cardinalities}
Let $\varphi{\in}\FOR_{n}$.
If $\varphi$ is $({\mathcal C},n)$-unifiable then for all minimal $({\mathcal C},n)$-complete sets $\Sigma,\Delta$ of $({\mathcal C},n)$-unifiers of $\varphi$, ${\parallel}\Sigma{\parallel}{=}{\parallel}\Delta{\parallel}$.
\end{proposition}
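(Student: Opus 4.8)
The statement is a general order-theoretic fact about the preorder $\preccurlyeq_{{\mathcal C}}^{n}$ on $\SUB_{n}$, and the plan is to prove it exactly as the corresponding fact is established in the general theory of unification. Fix a $({\mathcal C},n)$-unifiable $\varphi{\in}\FOR_{n}$ and two minimal $({\mathcal C},n)$-complete sets $\Sigma,\Delta$ of $({\mathcal C},n)$-unifiers of $\varphi$ (``minimal'' being understood with respect to inclusion). First I would record that $\preccurlyeq_{{\mathcal C}}^{n}$ is transitive, which is immediate from its definition by composing the witnessing substitutions. Then, using completeness of $\Delta$, I would pick for each $(k,\sigma){\in}\Sigma$ some $f(k,\sigma){\in}\Delta$ with $f(k,\sigma){\preccurlyeq_{{\mathcal C}}^{n}}(k,\sigma)$, and symmetrically, using completeness of $\Sigma$, pick for each $(l,\tau){\in}\Delta$ some $g(l,\tau){\in}\Sigma$ with $g(l,\tau){\preccurlyeq_{{\mathcal C}}^{n}}(l,\tau)$. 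The aim is to show $g{\circ}f{=}\mathrm{id}_{\Sigma}$ and $f{\circ}g{=}\mathrm{id}_{\Delta}$, so that $f$ is a bijection between $\Sigma$ and $\Delta$.

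The crucial lemma, which I expect to be the main obstacle, is an antichain property of minimal complete sets: if $\Xi$ is a minimal $({\mathcal C},n)$-complete set of unifiers of $\varphi$ and $(k,\sigma),(k^{\prime},\sigma^{\prime}){\in}\Xi$ satisfy $(k^{\prime},\sigma^{\prime}){\preccurlyeq_{{\mathcal C}}^{n}}(k,\sigma)$, then $(k,\sigma){=}(k^{\prime},\sigma^{\prime})$. I would prove this by contradiction: supposing $(k,\sigma){\not=}(k^{\prime},\sigma^{\prime})$, I would check that $\Xi{\setminus}\{(k,\sigma)\}$ is still $({\mathcal C},n)$-complete, contradicting the minimality of $\Xi$. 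Indeed, given any $({\mathcal C},n)$-unifier $(m,\rho)$ of $\varphi$, completeness of $\Xi$ supplies some element of $\Xi$ that is $\preccurlyeq_{{\mathcal C}}^{n}$-below $(m,\rho)$; if this element is distinct from $(k,\sigma)$ we are done, and otherwise transitivity of $\preccurlyeq_{{\mathcal C}}^{n}$ yields $(k^{\prime},\sigma^{\prime}){\preccurlyeq_{{\mathcal C}}^{n}}(m,\rho)$ with $(k^{\prime},\sigma^{\prime}){\in}\Xi{\setminus}\{(k,\sigma)\}$.

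Granting this lemma, the conclusion follows quickly. For $(k,\sigma){\in}\Sigma$ we have $g(f(k,\sigma)){\preccurlyeq_{{\mathcal C}}^{n}}f(k,\sigma){\preccurlyeq_{{\mathcal C}}^{n}}(k,\sigma)$, hence $g(f(k,\sigma)){\preccurlyeq_{{\mathcal C}}^{n}}(k,\sigma)$ by transitivity; since $g(f(k,\sigma))$ and $(k,\sigma)$ both belong to the minimal complete set $\Sigma$, the antichain property forces $g(f(k,\sigma)){=}(k,\sigma)$, so $g{\circ}f{=}\mathrm{id}_{\Sigma}$ and $f$ is injective, giving ${\parallel}\Sigma{\parallel}{\leq}{\parallel}\Delta{\parallel}$. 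The symmetric computation gives $f{\circ}g{=}\mathrm{id}_{\Delta}$ and hence ${\parallel}\Delta{\parallel}{\leq}{\parallel}\Sigma{\parallel}$, so ${\parallel}\Sigma{\parallel}{=}{\parallel}\Delta{\parallel}$. The one further point worth flagging is that the selection functions $f$ and $g$ rest on an application of the axiom of choice when $\Sigma$ or $\Delta$ is infinite, which is standard and harmless.
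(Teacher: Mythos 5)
Your proposal is correct: the paper states this proposition without proof, labelling it ``standard'', and your argument~---~first upgrading inclusion-minimality to the antichain property via transitivity of $\preccurlyeq_{{\mathcal C}}^{n}$, then composing the two choice functions $f$ and $g$ and using that antichain property to get $g{\circ}f{=}\mathrm{id}_{\Sigma}$ and $f{\circ}g{=}\mathrm{id}_{\Delta}$, hence a bijection between $\Sigma$ and $\Delta$~---~is exactly the standard unification-theoretic argument the authors are invoking. The only delicate points, which you flag appropriately, are the transitivity of $\preccurlyeq_{{\mathcal C}}^{n}$ (composition of the witnessing substitutions, using closure of frame validity under substitution) and the use of choice for infinite $\Sigma$ or $\Delta$; both are handled correctly.
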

An important question is the following: when $\varphi{\in}\FOR_{n}$ is $({\mathcal C},n)$-unifiable, is there a minimal $({\mathcal C},n)$-complete set of $({\mathcal C},n)$-unifiers of $\varphi$?
When the answer is ``yes'', how large is this set?
For all $({\mathcal C},n)$-unifiable $\varphi{\in}\FOR_{n}$, we shall say that: $\varphi$ is {\em $({\mathcal C},n)$-nullary}\/ if there exists no minimal $({\mathcal C},n)$-complete set of $({\mathcal C},n)$-unifiers of $\varphi$; $\varphi$ is {\em $({\mathcal C},n)$-infinitary}\/ if there exists a minimal $({\mathcal C},n)$-complete set of $({\mathcal C},n)$-unifiers of $\varphi$ but there exists no finite one; $\varphi$ is {\em $({\mathcal C},n)$-finitary}\/ if there exists a finite minimal $({\mathcal C},n)$-complete set of $({\mathcal C},n)$-unifiers of $\varphi$ but there exists no with cardinality $1$; $\varphi$ is {\em $({\mathcal C},n)$-unitary}\/ if there exists a minimal $({\mathcal C},n)$-complete set of $({\mathcal C},n)$-unifiers of $\varphi$ with cardinality $1$.
%
%
%
%
%
%
%
%
%
%
%
%
%
%
Obviously, the types ``nullary'', ``infinitary'', ``finitary'' and ``unitary'' constitute a set of jointly exhaustive and pairwise distinct situations for each unifiable $n$-formula.
We shall say that: ${\mathcal C}$ is {\em $n$-nullary}\/ if there exists a $({\mathcal C},n)$-nullary $({\mathcal C},n)$-unifiable $n$-formula; ${\mathcal C}$ is {\em $n$-infinitary}\/ if every $({\mathcal C},n)$-unifiable $n$-formula possesses a minimal $({\mathcal C},n)$-complete set of $({\mathcal C},n)$-unifiers and there exists a $({\mathcal C},n)$-infinitary $({\mathcal C},n)$-unifiable $n$-formula; ${\mathcal C}$ is {\em $n$-finitary}\/ if every $({\mathcal C},n)$-unifiable $n$-formula possesses a finite minimal $({\mathcal C},n)$-complete set of $({\mathcal C},n)$-unifiers and there exists a $({\mathcal C},n)$-finitary $({\mathcal C},n)$-unifiable $n$-formula; ${\mathcal C}$ is {\em $n$-unitary}\/ if every $({\mathcal C},n)$-unifiable $n$-formula possesses a minimal $({\mathcal C},n)$-complete set of $({\mathcal C},n)$-unifiers with cardinality $1$.
%
%
%
%
%
%
%
%
%
%
%
%
%
%
Obviously, the types ``nullary'', ``infinitary'', ``finitary'' and ``unitary'' constitute a set of jointly exhaustive and pairwise distinct situations for each class of frames.
For all $({\mathcal C},n)$-unifiable $\varphi{\in}\FOR_{n}$, we shall say that
%
%
%
%
$\varphi$ is {\em $({\mathcal C},n)$-filtering}\/ if for all $({\mathcal C},n)$-unifiers $(k,\sigma),(l,\tau)$ of $\varphi$, there exists a $({\mathcal C},n)$-unifier $(m,\upsilon)$ of $\varphi$ such that $(m,\upsilon){\preccurlyeq_{{\mathcal C}}^{n}}(k,\sigma)$ and $(m,\upsilon){\preccurlyeq_{{\mathcal C}}^{n}}(l,\tau)$.
%
%
%
%
See~\cite{Ghilardi:Sacchetti:2004,Jerabek:2013} for further discussion about filtering unification.
The next result is standard.
\begin{proposition}\label{lemma:filtering:implies:unitary:or:nullary}
Let $\varphi{\in}\FOR_{n}$ be $({\mathcal C},n)$-unifiable.
If $\varphi$ is $({\mathcal C},n)$-filtering then either $\varphi$ is $({\mathcal C},n)$-nullary, or $\varphi$ is $({\mathcal C},n)$-unitary.
\end{proposition}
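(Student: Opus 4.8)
The plan is to exploit the exhaustive, pairwise distinct character of the four types and argue the dichotomy directly: it suffices to show that if $\varphi$ is $({\mathcal C},n)$-filtering and \emph{not} $({\mathcal C},n)$-nullary, then $\varphi$ is $({\mathcal C},n)$-unitary. So I would fix, using the non-nullarity, a minimal $({\mathcal C},n)$-complete set $\Sigma$ of $({\mathcal C},n)$-unifiers of $\varphi$. Since $\varphi$ is $({\mathcal C},n)$-unifiable it has at least one unifier, and completeness of $\Sigma$ forces some member of $\Sigma$ to lie $\preccurlyeq_{{\mathcal C}}^{n}$-below it, so $\Sigma$ is nonempty. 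The whole task then reduces to proving ${\parallel}\Sigma{\parallel}{=}1$.

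First I would record the antichain property enjoyed by any minimal complete set: no two distinct members of $\Sigma$ are $\preccurlyeq_{{\mathcal C}}^{n}$-comparable. Indeed, suppose $(k,\sigma),(l,\tau){\in}\Sigma$ are distinct with $(k,\sigma){\preccurlyeq_{{\mathcal C}}^{n}}(l,\tau)$. Then $\Sigma{\setminus}\{(l,\tau)\}$ is still complete: for an arbitrary unifier $(p,\rho)$, completeness of $\Sigma$ yields some $(q,\chi){\in}\Sigma$ with $(q,\chi){\preccurlyeq_{{\mathcal C}}^{n}}(p,\rho)$, and in the only problematic case $(q,\chi){=}(l,\tau)$ transitivity of the preorder gives $(k,\sigma){\preccurlyeq_{{\mathcal C}}^{n}}(p,\rho)$ with $(k,\sigma)$ surviving the deletion. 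This contradicts the minimality of $\Sigma$, so such a comparable pair cannot exist.

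Next I would use the filtering hypothesis to collapse $\Sigma$ to a singleton. Assume toward a contradiction that ${\parallel}\Sigma{\parallel}{\geq}2$ and pick distinct $(k,\sigma),(l,\tau){\in}\Sigma$. Filtering supplies a $({\mathcal C},n)$-unifier $(m,\upsilon)$ of $\varphi$ with $(m,\upsilon){\preccurlyeq_{{\mathcal C}}^{n}}(k,\sigma)$ and $(m,\upsilon){\preccurlyeq_{{\mathcal C}}^{n}}(l,\tau)$, and completeness of $\Sigma$ then supplies $(p,\rho){\in}\Sigma$ with $(p,\rho){\preccurlyeq_{{\mathcal C}}^{n}}(m,\upsilon)$. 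By transitivity, $(p,\rho){\preccurlyeq_{{\mathcal C}}^{n}}(k,\sigma)$ and $(p,\rho){\preccurlyeq_{{\mathcal C}}^{n}}(l,\tau)$, with all three substitutions in $\Sigma$. The antichain property now forces $(p,\rho){=}(k,\sigma)$ and $(p,\rho){=}(l,\tau)$, hence $(k,\sigma){=}(l,\tau)$, contradicting their distinctness. Therefore ${\parallel}\Sigma{\parallel}{=}1$, so $\Sigma$ witnesses that $\varphi$ is $({\mathcal C},n)$-unitary.

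I do not expect a genuine obstacle, since this is a soft and general fact about filtered preorders rather than anything special to $\Alt_{1}+\square^{d}\bot$. The only points requiring care are keeping straight the direction of $\preccurlyeq_{{\mathcal C}}^{n}$ (recall that $(m,\upsilon){\preccurlyeq_{{\mathcal C}}^{n}}(k,\sigma)$ reads ``$(m,\upsilon)$ is more general than $(k,\sigma)$'', so the common $\preccurlyeq_{{\mathcal C}}^{n}$-lower bound produced by filtering is a common more general unifier) and the derivation of the antichain property from minimality; both are settled by transitivity of $\preccurlyeq_{{\mathcal C}}^{n}$ together with the minimality of $\Sigma$, without appeal to any structural feature of the logic.
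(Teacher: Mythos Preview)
Your argument is correct and is precisely the standard one: derive the antichain property of a minimal complete set from minimality plus transitivity of $\preccurlyeq_{{\mathcal C}}^{n}$, then use filtering together with completeness to force any two members of $\Sigma$ to coincide. The paper does not give its own proof of this proposition (it simply calls the result ``standard''), so there is nothing further to compare.
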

For all $({\mathcal C},n)$-unifiable $\varphi{\in}\FOR_{n}$ and for all $\pi{\geq}1$, we shall say that
%
%
%
%
$\varphi$ is {\em $({\mathcal C},n)$-$\pi$-reasonable} if for all $({\mathcal C},n)$-unifiers $(k,\sigma)$ of $\varphi$, if $k{>}\pi$ then there exists a $({\mathcal C},n)$-unifier $(l,\tau)$ of $\varphi$ such that $(l,\tau){\preccurlyeq_{{\mathcal C}}^{n}}(k,\sigma)$ and $l{\leq}\pi$.
%
%
%
%
The next result is new.
\begin{proposition}\label{lemma:locally:tabular:implies:unitary:or:finitary}
Let $\varphi{\in}\FOR_{n}$ be $({\mathcal C},n)$-unifiable and $\pi{\geq}1$.
If ${\mathcal C}$ is locally $\pi$-tabular and $\varphi$ is $({\mathcal C},n)$-$\pi$-reasonable then either $\varphi$ is $({\mathcal C},n)$-finitary, or $\varphi$ is $({\mathcal C},n)$-unitary.
\end{proposition}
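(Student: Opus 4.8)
The plan is to combine the two hypotheses as follows: $({\mathcal C},n)$-$\pi$-reasonableness lets us restrict attention to $({\mathcal C},n)$-unifiers $(k,\sigma)$ of $\varphi$ with $k{\leq}\pi$, and local $\pi$-tabularity forces there to be only finitely many such unifiers up to $\simeq_{{\mathcal C}}^{n}$; from the resulting finite $({\mathcal C},n)$-complete set one extracts a minimal $({\mathcal C},n)$-complete set in the usual way. First I would put $\Sigma{=}\{(k,\sigma):\ (k,\sigma)$ is a $({\mathcal C},n)$-unifier of $\varphi$ with $k{\leq}\pi\}$ and check $\Sigma$ is $({\mathcal C},n)$-complete: for an arbitrary $({\mathcal C},n)$-unifier $(k,\sigma)$ of $\varphi$, either $k{\leq}\pi$, so $(k,\sigma){\in}\Sigma$ and $(k,\sigma){\preccurlyeq_{{\mathcal C}}^{n}}(k,\sigma)$ by reflexivity of $\preccurlyeq_{{\mathcal C}}^{n}$, or $k{>}\pi$, and then $({\mathcal C},n)$-$\pi$-reasonableness directly furnishes $(l,\tau){\in}\Sigma$ with $(l,\tau){\preccurlyeq_{{\mathcal C}}^{n}}(k,\sigma)$. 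Note $\Sigma{\not=}\emptyset$ since $\varphi$ is $({\mathcal C},n)$-unifiable.

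Next I would bound the number of $\simeq_{{\mathcal C}}^{n}$-classes meeting $\Sigma$. Fix $k{\in}\{1,\ldots,\pi\}$. Each $n$-substitution $(k,\sigma)$ is freely determined by the tuple $(\sigma(x_{1}),\ldots,\sigma(x_{n})){\in}\FOR_{k}^{n}$, and two such substitutions are $\simeq_{{\mathcal C}}^{n}$-equivalent exactly when the corresponding tuples are componentwise $\equiv_{{\mathcal C}}^{k}$-equivalent; since $k{\leq}\pi$ we have $\FOR_{k}{\subseteq}\FOR_{\pi}$ and $\equiv_{{\mathcal C}}^{k}$ is the restriction of $\equiv_{{\mathcal C}}^{\pi}$ to $\FOR_{k}$ (validity over ${\mathcal C}$ of a biconditional between $k$-formulas does not depend on whether it is read inside $\FOR_{k}$ or inside $\FOR_{\pi}$), so the number of $\simeq_{{\mathcal C}}^{n}$-classes of such $(k,\sigma)$ is at most ${\parallel}\FOR_{\pi}/{\equiv_{{\mathcal C}}^{\pi}}{\parallel}^{n}$, which is finite by local $\pi$-tabularity. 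Summing over the finitely many values of $k$, only finitely many $\simeq_{{\mathcal C}}^{n}$-classes meet $\Sigma$; picking, in each such class, a member that lies in $\Sigma$, I obtain a finite set $\Sigma_{0}{\subseteq}\Sigma$. Then $\Sigma_{0}$ is again $({\mathcal C},n)$-complete: given a $({\mathcal C},n)$-unifier $(k,\sigma)$ of $\varphi$, completeness of $\Sigma$ gives $(l^{\prime},\tau^{\prime}){\in}\Sigma$ with $(l^{\prime},\tau^{\prime}){\preccurlyeq_{{\mathcal C}}^{n}}(k,\sigma)$; the element $(l^{\prime},\tau^{\prime})$ is $\simeq_{{\mathcal C}}^{n}$-equivalent to some $(l,\tau){\in}\Sigma_{0}$, hence $(l,\tau){\preccurlyeq_{{\mathcal C}}^{n}}(l^{\prime},\tau^{\prime})$ because $\simeq_{{\mathcal C}}^{n}{\subseteq}\preccurlyeq_{{\mathcal C}}^{n}$, and transitivity of $\preccurlyeq_{{\mathcal C}}^{n}$ yields $(l,\tau){\preccurlyeq_{{\mathcal C}}^{n}}(k,\sigma)$.

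Finally, starting from $\Sigma_{0}$ I would repeatedly discard an element whenever the result is still $({\mathcal C},n)$-complete; since $\Sigma_{0}$ is finite this stops after finitely many steps at a $({\mathcal C},n)$-complete set $\Sigma_{1}$ no single element of which can be removed without destroying completeness. As a superset of a $({\mathcal C},n)$-complete set is $({\mathcal C},n)$-complete, a set from which no single element is removable has no proper $({\mathcal C},n)$-complete subset, so $\Sigma_{1}$ is a minimal $({\mathcal C},n)$-complete set; and $\Sigma_{1}{\not=}\emptyset$ because $\varphi$ is $({\mathcal C},n)$-unifiable, so $\emptyset$ is not $({\mathcal C},n)$-complete. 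If ${\parallel}\Sigma_{1}{\parallel}{=}1$ then $\varphi$ is $({\mathcal C},n)$-unitary; otherwise Proposition~\ref{mcs:have:the:same:cardinalities} shows every minimal $({\mathcal C},n)$-complete set of unifiers of $\varphi$ has cardinality ${\parallel}\Sigma_{1}{\parallel}{>}1$, so there is a finite one but none of cardinality $1$, i.e.\ $\varphi$ is $({\mathcal C},n)$-finitary. The only step where any care is needed is the finiteness count in the second paragraph: one must check that a $k$-substitution with $k{\leq}\pi$ has all of $\sigma(x_{1}),\ldots,\sigma(x_{n})$ in $\FOR_{k}{\subseteq}\FOR_{\pi}$ and that $\equiv_{{\mathcal C}}^{k}$ agrees with $\equiv_{{\mathcal C}}^{\pi}$ on $\FOR_{k}$; everything else is routine bookkeeping about preorders.
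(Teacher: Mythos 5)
Your proposal is correct and follows essentially the same route as the paper's proof: use $({\mathcal C},n)$-$\pi$-reasonableness to cut the complete set of all unifiers down to those $(k,\sigma)$ with $k{\leq}\pi$, use local $\pi$-tabularity to keep only finitely many representatives modulo $\simeq_{{\mathcal C}}^{n}$, and conclude finitary or unitary. You merely spell out two details the paper leaves implicit (the counting of $\simeq_{{\mathcal C}}^{n}$-classes via tuples of $\equiv_{{\mathcal C}}^{\pi}$-classes, and the extraction of a minimal complete subset from the finite complete set), and both are handled correctly.
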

\begin{proof}
Suppose ${\mathcal C}$ is locally $\pi$-tabular and $\varphi$ is $({\mathcal C},n)$-$\pi$-reasonable.
Let $\Sigma$ be the set of all $({\mathcal C},n)$-unifiers of $\varphi$.
Notice that $\Sigma$ is $({\mathcal C},n)$-complete.
Let $\Sigma^{\prime}$ be the set of $n$-substitutions obtained from $\Sigma$ by keeping only the $n$-substitutions $(k,\sigma)$ such that $k{\leq}\pi$.
Since $\Sigma$ is $({\mathcal C},n)$-complete and $\varphi$ is $({\mathcal C},n)$-$\pi$-reasonable, therefore $\Sigma^{\prime}$ is $({\mathcal C},n)$-complete.
Let $\Sigma^{\prime\prime}$ be the set of $n$-substitutions obtained from $\Sigma^{\prime}$ by keeping only one representative of each equivalence class modulo $\simeq_{{\mathcal C}}^{n}$.
Since $\Sigma^{\prime}$ is $({\mathcal C},n)$-complete, therefore $\Sigma^{\prime\prime}$ is $({\mathcal C},n)$-complete.
Moreover, since ${\mathcal C}$ is locally $\pi$-tabular, therefore $\Sigma^{\prime\prime}$ is finite.
Hence, either $\varphi$ is $({\mathcal C},n)$-finitary, or $\varphi$ is $({\mathcal C},n)$-unitary.
\end{proof}
\section{About bounded deterministic frames}\label{section:unification:type:alt1:boxdbot}
Let $n{\geq}1$.
Let $d{\geq}2$.
Combined with Proposition~\ref{lemma:filtering:implies:unitary:or:nullary}, the next result implies that in $\Alt_{1}+\square^{d}\bot$, unifiable $n$-formulas are either nullary, or unitary.
\begin{proposition}\label{ALT1:bounded:is:filtering}
For all $\varphi{\in}\FOR_{n}$, if $\varphi$ is $({\mathcal {GC}}^{d}_{det},n)$-unifiable then $\varphi$ is $({\mathcal {GC}}^{d}_{det},n)$-filtering.
\end{proposition}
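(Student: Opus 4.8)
The plan is to prove the filtering property directly by constructing, from two unifiers $(k,\sigma)$ and $(l,\tau)$ of $\varphi$, a common lower bound $(m,\upsilon)$ in the preorder $\preccurlyeq_{{\mathcal{GC}}^{d}_{det}}^{n}$. The crucial structural fact about the frames in ${\mathcal{GC}}^{d}_{det}$ is that each is a finite initial segment of $\N$ with the successor relation, hence linear, deterministic, and of depth less than $d$; a model on such a frame is essentially a finite word of length at most $d$ over the alphabet $\BIT_{n}$ (the valuation at each point), and whether $\varphi$ holds at the root depends only on that word. So $({\mathcal{GC}}^{d}_{det},n)$-unifiability of $\varphi$ amounts to: there is a world of bit-tuples in the substituted variables at which $\varphi$ is satisfied along every such chain of length $\leq d$.

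First I would set up the combined substitution. Given $(k,\sigma)$ and $(l,\tau)$, take $m$ large enough to accommodate $k+l$ fresh variables (say the disjoint union of the variable sets used by $\sigma$ and $\tau$), and define $\upsilon$ on each $x_{i}$ by a ``case split'' formula: introduce a fresh propositional parameter — realizable here as a Boolean combination of the new variables, e.g. a variable $p$ — and set $\upsilon(x_{i})$ to be $(p\wedge\sigma(x_{i})^{\ast})\vee(\neg p\wedge\tau(x_{i})^{\ast})$, where $\sigma(x_{i})^{\ast}$, $\tau(x_{i})^{\ast}$ are the images of $\sigma(x_{i})$, $\tau(x_{i})$ under the renamings into the fresh variable blocks. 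The idea is that $\upsilon$ ``looks like'' $\sigma$ on the part of any ${\mathcal{GC}}^{d}_{det}$-model where $p$ is true and ``looks like'' $\tau$ where $p$ is false. One must check $\upsilon$ is well-defined as an $m$-substitution and that the two obvious projections witness $(m,\upsilon)\preccurlyeq_{{\mathcal{GC}}^{d}_{det}}^{n}(k,\sigma)$ and $(m,\upsilon)\preccurlyeq_{{\mathcal{GC}}^{d}_{det}}^{n}(l,\tau)$: the witnessing substitution for the first inequality sends $p$ to $\top$ and the $\sigma$-block variables back to themselves, killing the $\tau$-disjunct, and dually for the second.

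The real work is showing $(m,\upsilon)$ is again a unifier, i.e. ${\mathcal{GC}}^{d}_{det}\models\upsilon(\varphi)$. Here I would exploit that the frames are \emph{linear and successor-generated}: along a single chain $0,1,\ldots,d'$ with $d'<d$, the valuation of $p$ picks out, but because the frame is a chain with no branching, once we are at a world the future is completely determined. The key lemma to isolate is a substitution/locality lemma: for any ${\mathcal{GC}}^{d}_{det}$-model $\moddel$ and world $s$, the truth value of $\upsilon(\varphi)$ at $s$ equals the truth value of $\varphi$ at $s$ in the ``$p$-dependent'' model where each $x_{i}$ is interpreted according to $\sigma$ or $\tau$ following the value of $p$ along the chain. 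The delicate point — and I expect this to be the main obstacle — is that $p$ may flip its value from one world to the next along the chain, so $\upsilon(\varphi)$ at the root is \emph{not} simply ``$\sigma(\varphi)$ or $\tau(\varphi)$ at the root''; one needs that $\varphi$, evaluated along a chain whose valuation is a patchwork of a $\sigma$-unifying word and a $\tau$-unifying word, still comes out true. This is where determinism and $d$-boundedness are indispensable: every chain has length $\leq d$, it is non-branching, and — by Proposition~\ref{proposition:C:determines:alt1} together with Proposition~\ref{proposition:locally:tabular:classes:of:frames} — truth of $\varphi$ depends only on a bounded initial segment, so one can argue by a finite case analysis on where (if anywhere) $p$ flips. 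I would handle this by taking, as the fresh parameter, not a bare variable but a formula forcing $p$ to be ``downward closed along $R$'' (true at a world only if true at its $R$-successor), so that along any chain $p$ is true on an initial segment and false afterwards; then a chain witnessing $\upsilon(\varphi)$ decomposes cleanly into a $\sigma$-controlled prefix and a $\tau$-controlled suffix, and unifier-hood of $(m,\upsilon)$ reduces to the fact that both $\sigma(\varphi)$ and $\tau(\varphi)$ are valid on \emph{all} of ${\mathcal{GC}}^{d}_{det}$, in particular on the shorter chains that appear as such prefixes and suffixes. Once that locality-plus-decomposition argument is in place, the three required facts (well-definedness, the two $\preccurlyeq$-comparisons, and unifier-hood) assemble immediately into the filtering property.
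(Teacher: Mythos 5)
Your overall strategy is the paper's: build a common refinement $\upsilon(x_{i})=(\pi\wedge\sigma(x_{i}))\vee(\neg\pi\wedge\tau(x_{i}))$ with a fresh variable, and recover $\sigma$ and $\tau$ by the projections sending the fresh material to $\top$ and $\bot$. But the step you yourself flag as delicate is where your proposal breaks, and your proposed fix does not repair it. Making the switch $\pi$ merely \emph{monotone} along each chain (true on a final or initial segment) is not enough: the derived valuation of the $x_{i}$ is then a genuine patchwork, $\sigma$-controlled on one segment and $\tau$-controlled on the other, and truth of $\varphi$ at a world in the first segment depends on the derived valuation at worlds in the second segment. Validity of $\sigma(\varphi)$ and $\tau(\varphi)$ on the shorter chains appearing as prefix and suffix therefore does not yield truth of $\varphi$ at the root under the glued valuation. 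Concretely, take $\varphi=(x_{1}\wedge\square x_{1})\vee(\neg x_{1}\wedge\square\neg x_{1})\vee\square\bot$, which is unified by $\sigma(x_{1})=\top$ and by $\tau(x_{1})=\bot$; with any monotone but non-constant switch (e.g.\ one that is true exactly on the worlds from which a later world satisfies the fresh variable), on a two-world chain where the switch flips between world $0$ and world $1$ the derived value of $x_{1}$ flips too, and $\upsilon(\varphi)$ fails at world $0$. So $(m,\upsilon)$ need not be a unifier. (A side issue: you cannot ``force'' a fresh \emph{variable} $p$ to be persistent along $R$~---~a unifier must make $\upsilon(\varphi)$ valid under all valuations of $p$~---~so the switch has to be a formula whose behaviour along chains is automatic.)

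The missing idea is that the switch must have the \emph{same} truth value at every world of every chain, so that each chain-model is uniformly $\sigma$-like or uniformly $\tau$-like and no patchwork ever arises. The paper achieves this by taking as switch $\bigvee\{\lozenge^{l}(x_{m}\wedge\square\bot):\ 0{\leq}l{<}d\}$, i.e.\ ``the fresh variable holds at the dead-end world'': since all worlds of a chain see the same dead end, the switch and its negation are both invariant along the chain. This invariance in both directions is exactly what makes the induction on $\psi$ go through in the $\square$ case, giving ${\mathcal{GC}}^{d}_{det}\models\bigvee\{\lozenge^{l}(x_{m}\wedge\square\bot):\ 0{\leq}l{<}d\}\rightarrow(\mu(\psi)\leftrightarrow\sigma(\psi))$ and dually with $\bigwedge\{\square^{l}(\neg x_{m}\vee\lozenge\top):\ 0{\leq}l{<}d\}$ and $\tau$, whence $\mu(\varphi)$ is valid because $\sigma(\varphi)$ and $\tau(\varphi)$ are. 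Your projection argument and the two $\preccurlyeq$-comparisons are fine and carry over verbatim once the switch is replaced by such a chain-invariant formula; with a merely monotone switch the filtering claim itself would fail on the example above.
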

\begin{proof}
Let $\varphi{\in}\FOR_{n}$.
Suppose $\varphi$ is $({\mathcal {GC}}^{d}_{det},n)$-unifiable.
Let $(k,\sigma),(l,\tau)$ be $({\mathcal {GC}}^{d}_{det},n)$-unifiers of $\varphi$.
Let $m{=}max\{k,l\}{+}1$.
Let $(m,\mu)$ be the $n$-substitution defined by
\begin{itemize}
\item $\mu(x_{i}){=}(\bigvee\{\lozenge^{l}(x_{m}\wedge\square\bot):\ 0{\leq}l{<}d\}\wedge\sigma(x_{i}))\vee(\bigwedge\{\square^{l}(\neg x_{m}\vee\lozenge\top):\ 0{\leq}l{<}d\}\wedge\tau(x_{i}))$,
\end{itemize}
where $i$ ranges over $\{1,\ldots,n\}$.
Let $(m,\lambda_{\top})$ and $(m,\lambda_{\bot})$ be the $m$-substitutions defined by
\begin{itemize}
\item if $i{<}m$ then $\lambda_{\top}(x_{i}){=}x_{i}$ else $\lambda_{\top}(x_{i}){=}\top$,
\item if $i{<}m$ then $\lambda_{\bot}(x_{i}){=}x_{i}$ else $\lambda_{\bot}(x_{i}){=}\bot$,
\end{itemize}
where $i$ ranges over $\{1,\ldots,m\}$.
Notice that for all $i{\in}\{1,\ldots,n\}$, ${\mathcal {GC}}^{d}_{det}\models\lambda_{\top}(\mu(x_{i}))\leftrightarrow\sigma(x_{i})$ and ${\mathcal {GC}}^{d}_{det}\models\lambda_{\bot}(\mu(x_{i}))\leftrightarrow\tau(x_{i})$.
Hence, $(m,\mu){\preccurlyeq_{{\mathcal {GC}}^{d}_{det}}^{n}}(k,\sigma)$ and $(m,\mu){\preccurlyeq_{{\mathcal {GC}}^{d}_{det}}^{n}}(l,\tau)$.
Moreover, by induction on $\psi{\in}\FOR_{n}$ the reader may show that ${\mathcal {GC}}^{d}_{det}\models\bigvee\{\lozenge^{l}(x_{m}\wedge\square\bot):\ 0{\leq}l{<}d\}\rightarrow (\mu(\psi)\leftrightarrow\sigma(\psi))$ and ${\mathcal {GC}}^{d}_{det}\models\bigwedge\{\square^{l}(\neg x_{m}\vee\lozenge\top):\ 0{\leq}l{<}d\}\rightarrow (\mu(\psi)\leftrightarrow\tau(\psi))$.
Thus, ${\mathcal {GC}}^{d}_{det}\models\bigvee\{\lozenge^{l}(x_{m}\wedge\square\bot):\ 0{\leq}l{<}d\}\rightarrow\mu(\varphi)$ and ${\mathcal {GC}}^{d}_{det}\models\bigwedge\{\square^{l}(\neg x_{m}\vee\lozenge\top):\ 0{\leq}l{<}d\}\rightarrow\mu(\varphi)$.
Consequently, ${\mathcal {GC}}^{d}_{det}\models\mu(\varphi)$ and $(m,\mu)$ is a $({\mathcal {GC}}^{d}_{det},n)$-unifier of $\varphi$.
Since $(m ,\mu){\preccurlyeq_{{\mathcal {GC}}^{d}_{det}}^{n}}(k,\sigma)$ and $(m ,\mu){\preccurlyeq_{{\mathcal {GC}}^{d}_{det}}^{n}}(l,\tau)$, therefore $\varphi$ is $({\mathcal {GC}}^{d}_{det},n)$-filtering.
\end{proof}
In order to show that in $\Alt_{1}+\square^{d}\bot$, unifiable $n$-formulas are reasonable (Proposition~\ref{main:proposition:bounded:deterministic}), we introduce an alternative semantics based on chains.
For all $d^{\prime}{\in}\N$, if $d^{\prime}{<}d$ then a {\em $d^{\prime}$-$n$-chain}\/ is a structure of the form $(\alpha^{0},\ldots,\alpha^{d^{\prime}})$ where $\alpha^{0},\ldots,\alpha^{d^{\prime}}{\in}\BIT_{n}$.
For all $d^{\prime}{\in}\N$, if $d^{\prime}{<}d$ then let $\CHA_{{=}d^{\prime}}^{n}$ be the set of all $d^{\prime}$-$n$-chains.
Let $\CHA_{d}^{n}{=}\bigcup\{\CHA_{{=}d^{\prime}}^{n}:\ d^{\prime}{\in}\N$ and $d^{\prime}{<}d\}$.
The binary relation $\models_{n}$ between $\CHA_{d}^{n}$ and $\FOR_{n}$ is defined by
\begin{itemize}
\item $(\alpha^{0},\ldots,\alpha^{d^{\prime}}){\models_{n}}x_{i}$ iff $\alpha^{0}_{i}{=}1$,
\item $(\alpha^{0},\ldots,\alpha^{d^{\prime}}){\not\models_{n}}\bot$,
\item $(\alpha^{0},\ldots,\alpha^{d^{\prime}}){\models_{n}}\neg\varphi$ iff $(\alpha^{0},\ldots,\alpha^{d^{\prime}}){\not\models_{n}}\varphi$,
\item $(\alpha^{0},\ldots,\alpha^{d^{\prime}}){\models_{n}}\varphi\vee\psi$ iff either $(\alpha^{0},\ldots,\alpha^{d^{\prime}}){\models_{n}}\varphi$, or $(\alpha^{0},\ldots,\alpha^{d^{\prime}}){\models_{n}}\psi$,
\item $(\alpha^{0},\ldots,\alpha^{d^{\prime}}){\models_{n}}\square\varphi$ iff if $d^{\prime}{\geq}1$ then $(\alpha^{1},\ldots,\alpha^{d^{\prime}}){\models_{n}}\varphi$.
\end{itemize}
Obviously,
\begin{itemize}
\item $(\alpha^{0},\ldots,\alpha^{d^{\prime}}){\models_{n}}\lozenge\varphi$ iff $d^{\prime}{\geq}1$ and $(\alpha^{1},\ldots,\alpha^{d^{\prime}}){\models_{n}}\varphi$.
\end{itemize}
The next result shows that ${\mathcal{GC}}_{det}^{d}$ and chains determine the same modal logic.
Its proof is standard.
\begin{proposition}\label{proposition:C:determines:alt1:encore}
For all $\varphi{\in}\FOR_{n}$, ${\mathcal{GC}}_{det}^{d}{\models}\varphi$ iff for all $(\alpha^{0},\ldots,\alpha^{d^{\prime}}){\in}\CHA_{d}^{n}$, $(\alpha^{0},\ldots,\alpha^{d^{\prime}}){\models_{n}}\varphi$.
\end{proposition}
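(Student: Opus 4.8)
The plan is to set up the obvious correspondence between the $n$-models based on frames of ${\mathcal{GC}}_{det}^{d}$ and the chains in $\CHA_{d}^{n}$, and then to transport $n$-satisfiability through it by a routine induction on the structure of formulas.

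First I would describe the correspondence. Every frame in ${\mathcal{GC}}_{det}^{d}$ is of the form $(W,R)$ with $W{=}\{0,\ldots,d^{\prime}\}$, with $R$ the successor relation on $W$, and with $d^{\prime}{<}d$. Given an $n$-model $\moddel{=}(W,R,V)$ based on such a frame, I associate with each $s{\in}W$ the tuple $\gamma^{s}{\in}\BIT_{n}$ defined by $\gamma^{s}_{i}{=}1$ iff $s{\in}V(x_{i})$; then $(\gamma^{0},\ldots,\gamma^{d^{\prime}})$ is a $d^{\prime}$-$n$-chain. Conversely, a $d^{\prime}$-$n$-chain $(\alpha^{0},\ldots,\alpha^{d^{\prime}})$ gives rise to the $n$-model with domain $\{0,\ldots,d^{\prime}\}$, with the successor relation, and with $V(x_{i}){=}\{s:\ \alpha^{s}_{i}{=}1\}$; this frame belongs to ${\mathcal{GC}}_{det}^{d}$ since $d^{\prime}{<}d$. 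These two passages are mutually inverse, and I will freely identify a finite nonempty sequence of members of $\BIT_{n}$ with the chain having exactly those entries in that order, whatever index set we use to name them.

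The main step is the claim, proved by induction on $\psi{\in}\FOR_{n}$: for every $n$-model $\moddel{=}(W,R,V)$ based on a frame in ${\mathcal{GC}}_{det}^{d}$ with $W{=}\{0,\ldots,d^{\prime}\}$ and every $s{\in}W$, we have $s{\models_{n}}\psi$ in $\moddel$ iff $(\gamma^{s},\gamma^{s+1},\ldots,\gamma^{d^{\prime}}){\models_{n}}\psi$, where $\gamma^{t}$ is as above. For $\psi{=}x_{i}$ both sides say $\gamma^{s}_{i}{=}1$; for $\psi{=}\bot$ both fail; the cases $\psi{=}\neg\chi$ and $\psi{=}\chi\vee\xi$ follow at once from the induction hypothesis. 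For $\psi{=}\square\chi$ I would split on whether $s{=}d^{\prime}$ or $s{<}d^{\prime}$. If $s{=}d^{\prime}$, then $s$ has no $R$-successor in $W$, so $s{\models_{n}}\square\chi$ holds vacuously; and $(\gamma^{d^{\prime}})$ is a $0$-$n$-chain, so $(\gamma^{d^{\prime}}){\models_{n}}\square\chi$ holds too, since the defining clause for $\square$-satisfaction on chains degenerates to a vacuous conditional on chains of length one. If $s{<}d^{\prime}$, then the unique $R$-successor of $s$ is $s{+}1$, and the induction hypothesis applied at $s{+}1$ to the tail chain gives $s{\models_{n}}\square\chi$ iff $s{+}1{\models_{n}}\chi$ iff $(\gamma^{s+1},\ldots,\gamma^{d^{\prime}}){\models_{n}}\chi$ iff $(\gamma^{s},\gamma^{s+1},\ldots,\gamma^{d^{\prime}}){\models_{n}}\square\chi$.

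Finally I would assemble the equivalence. For the direction from left to right, assume ${\mathcal{GC}}_{det}^{d}{\models}\varphi$ and take any $(\alpha^{0},\ldots,\alpha^{d^{\prime}}){\in}\CHA_{d}^{n}$; the associated $n$-model is based on a frame of ${\mathcal{GC}}_{det}^{d}$, so $\varphi$ is $n$-true in it, and the claim at $s{=}0$ yields $(\alpha^{0},\ldots,\alpha^{d^{\prime}}){\models_{n}}\varphi$. For the other direction, assume every member of $\CHA_{d}^{n}$ $n$-satisfies $\varphi$, and let $\moddel{=}(W,R,V)$ be an $n$-model based on some frame of ${\mathcal{GC}}_{det}^{d}$ with $W{=}\{0,\ldots,d^{\prime}\}$, and let $s{\in}W$; the sequence $(\gamma^{s},\gamma^{s+1},\ldots,\gamma^{d^{\prime}})$, reindexed from $0$, is a $(d^{\prime}{-}s)$-$n$-chain, hence lies in $\CHA_{d}^{n}$ since $d^{\prime}{-}s{\leq}d^{\prime}{<}d$; by hypothesis it $n$-satisfies $\varphi$, so $s{\models_{n}}\varphi$ by the claim. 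As $s$, $\moddel$ and the frame were arbitrary, ${\mathcal{GC}}_{det}^{d}{\models}\varphi$. I expect no genuine obstacle here; the only point requiring a little care is the boundary case $s{=}d^{\prime}$ in the $\square$-step, where one must match the convention ``if $d^{\prime}{\geq}1$ then \ldots'' in the chain semantics with the vacuous satisfaction of $\square\chi$ at an $R$-dead endpoint. Everything else is bookkeeping.
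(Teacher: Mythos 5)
Your proposal is correct: the model--chain correspondence, the induction on $\psi$ with the careful treatment of the endpoint case $s{=}d^{\prime}$ for $\square$, and the reindexing of tail chains are exactly the standard argument the paper alludes to when it says the proof of this proposition is standard (the paper itself gives no details). Nothing is missing.
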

The function $\for_{n}$ from $\CHA_{d}^{n}$ to $\FOR_{n}$ is inductively defined as follows:
\begin{itemize}
\item if $d^{\prime}{\geq}1$ then $\for_{n}((\alpha^{0},\ldots,\alpha^{d^{\prime}})){=}x_{1}^{\alpha^{0}_{1}}\wedge\ldots\wedge x_{n}^{\alpha^{0}_{n}}\wedge\lozenge\for_{n}((\alpha^{1},\ldots,\alpha^{d^{\prime}}))$ else $\for_{n}((\alpha^{0},\ldots,\alpha^{d^{\prime}})){=}x_{1}^{\alpha^{0}_{1}}\wedge\ldots\wedge x_{n}^{\alpha^{0}_{n}}\wedge\square\bot$,
\end{itemize}
where $(\alpha^{0},\ldots,\alpha^{d^{\prime}})$ ranges over $\CHA_{d}^{n}$.
In Propositions~\ref{simple:proposition:about:formulas:associated:to:chains}--\ref{lemma:3:about:k:alpha:A:gamma:C:B:equals:C}, we study its main properties.
\begin{proposition}\label{simple:proposition:about:formulas:associated:to:chains}
Let $(k,\sigma){\in}\SUB_{n}$.
Let $(\alpha^{0},\ldots,\alpha^{d^{\prime}}){\in}\CHA_{d}^{k}$ and $(\beta^{0},\ldots,\beta^{d^{\prime\prime}}){\in}\CHA_{d}^{n}$.
If $(\alpha^{0},\ldots,\alpha^{d^{\prime}}){\models_{k}}\sigma(\for_{n}((\beta^{0},\ldots,\beta^{d^{\prime\prime}})))$ then $d^{\prime}{=}d^{\prime\prime}$.
\end{proposition}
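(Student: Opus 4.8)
The plan is to proceed by induction on $d^{\prime\prime}$, establishing the statement simultaneously for all $(k,\sigma){\in}\SUB_{n}$, all $(\alpha^{0},\ldots,\alpha^{d^{\prime}}){\in}\CHA_{d}^{k}$ and all $(\beta^{0},\ldots,\beta^{d^{\prime\prime}}){\in}\CHA_{d}^{n}$. The starting observation is that, since $\sigma$ is a homomorphism --- in particular it commutes with $\square$, $\bot$ and the Boolean connectives, so that $\sigma(x_{i}^{\beta^{0}_{i}}){=}\sigma(x_{i})^{\beta^{0}_{i}}$ --- the $k$-formula $\sigma(\for_{n}((\beta^{0},\ldots,\beta^{d^{\prime\prime}})))$ equals $\sigma(x_{1})^{\beta^{0}_{1}}\wedge\ldots\wedge\sigma(x_{n})^{\beta^{0}_{n}}\wedge\square\bot$ when $d^{\prime\prime}{=}0$ and equals $\sigma(x_{1})^{\beta^{0}_{1}}\wedge\ldots\wedge\sigma(x_{n})^{\beta^{0}_{n}}\wedge\lozenge\sigma(\for_{n}((\beta^{1},\ldots,\beta^{d^{\prime\prime}})))$ when $d^{\prime\prime}{\geq}1$. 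Thus applying $\sigma$ leaves untouched the ``modal skeleton'' of $\for_{n}((\beta^{0},\ldots,\beta^{d^{\prime\prime}}))$, and it is precisely this skeleton that pins down the length of any chain satisfying the formula.

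In the base case $d^{\prime\prime}{=}0$, I would argue that if $(\alpha^{0},\ldots,\alpha^{d^{\prime}}){\models_{k}}\sigma(\for_{n}((\beta^{0})))$ then $(\alpha^{0},\ldots,\alpha^{d^{\prime}}){\models_{k}}\square\bot$; by the clause defining $\models_{k}$ on chains this means that if $d^{\prime}{\geq}1$ then $(\alpha^{1},\ldots,\alpha^{d^{\prime}}){\models_{k}}\bot$, which is impossible since no chain satisfies $\bot$, so $d^{\prime}{=}0{=}d^{\prime\prime}$.

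In the inductive step, assuming the claim for $n$-chains of length $d^{\prime\prime}{-}1$ and supposing $(\alpha^{0},\ldots,\alpha^{d^{\prime}}){\models_{k}}\sigma(\for_{n}((\beta^{0},\ldots,\beta^{d^{\prime\prime}})))$ with $d^{\prime\prime}{\geq}1$, I would use the form of $\sigma(\for_{n}((\beta^{0},\ldots,\beta^{d^{\prime\prime}})))$ recorded above to get $(\alpha^{0},\ldots,\alpha^{d^{\prime}}){\models_{k}}\lozenge\sigma(\for_{n}((\beta^{1},\ldots,\beta^{d^{\prime\prime}})))$, hence by the semantics of $\lozenge$ on chains $d^{\prime}{\geq}1$ and $(\alpha^{1},\ldots,\alpha^{d^{\prime}}){\models_{k}}\sigma(\for_{n}((\beta^{1},\ldots,\beta^{d^{\prime\prime}})))$. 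Since $(\beta^{1},\ldots,\beta^{d^{\prime\prime}})$ is a $(d^{\prime\prime}{-}1)$-$n$-chain in $\CHA_{d}^{n}$ and $(\alpha^{1},\ldots,\alpha^{d^{\prime}})$ is a $(d^{\prime}{-}1)$-$k$-chain in $\CHA_{d}^{k}$, the induction hypothesis gives $d^{\prime}{-}1{=}d^{\prime\prime}{-}1$, that is $d^{\prime}{=}d^{\prime\prime}$.

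I do not expect a real obstacle here: this is essentially a warm-up observation, and the only mild points to be careful about are that $\sigma$, being a homomorphism, preserves the modal skeleton of $\for_{n}((\beta^{0},\ldots,\beta^{d^{\prime\prime}}))$, and that on chains the formula $\square\bot$ is satisfied exactly by chains of length $0$ while $\lozenge\psi$ forces the chain to have length at least $1$.
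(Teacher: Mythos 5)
Your proof is correct and follows essentially the same route as the paper, which simply performs a well-founded induction on the pair $(d^{\prime},d^{\prime\prime})$ (via the order $\ll$); your induction on $d^{\prime\prime}$ alone, with the other data universally quantified in the hypothesis, is an equivalent formulation since both lengths drop together in the recursive step. The two key observations you isolate --- that $\sigma$ preserves the modal skeleton of $\for_{n}$, and that $\square\bot$ and $\lozenge\psi$ respectively force length $0$ and length at least $1$ on chains --- are exactly what the paper's one-line proof relies on.
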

\begin{proof}
By $\ll$-induction on $(d^{\prime},d^{\prime\prime})$.
\end{proof}
\begin{proposition}\label{lemma:1:about:alpha:beta:equivalent:conditions}
Let $(\alpha^{0},\ldots,\alpha^{d^{\prime}}),(\beta^{0},\ldots,\beta^{d^{\prime\prime}}){\in}\CHA_{d}^{n}$.
The following conditions are equivalent:
\begin{enumerate}
\item $(\alpha^{0},\ldots,\alpha^{d^{\prime}}){=}(\beta^{0},\ldots,\beta^{d^{\prime\prime}})$,
\item $(\alpha^{0},\ldots,\alpha^{d^{\prime}}){\models_{n}}\for_{n}((\beta^{0},\ldots,\beta^{d^{\prime\prime}}))$.
\end{enumerate}
\end{proposition}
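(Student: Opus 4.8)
The plan is to establish the two implications separately, in each case by peeling off the head $\alpha^{0}$ (resp.\ $\beta^{0}$) of the chain and recursing on the shorter chain $(\alpha^{1},\ldots,\alpha^{d^{\prime}})$ (resp.\ $(\beta^{1},\ldots,\beta^{d^{\prime\prime}})$). The two ingredients doing the work are the defining clause $\varphi^{0}{=}\neg\varphi$, $\varphi^{1}{=}\varphi$ for the ``literal'' conjuncts $x_{i}^{\alpha^{0}_{i}}$, and the clause $(\alpha^{0},\ldots,\alpha^{d^{\prime}}){\models_{n}}\square\varphi$ iff $d^{\prime}{\geq}1$ implies $(\alpha^{1},\ldots,\alpha^{d^{\prime}}){\models_{n}}\varphi$, used to read off the length.

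For $(1){\Rightarrow}(2)$, assume the chains coincide, so $d^{\prime}{=}d^{\prime\prime}$ and $\alpha^{j}{=}\beta^{j}$ for all $j$; I would show $(\alpha^{0},\ldots,\alpha^{d^{\prime}}){\models_{n}}\for_{n}((\alpha^{0},\ldots,\alpha^{d^{\prime}}))$ by induction on $d^{\prime}$. In both the base case $d^{\prime}{=}0$ and the step $d^{\prime}{\geq}1$, each conjunct $x_{i}^{\alpha^{0}_{i}}$ is satisfied directly from the definition of $\models_{n}$ on a variable. When $d^{\prime}{=}0$, the remaining conjunct $\square\bot$ holds because the $\square$-clause is vacuous for a length-$0$ chain; when $d^{\prime}{\geq}1$, the remaining conjunct $\lozenge\for_{n}((\alpha^{1},\ldots,\alpha^{d^{\prime}}))$ holds because $d^{\prime}{\geq}1$ and, by the induction hypothesis applied to $(\alpha^{1},\ldots,\alpha^{d^{\prime}})$, that shorter chain satisfies $\for_{n}((\alpha^{1},\ldots,\alpha^{d^{\prime}}))$.

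For $(2){\Rightarrow}(1)$, I would argue by $\ll$-induction on $(d^{\prime},d^{\prime\prime})$, exactly as in Proposition~\ref{simple:proposition:about:formulas:associated:to:chains}. From $(\alpha^{0},\ldots,\alpha^{d^{\prime}}){\models_{n}}\for_{n}((\beta^{0},\ldots,\beta^{d^{\prime\prime}}))$, satisfaction of the conjuncts $x_{i}^{\beta^{0}_{i}}$ forces $\alpha^{0}_{i}{=}\beta^{0}_{i}$ for all $i$, hence $\alpha^{0}{=}\beta^{0}$. If $d^{\prime\prime}{=}0$, then $\for_{n}((\beta^{0},\ldots,\beta^{d^{\prime\prime}}))$ contains the conjunct $\square\bot$; since no chain satisfies $\bot$, the $\square$-clause forces $d^{\prime}{=}0$, so the chains are already equal. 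If $d^{\prime\prime}{\geq}1$, then $\for_{n}((\beta^{0},\ldots,\beta^{d^{\prime\prime}}))$ contains the conjunct $\lozenge\for_{n}((\beta^{1},\ldots,\beta^{d^{\prime\prime}}))$, and the $\lozenge$-clause forces $d^{\prime}{\geq}1$ together with $(\alpha^{1},\ldots,\alpha^{d^{\prime}}){\models_{n}}\for_{n}((\beta^{1},\ldots,\beta^{d^{\prime\prime}}))$; since $(d^{\prime}{-}1,d^{\prime\prime}{-}1){\ll}(d^{\prime},d^{\prime\prime})$, the induction hypothesis yields $(\alpha^{1},\ldots,\alpha^{d^{\prime}}){=}(\beta^{1},\ldots,\beta^{d^{\prime\prime}})$, whence $d^{\prime}{=}d^{\prime\prime}$ and $\alpha^{j}{=}\beta^{j}$ for $j{=}1,\ldots,d^{\prime}$, and combining with $\alpha^{0}{=}\beta^{0}$ gives $(1)$. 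Alternatively, one may first invoke Proposition~\ref{simple:proposition:about:formulas:associated:to:chains} with the identity $n$-substitution to get $d^{\prime}{=}d^{\prime\prime}$ and then run the same length-induction as in $(1){\Rightarrow}(2)$.

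There is no real obstacle here; the only point that needs attention is the bookkeeping linking the chain length to the modal head of $\for_{n}$ — namely that the $\square\bot$ conjunct is equivalent to $d^{\prime}{=}0$ and the $\lozenge(\cdot)$ conjunct forces $d^{\prime}{\geq}1$ — after which everything is a direct unwinding of the clauses defining $\models_{n}$ and $\for_{n}$.
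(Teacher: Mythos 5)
Your proposal is correct and matches the paper's own argument, which is exactly a $\ll$-induction on $(d^{\prime},d^{\prime\prime})$ unwinding the definitions of $\for_{n}$ and $\models_{n}$ (the paper leaves the details implicit, and your handling of the $\square\bot$ versus $\lozenge(\cdot)$ conjuncts to pin down the lengths is the intended bookkeeping). No gap; nothing further is needed.
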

\begin{proof}
By $\ll$-induction on $(d^{\prime},d^{\prime\prime})$.
\end{proof}
\begin{proposition}\label{lemma:2:about:k:alpha:sigma:beta:B}
Let $(k,\sigma){\in}\SUB_{n}$.
Let $(\alpha^{0},\ldots,\alpha^{d^{\prime}}){\in}\CHA_{d}^{k}$.
There exists $(\beta^{0},\ldots,\beta^{d^{\prime\prime}}){\in}\CHA_{d}^{n}$ such that $(\alpha^{0},\ldots,\alpha^{d^{\prime}}){\models_{k}}\sigma(\for_{n}((\beta^{0},\ldots,\beta^{d^{\prime\prime}})))$.
\end{proposition}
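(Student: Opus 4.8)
The plan is to argue by induction on $d^{\prime}$, building the required $n$-chain from the outside in. Observe first that, by Proposition~\ref{simple:proposition:about:formulas:associated:to:chains}, any chain we produce is forced to have $d^{\prime\prime}{=}d^{\prime}$, so it suffices to construct a $d^{\prime}$-$n$-chain; and since $(\alpha^{0},\ldots,\alpha^{d^{\prime}}){\in}\CHA_{d}^{k}$ we have $d^{\prime}{<}d$, so such a chain automatically lies in $\CHA_{d}^{n}$. The ingredient common to all cases is the choice of $\beta^{0}{\in}\BIT_{n}$: set $\beta^{0}_{i}{=}1$ if $(\alpha^{0},\ldots,\alpha^{d^{\prime}}){\models_{k}}\sigma(x_{i})$ and $\beta^{0}_{i}{=}0$ otherwise. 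Since $\sigma$ is a homomorphism, $\sigma(x_{i}^{\beta^{0}_{i}}){=}\sigma(x_{i})^{\beta^{0}_{i}}$, so this choice guarantees $(\alpha^{0},\ldots,\alpha^{d^{\prime}}){\models_{k}}\sigma(x_{i})^{\beta^{0}_{i}}$ for every $i{\in}\{1,\ldots,n\}$.

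In the base case $d^{\prime}{=}0$, by definition of $\for_{n}$ and the homomorphism clauses, $\sigma(\for_{n}((\beta^{0}))){=}\sigma(x_{1})^{\beta^{0}_{1}}\wedge\ldots\wedge\sigma(x_{n})^{\beta^{0}_{n}}\wedge\square\bot$. The first $n$ conjuncts hold at $(\alpha^{0})$ by the choice of $\beta^{0}$, and $(\alpha^{0}){\models_{k}}\square\bot$ holds vacuously because the $0$-chain $(\alpha^{0})$ has no $\square$-successor; hence $(\beta^{0})$ is as desired.

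In the inductive step $d^{\prime}{\geq}1$, the tail $(\alpha^{1},\ldots,\alpha^{d^{\prime}})$ is a $(d^{\prime}{-}1)$-$k$-chain with $d^{\prime}{-}1{<}d$, hence an element of $\CHA_{d}^{k}$. By the induction hypothesis there is a $(d^{\prime}{-}1)$-$n$-chain $(\beta^{1},\ldots,\beta^{d^{\prime}})$ with $(\alpha^{1},\ldots,\alpha^{d^{\prime}}){\models_{k}}\sigma(\for_{n}((\beta^{1},\ldots,\beta^{d^{\prime}})))$. Since $\sigma(\square\psi){=}\square\sigma(\psi)$, we have $\sigma(\lozenge\psi){=}\lozenge\sigma(\psi)$, and the chain semantics of $\lozenge$ then gives $(\alpha^{0},\ldots,\alpha^{d^{\prime}}){\models_{k}}\lozenge\sigma(\for_{n}((\beta^{1},\ldots,\beta^{d^{\prime}})))$. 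Unfolding $\sigma(\for_{n}((\beta^{0},\ldots,\beta^{d^{\prime}})))$ with the definition of $\for_{n}$ and combining this with the $n$ conjuncts already secured by $\beta^{0}$ yields $(\alpha^{0},\ldots,\alpha^{d^{\prime}}){\models_{k}}\sigma(\for_{n}((\beta^{0},\ldots,\beta^{d^{\prime}})))$, completing the induction.

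There is no genuinely hard step: the whole argument is the mechanical interplay of two inductive definitions~---~that of $\for_{n}$ and that of the chain-satisfaction relation~---~together with the two elementary facts that $\sigma$ commutes with the Boolean connectives and with $\lozenge$ and that $\sigma(x_{i}^{b}){=}\sigma(x_{i})^{b}$. The only thing that needs a moment's attention is the bookkeeping with chain lengths, and this is already handled by Proposition~\ref{simple:proposition:about:formulas:associated:to:chains} and by the trivial remark that truncating a chain in $\CHA_{d}^{k}$ leaves it in $\CHA_{d}^{k}$.
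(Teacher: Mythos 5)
Your proposal is correct and follows essentially the same route as the paper: induction on $d^{\prime}$, choosing $\beta^{0}_{i}{=}1$ exactly when $(\alpha^{0},\ldots,\alpha^{d^{\prime}}){\models_{k}}\sigma(x_{i})$, and applying the induction hypothesis to the tail $(\alpha^{1},\ldots,\alpha^{d^{\prime}})$ in the step case. Your extra remarks on the length bookkeeping via Proposition~\ref{simple:proposition:about:formulas:associated:to:chains} and on $\sigma$ commuting with the defined connectives only make explicit what the paper leaves implicit.
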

\begin{proof}
By induction on $d^{\prime}$.
We consider the following $2$ cases.
\\
\\
{\bf Case $d^{\prime}{=}0$.}
Let $\beta^{0}{\in}\BIT_{n}$ be such that for all $i{\in}\{1,\ldots,n\}$, if $(\alpha^{0},\ldots,\alpha^{d^{\prime}}){\models_{k}}\sigma(x_{i})$ then $\beta^{0}_{i}{=}1$ else $\beta^{0}_{i}{=}0$.
Consequently, $(\alpha^{0},\ldots,\alpha^{d^{\prime}}){\models_{k}}\sigma(x_{1})^{\beta^{0}_{1}}\wedge\ldots\wedge\sigma(x_{n})^{\beta^{0}_{n}}$.
Since $d^{\prime}{=}0$, therefore $(\alpha^{0},\ldots,\alpha^{d^{\prime}}){\models_{k}}\sigma(\for_{n}((\beta^{0})))$.
\\
\\
{\bf Case $d^{\prime}{\geq}1$.}
Let $\beta^{0}{\in}\BIT_{n}$ be such that for all $i{\in}\{1,\ldots,n\}$, if $(\alpha^{0},\ldots,\alpha^{d^{\prime}}){\models_{k}}\sigma(x_{i})$ then $\beta^{0}_{i}{=}1$ else $\beta^{0}_{i}{=}0$.
Moreover, since $d^{\prime}{\geq}1$, therefore by induction hypothesis, let $(\beta^{1},\ldots,\beta^{d^{\prime\prime}}){\in}\CHA_{d}^{n}$ be such that $(\alpha^{1},\ldots,\alpha^{d^{\prime}}){\models_{k}}\sigma(\for_{n}((\beta^{1},\ldots,\beta^{d^{\prime\prime}})))$.
Hence, $(\alpha^{0},\ldots,\alpha^{d^{\prime}}){\models_{k}}\sigma(x_{1})^{\beta^{0}_{1}}\wedge\ldots\wedge\sigma(x_{n})^{\beta^{0}_{n}}$.
Moreover, $(\alpha^{0},\ldots,\alpha^{d^{\prime}}){\models_{k}}\lozenge\sigma(\for_{n}((\beta^{1},\ldots,\beta^{d^{\prime\prime}})))$.
Thus, $(\alpha^{0},\ldots,\alpha^{d^{\prime}}){\models_{k}}\sigma(\for_{n}((\beta^{0},\ldots,\beta^{d^{\prime\prime}})))$.
\end{proof}
\begin{proposition}\label{lemma:3:about:k:alpha:A:gamma:C:B:equals:C}
Let $(k,\sigma){\in}\SUB_{n}$.
Let $(\alpha^{0},\ldots,\alpha^{d^{\prime}}){\in}\CHA_{d}^{k}$.
For all $(\beta^{0},\ldots,\beta^{d^{\prime\prime}}),(\gamma^{0},\ldots,\gamma^{d^{\prime\prime\prime}}){\in}\CHA_{d}^{n}$, if $(\alpha^{0},\ldots,\alpha^{d^{\prime}}){\models_{k}}\sigma(\for_{n}((\beta^{0},\ldots,\beta^{d^{\prime\prime}})))$ and $(\alpha^{0},\ldots,\alpha^{d^{\prime}}){\models_{k}}\sigma(\for_{n}((\gamma^{0},\ldots,\gamma^{d^{\prime\prime\prime}})))$ then $(\beta^{0},\ldots,\beta^{d^{\prime\prime}}){=}(\gamma^{0},\ldots,\gamma^{d^{\prime\prime\prime}})$.
\end{proposition}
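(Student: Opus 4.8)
The plan is to derive the claim by induction on $d^{\prime}$, once the lengths of the two $n$-chains have been reconciled. First I would apply Proposition~\ref{simple:proposition:about:formulas:associated:to:chains} twice: from $(\alpha^{0},\ldots,\alpha^{d^{\prime}}){\models_{k}}\sigma(\for_{n}((\beta^{0},\ldots,\beta^{d^{\prime\prime}})))$ it follows that $d^{\prime}{=}d^{\prime\prime}$, and from $(\alpha^{0},\ldots,\alpha^{d^{\prime}}){\models_{k}}\sigma(\for_{n}((\gamma^{0},\ldots,\gamma^{d^{\prime\prime\prime}})))$ it follows that $d^{\prime}{=}d^{\prime\prime\prime}$. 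Hence $d^{\prime\prime}{=}d^{\prime\prime\prime}{=}d^{\prime}$, so the two $n$-chains have equal length and it suffices to prove that they agree coordinate by coordinate, which I would do by induction on $d^{\prime}$ (the two $n$-chains being universally quantified at each stage).

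For the base case $d^{\prime}{=}0$, unfolding the definition of $\for_{n}$ gives $\sigma(\for_{n}((\beta^{0}))){=}\sigma(x_{1})^{\beta^{0}_{1}}\wedge\ldots\wedge\sigma(x_{n})^{\beta^{0}_{n}}\wedge\square\bot$, and similarly for $\gamma^{0}$. From $(\alpha^{0}){\models_{k}}\sigma(x_{i})^{\beta^{0}_{i}}$ and $(\alpha^{0}){\models_{k}}\sigma(x_{i})^{\gamma^{0}_{i}}$ one reads off, for each $i{\in}\{1,\ldots,n\}$, that $\beta^{0}_{i}{=}1$ iff $(\alpha^{0}){\models_{k}}\sigma(x_{i})$ iff $\gamma^{0}_{i}{=}1$; hence $\beta^{0}{=}\gamma^{0}$, i.e. $(\beta^{0}){=}(\gamma^{0})$.

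For the inductive step $d^{\prime}{\geq}1$, unfolding $\for_{n}$ yields the additional conjunct $\lozenge\sigma(\for_{n}((\beta^{1},\ldots,\beta^{d^{\prime}})))$ (and the analogous conjunct for $\gamma$). The Boolean conjuncts give $\beta^{0}{=}\gamma^{0}$ exactly as in the base case. Since $d^{\prime}{\geq}1$, satisfaction of the $\lozenge$-conjunct at $(\alpha^{0},\ldots,\alpha^{d^{\prime}})$ amounts to $(\alpha^{1},\ldots,\alpha^{d^{\prime}}){\models_{k}}\sigma(\for_{n}((\beta^{1},\ldots,\beta^{d^{\prime}})))$, and likewise $(\alpha^{1},\ldots,\alpha^{d^{\prime}}){\models_{k}}\sigma(\for_{n}((\gamma^{1},\ldots,\gamma^{d^{\prime}})))$. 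As $(\alpha^{1},\ldots,\alpha^{d^{\prime}})$ is a $(d^{\prime}{-}1)$-$k$-chain, the induction hypothesis applies and gives $(\beta^{1},\ldots,\beta^{d^{\prime}}){=}(\gamma^{1},\ldots,\gamma^{d^{\prime}})$; combined with $\beta^{0}{=}\gamma^{0}$ this yields $(\beta^{0},\ldots,\beta^{d^{\prime}}){=}(\gamma^{0},\ldots,\gamma^{d^{\prime}})$, as desired.

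I do not anticipate any genuine difficulty here: the argument is a routine induction, parallel to the proof of Proposition~\ref{lemma:2:about:k:alpha:sigma:beta:B}, and together with that proposition it shows that for each $(\alpha^{0},\ldots,\alpha^{d^{\prime}}){\in}\CHA_{d}^{k}$ there is exactly one $(\beta^{0},\ldots,\beta^{d^{\prime\prime}}){\in}\CHA_{d}^{n}$ with $(\alpha^{0},\ldots,\alpha^{d^{\prime}}){\models_{k}}\sigma(\for_{n}((\beta^{0},\ldots,\beta^{d^{\prime\prime}})))$. The only point deserving attention is that the length-reconciliation via Proposition~\ref{simple:proposition:about:formulas:associated:to:chains} must be carried out first, so that comparing the two $n$-chains coordinatewise is meaningful and the induction on $d^{\prime}$ is well-posed.
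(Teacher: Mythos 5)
Your proof is correct and takes essentially the same route as the paper's: an induction on $d^{\prime}$ that unfolds $\for_{n}$, compares the head bit-vectors via the conjuncts $\sigma(x_{i})^{\beta^{0}_{i}}$, $\sigma(x_{i})^{\gamma^{0}_{i}}$, and recurses on the tails. The only (cosmetic) difference is that you settle $d^{\prime\prime}{=}d^{\prime\prime\prime}{=}d^{\prime}$ upfront by two applications of Proposition~\ref{simple:proposition:about:formulas:associated:to:chains}, whereas the paper re-derives the needed facts about $d^{\prime\prime}$ and $d^{\prime\prime\prime}$ inside each case directly from the chain semantics of $\square\bot$ and $\lozenge$.
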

\begin{proof}
By induction on $d^{\prime}$.
We consider the following $2$ cases.
\\
\\
{\bf Case $d^{\prime}{=}0$.}
Let $(\beta^{0},\ldots,\beta^{d^{\prime\prime}}),(\gamma^{0},\ldots,\gamma^{d^{\prime\prime\prime}}){\in}\CHA_{d}^{n}$ be such that $(\alpha^{0},\ldots,\alpha^{d^{\prime}}){\models_{k}}\sigma(\for_{n}((\beta^{0},\ldots,\beta^{d^{\prime\prime}})))$ and $(\alpha^{0},\ldots,\alpha^{d^{\prime}}){\models_{k}}\sigma(\for_{n}((\gamma^{0},\ldots,\gamma^{d^{\prime\prime\prime}})))$.
Hence, if $d^{\prime\prime}{\geq}1$ then $(\alpha^{0},\ldots,\alpha^{d^{\prime}}){\models_{k}}\sigma(x_{1})^{\beta^{0}_{1}}\wedge\ldots\wedge\sigma(x_{n})^{\beta^{0}_{n}}\wedge\lozenge\sigma(\for_{n}((\beta^{1},\ldots,\beta^{d^{\prime\prime}})))$ else $(\alpha^{0},\ldots,\alpha^{d^{\prime}}){\models_{k}}\sigma(x_{1})^{\beta^{0}_{1}}\wedge\ldots\wedge\sigma(x_{n})^{\beta^{0}_{n}}\wedge\square\bot$ and if $d^{\prime\prime\prime}{\geq}1$ then $(\alpha^{0},\ldots,\alpha^{d^{\prime}}){\models_{k}}\sigma(x_{1})^{\gamma^{0}_{1}}\wedge\ldots\wedge\sigma(x_{n})^{\gamma^{0}_{n}}\wedge\lozenge\sigma(\for_{n}((\gamma^{1},\ldots,\gamma^{d^{\prime\prime\prime}})))$ else $(\alpha^{0},\ldots,\alpha^{d^{\prime}}){\models_{k}}\sigma(x_{1})^{\gamma^{0}_{1}}\wedge\ldots\wedge\sigma(x_{n})^{\gamma^{0}_{n}}\wedge\square\bot$.
Since $d^{\prime}{=}0$, therefore $d^{\prime\prime}{=}0$, $d^{\prime\prime\prime}{=}0$ and for all $i{\in}\{1,\ldots,n\}$, $(\alpha^{0},\ldots,\alpha^{d^{\prime}}){\models_{k}}\sigma(x_{i})^{\beta^{0}_{i}}$ and $(\alpha^{0},\ldots,\alpha^{d^{\prime}}){\models_{k}}\sigma(x_{i})^{\gamma^{0}_{i}}$.
Thus, $\beta^{0}{=}\gamma^{0}$.
Since $d^{\prime\prime}{=}0$ and $d^{\prime\prime\prime}{=}0$, therefore $(\beta^{0},\ldots,\beta^{d^{\prime\prime}}){=}(\gamma^{0},\ldots,\gamma^{d^{\prime\prime\prime}})$.
\\
\\
{\bf Case $d^{\prime}{\geq}1$.}
Let $(\beta^{0},\ldots,\beta^{d^{\prime\prime}}),(\gamma^{0},\ldots,\gamma^{d^{\prime\prime\prime}}){\in}\CHA_{d}^{n}$ be such that $(\alpha^{0},\ldots,\alpha^{d^{\prime}}){\models_{k}}\sigma(\for_{n}((\beta^{0},\ldots,\beta^{d^{\prime\prime}})))$ and $(\alpha^{0},\ldots,\alpha^{d^{\prime}}){\models_{k}}\sigma(\for_{n}((\gamma^{0},\ldots,\gamma^{d^{\prime\prime\prime}})))$.
Hence, if $d^{\prime\prime}{\geq}1$ then $(\alpha^{0},\ldots,\alpha^{d^{\prime}}){\models_{k}}\sigma(x_{1})^{\beta^{0}_{1}}\wedge\ldots\wedge\sigma(x_{n})^{\beta^{0}_{n}}\wedge\lozenge\sigma(\for_{n}((\beta^{1},\ldots,\beta^{d^{\prime\prime}})))$ else $(\alpha^{0},\ldots,\alpha^{d^{\prime}}){\models_{k}}\sigma(x_{1})^{\beta^{0}_{1}}\wedge\ldots\wedge\sigma(x_{n})^{\beta^{0}_{n}}\wedge\square\bot$ and if $d^{\prime\prime\prime}{\geq}1$ then $(\alpha^{0},\ldots,\alpha^{d^{\prime}}){\models_{k}}\sigma(x_{1})^{\gamma^{0}_{1}}\wedge\ldots\wedge\sigma(x_{n})^{\gamma^{0}_{n}}\wedge\lozenge\sigma(\for_{n}((\gamma^{1},\ldots,\gamma^{d^{\prime\prime\prime}})))$ else $(\alpha^{0},\ldots,\alpha^{d^{\prime}}){\models_{k}}\sigma(x_{1})^{\gamma^{0}_{1}}\wedge\ldots\wedge\sigma(x_{n})^{\gamma^{0}_{n}}\wedge\square\bot$.
Since $d^{\prime}{\geq}1$, therefore $d^{\prime\prime}{\geq}1$, $d^{\prime\prime\prime}{\geq}1$ and for all $i{\in}\{1,\ldots,n\}$, $(\alpha^{0},\ldots,\alpha^{d^{\prime}}){\models_{k}}\sigma(x_{i})^{\beta^{0}_{i}}$ and $(\alpha^{0},\ldots,\alpha^{d^{\prime}}){\models_{k}}\sigma(x_{i})^{\gamma^{0}_{i}}$.
Moreover, $(\alpha^{1},\ldots,\alpha^{d^{\prime}}){\models_{k}}\sigma(\for_{n}((\beta^{1},\ldots,\beta^{d^{\prime\prime}})))$ and $(\alpha^{1},\ldots,\alpha^{d^{\prime}}){\models_{k}}\sigma(\for_{n}((\gamma^{1},\ldots,\gamma^{d^{\prime\prime\prime}})))$.
Thus, $\beta^{0}{=}\gamma^{0}$.
Moreover, by induction hypothesis, $(\beta^{1},\ldots,\beta^{d^{\prime\prime}}){=}(\gamma^{1},\ldots,\gamma^{d^{\prime\prime\prime}})$.
Consequently, $(\beta^{0},\ldots,\beta^{d^{\prime\prime}}){=}(\gamma^{0},\ldots,\gamma^{d^{\prime\prime\prime}})$.
\end{proof}
For all $k{\geq}1$, a {\em $d$-$(k,n)$-morphism}\/ is a function $f$ from $\CHA_{d}^{k}$ to $\CHA_{d}^{n}$ such that for all $(\alpha^{0},\ldots,\alpha^{d^{\prime}}){\in}\CHA_{d}^{k}$ and for all $(\beta^{0},\ldots,\beta^{d^{\prime\prime}}){\in}\CHA_{d}^{n}$, if $f((\alpha^{0},\ldots,\alpha^{d^{\prime}})){=}(\beta^{0},\ldots,\beta^{d^{\prime\prime}})$ then
\begin{description}
\item[forward condition:] if $d^{\prime}{\geq}1$ then $d^{\prime\prime}{\geq}1$ and $f((\alpha^{1},\ldots,\alpha^{d^{\prime}})){=}(\beta^{1},\ldots,\beta^{d^{\prime\prime}})$,
\item[backward condition:] if $d^{\prime\prime}{\geq}1$ then $d^{\prime}{\geq}1$ and $f((\alpha^{1},\ldots,\alpha^{d^{\prime}})){=}(\beta^{1},\ldots,\beta^{d^{\prime\prime}})$.
\end{description}
The next result is a good example of what the properties of morphisms are like.
\begin{proposition}\label{simple:proposition:about:morphisms}
Let $k{\geq}1$.
Let $f$ be a $d$-$(k,n)$-morphism.
Let $(\beta^{0},\ldots,\beta^{d^{\prime\prime}}){\in}\CHA_{d}^{k}$ and $(\gamma^{0},\ldots,\gamma^{d^{\prime\prime\prime}}){\in}\CHA_{d}^{n}$.
If the following conditions hold then $f((\beta^{0},\ldots,\beta^{d^{\prime\prime}})){=}(\gamma^{0},\ldots,\gamma^{d^{\prime\prime\prime}})$:
\begin{itemize}
\item for all $i{\in}\{1,\ldots,n\}$, $f((\beta^{0},\ldots,\beta^{d^{\prime\prime}})){\models_{n}}x_{i}^{\gamma^{0}_{i}}$,
\item if $d^{\prime\prime}{\geq}1$ then $d^{\prime\prime\prime}{\geq}1$ and $f((\beta^{1},\ldots,\beta^{d^{\prime\prime}})){=}(\gamma^{1},\ldots,\gamma^{d^{\prime\prime\prime}})$,
\item if $d^{\prime\prime\prime}{\geq}1$ then $d^{\prime\prime}{\geq}1$ and $f((\beta^{1},\ldots,\beta^{d^{\prime\prime}})){=}(\gamma^{1},\ldots,\gamma^{d^{\prime\prime\prime}})$.
\end{itemize}
\end{proposition}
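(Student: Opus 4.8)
The plan is to unfold the definitions directly, with a short case analysis on whether the source chain has length index $0$ or $\geq 1$; no induction is needed, since the forward and backward conditions of a morphism already relate the value of $f$ on a chain to its value on the immediate tail of that chain. Write $(\delta^{0},\ldots,\delta^{d_{0}})$ for the chain $f((\beta^{0},\ldots,\beta^{d^{\prime\prime}})){\in}\CHA_{d}^{n}$; the goal is to prove $(\delta^{0},\ldots,\delta^{d_{0}}){=}(\gamma^{0},\ldots,\gamma^{d^{\prime\prime\prime}})$, that is $d_{0}{=}d^{\prime\prime\prime}$ and $\delta^{j}{=}\gamma^{j}$ for each admissible $j$. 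First, the first displayed hypothesis reads $(\delta^{0},\ldots,\delta^{d_{0}}){\models_{n}}x_{i}^{\gamma^{0}_{i}}$ for all $i{\in}\{1,\ldots,n\}$, and since $(\delta^{0},\ldots,\delta^{d_{0}}){\models_{n}}x_{i}$ iff $\delta^{0}_{i}{=}1$, this gives $\delta^{0}_{i}{=}\gamma^{0}_{i}$ for all such $i$, hence $\delta^{0}{=}\gamma^{0}$. It remains to match the tails.

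If $d^{\prime\prime}{=}0$, then applying the backward condition of the morphism to $f((\beta^{0})){=}(\delta^{0},\ldots,\delta^{d_{0}})$ rules out $d_{0}{\geq}1$, so $d_{0}{=}0$; and by the third displayed hypothesis, $d^{\prime\prime\prime}{\geq}1$ would force $d^{\prime\prime}{\geq}1$, so $d^{\prime\prime\prime}{=}0$ as well. Thus both chains reduce to their respective first components and we are done by $\delta^{0}{=}\gamma^{0}$. If $d^{\prime\prime}{\geq}1$, then the second displayed hypothesis yields $d^{\prime\prime\prime}{\geq}1$ together with $f((\beta^{1},\ldots,\beta^{d^{\prime\prime}})){=}(\gamma^{1},\ldots,\gamma^{d^{\prime\prime\prime}})$, while the forward condition of the morphism applied to $f((\beta^{0},\ldots,\beta^{d^{\prime\prime}})){=}(\delta^{0},\ldots,\delta^{d_{0}})$ yields $d_{0}{\geq}1$ together with $f((\beta^{1},\ldots,\beta^{d^{\prime\prime}})){=}(\delta^{1},\ldots,\delta^{d_{0}})$. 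Comparing these two descriptions of $f((\beta^{1},\ldots,\beta^{d^{\prime\prime}}))$ gives $d_{0}{=}d^{\prime\prime\prime}$ and $\delta^{j}{=}\gamma^{j}$ for $1{\leq}j{\leq}d_{0}$; combined with $\delta^{0}{=}\gamma^{0}$, this is exactly $(\delta^{0},\ldots,\delta^{d_{0}}){=}(\gamma^{0},\ldots,\gamma^{d^{\prime\prime\prime}})$.

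There is no real obstacle here: the proof is pure bookkeeping with the definitions, the only point of care being to keep the three length indices — $d^{\prime\prime}$ (source), $d^{\prime\prime\prime}$ (target) and the length $d_{0}$ of the actual value $f((\beta^{0},\ldots,\beta^{d^{\prime\prime}}))$ — distinct, and to pair the morphism's forward and backward conditions with the proposition's two conditional hypotheses correctly. Alternatively, one can route the argument through Proposition~\ref{lemma:1:about:alpha:beta:equivalent:conditions}, reducing the desired equality to $f((\beta^{0},\ldots,\beta^{d^{\prime\prime}})){\models_{n}}\for_{n}((\gamma^{0},\ldots,\gamma^{d^{\prime\prime\prime}}))$ and then unfolding the definition of $\for_{n}$; the hypotheses are tailored so that each conjunct of $\for_{n}((\gamma^{0},\ldots,\gamma^{d^{\prime\prime\prime}}))$ is satisfied, but the direct route above is the most economical.
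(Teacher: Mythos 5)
Your proof is correct and takes essentially the same route as the paper's: both extract $\delta^{0}{=}\gamma^{0}$ from the first hypothesis (where $(\delta^{0},\ldots)$ denotes the actual value $f((\beta^{0},\ldots,\beta^{d^{\prime\prime}}))$) and then match the tails by combining the morphism's forward/backward conditions with the second and third hypotheses. The only cosmetic difference is that the paper argues by contradiction with a case split on the lengths of the target chain and of the image, whereas you argue directly with a case split on $d^{\prime\prime}$.
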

\begin{proof}
Suppose for all $i{\in}\{1,\ldots,n\}$, $f((\beta^{0},\ldots,\beta^{d^{\prime\prime}})){\models_{n}}x_{i}^{\gamma^{0}_{i}}$.
Moreover, suppose if $d^{\prime\prime}{\geq}1$ then $d^{\prime\prime\prime}{\geq}1$ and $f((\beta^{1},\ldots,\beta^{d^{\prime\prime}})){=}(\gamma^{1},\ldots,\gamma^{d^{\prime\prime\prime}})$ and if $d^{\prime\prime\prime}{\geq}1$ then $d^{\prime\prime}{\geq}1$ and $f((\beta^{1},\ldots,\beta^{d^{\prime\prime}})){=}(\gamma^{1},\ldots,\gamma^{d^{\prime\prime\prime}})$.
For the sake of the contradiction, suppose $f((\beta^{0},\ldots,\beta^{d^{\prime\prime}})){\not=}(\gamma^{0},\ldots,\gamma^{d^{\prime\prime\prime}})$.
Let $(\delta^{0},\ldots,\delta^{d^{\prime\prime\prime\prime}}){\in}\CHA_{d}^{n}$ be such that $f((\beta^{0},\ldots,\beta^{d^{\prime\prime}})){=}(\delta^{0},\ldots,\delta^{d^{\prime\prime\prime\prime}})$.
Since for all $i{\in}\{1,\ldots,n\}$, $f((\beta^{0},\ldots,\beta^{d^{\prime\prime}})){\models_{n}}x_{i}^{\gamma^{0}_{i}}$, therefore for all $i{\in}\{1,\ldots,n\}$, $(\delta^{0},\ldots,\delta^{d^{\prime\prime\prime\prime}}){\models_{n}}x_{i}^{\gamma^{0}_{i}}$.
Since for all $i{\in}\{1,\ldots,n\}$, $(\delta^{0},\ldots,\delta^{d^{\prime\prime\prime\prime}}){\models_{n}}x_{i}^{\delta^{0}_{i}}$, therefore $\gamma^{0}{=}\delta^{0}$.
Since $f((\beta^{0},\ldots,\beta^{d^{\prime\prime}})){\not=}(\gamma^{0},\ldots,\gamma^{d^{\prime\prime\prime}})$ and $f((\beta^{0},\ldots,\beta^{d^{\prime\prime}})){=}(\delta^{0},\ldots,\delta^{d^{\prime\prime\prime\prime}})$, therefore $(\gamma^{0},\ldots,\gamma^{d^{\prime\prime\prime}}){\not=}(\delta^{0},\ldots,\delta^{d^{\prime\prime\prime\prime}})$.
Since $\gamma^{0}{=}\delta^{0}$, therefore either $d^{\prime\prime\prime}{\geq}1$, or $d^{\prime\prime\prime\prime}{\geq}1$.
We consider the following $2$ cases.
\\
\\
{\bf Case $d^{\prime\prime\prime}{\geq}1$.}
Hence, $d^{\prime\prime}{\geq}1$ and $f((\beta^{1},\ldots,\beta^{d^{\prime\prime}})){=}(\gamma^{1},\ldots,\gamma^{d^{\prime\prime\prime}})$.
Since $f$ is a $d$-$(k,n)$-morphism and $f((\beta^{0},\ldots,\beta^{d^{\prime\prime}})){=}(\delta^{0},\ldots,\delta^{d^{\prime\prime\prime\prime}})$, therefore $d^{\prime\prime\prime\prime}{\geq}1$ and $f((\beta^{1},\ldots,\beta^{d^{\prime\prime}})){=}(\delta^{1},\ldots,\delta^{d^{\prime\prime\prime\prime}})$.
Since $\gamma^{0}{=}\delta^{0}$ and $f((\beta^{1},\ldots,\beta^{d^{\prime\prime}})){=}(\gamma^{1},\ldots,\gamma^{d^{\prime\prime\prime}})$, therefore $(\gamma^{0},\ldots,\gamma^{d^{\prime\prime\prime}}){=}(\delta^{0},\ldots,\delta^{d^{\prime\prime\prime\prime}})$: a contradiction.
\\
\\
{\bf Case $d^{\prime\prime\prime\prime}{\geq}1$.}
Since $f$ is a $d$-$(k,n)$-morphism and $f((\beta^{0},\ldots,\beta^{d^{\prime\prime}})){=}(\delta^{0},\ldots,\delta^{d^{\prime\prime\prime\prime}})$, therefore $d^{\prime\prime}{\geq}1$ and $f((\beta^{1},\ldots,\beta^{d^{\prime\prime}})){=}(\delta^{1},\ldots,\delta^{d^{\prime\prime\prime\prime}})$.
Thus, $d^{\prime\prime\prime}{\geq}1$ and $f((\beta^{1},\ldots,\beta^{d^{\prime\prime}})){=}(\gamma^{1},\ldots,\gamma^{d^{\prime\prime\prime}})$.
Since $\gamma^{0}{=}\delta^{0}$ and $f((\beta^{1},\ldots,\beta^{d^{\prime\prime}})){=}(\delta^{1},\ldots,\delta^{d^{\prime\prime\prime\prime}})$, therefore $(\gamma^{0},\ldots,\gamma^{d^{\prime\prime\prime}}){=}(\delta^{0},\ldots,\delta^{d^{\prime\prime\prime\prime}})$: a contradiction.
\end{proof}
\section{Main result}\label{new:section:Main:result}
Let $n{\geq}1$.
Let $d{\geq}2$.
Let $\pi{=}n$.
Notice that $n{\leq}\pi$.
Combined with Propositions~\ref{proposition:locally:tabular:classes:of:frames}, \ref{proposition:C:determines:alt1} and~\ref{lemma:locally:tabular:implies:unitary:or:finitary}, the next result implies that in $\Alt_{1}+\square^{d}\bot$, unifiable $n$-formulas are either finitary, or unitary.
\begin{proposition}\label{main:proposition:bounded:deterministic}
For all $\varphi{\in}\FOR_{n}$, if $\varphi$ is $({\mathcal{GC}}_{det}^{d},n)$-unifiable then $\varphi$ is $({\mathcal{GC}}_{det}^{d},n)$-$\pi$-reasonable.
\end{proposition}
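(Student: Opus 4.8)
The goal is: given a $({\mathcal{GC}}_{det}^{d},n)$-unifiable formula $\varphi$ and a unifier $(k,\sigma)$ with $k>\pi=n$, produce a unifier $(l,\tau)$ with $l\le n$ and $(l,\tau)\preccurlyeq_{{\mathcal{GC}}_{det}^{d}}^{n}(k,\sigma)$. The plan is to work entirely through the chain semantics (Proposition~\ref{proposition:C:determines:alt1:encore}) and the characteristic formulas $\for_{n}$. The key observation is that the action of $\sigma$ on chains is governed by the map sending an $n$-chain $(\beta^{0},\ldots,\beta^{d''})$ to the (unique, by Proposition~\ref{lemma:3:about:k:alpha:A:gamma:C:B:equals:C}, and existent, by Proposition~\ref{lemma:2:about:k:alpha:sigma:beta:B}) $k$-chain on which $\sigma(\for_{n}((\beta^{0},\ldots,\beta^{d''})))$ is satisfied; more precisely, for each $k$-chain $A\in\CHA_{d}^{k}$ there is a unique $n$-chain $g(A)$ with $A\models_{k}\sigma(\for_{n}(g(A)))$. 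One checks that $g$ is a $d$-$(k,n)$-morphism: the forward and backward conditions follow by unwinding the inductive definition of $\for_{n}$ together with the clause for $\lozenge$ in the chain semantics, exactly as in Propositions~\ref{lemma:2:about:k:alpha:sigma:beta:B} and~\ref{lemma:3:about:k:alpha:A:gamma:C:B:equals:C}.

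**Reducing the arity.** Since $\varphi$ is unified by $(k,\sigma)$, every $k$-chain $A$ satisfies $\sigma(\varphi)$; translating through $g$, every $n$-chain in the image of $g$ satisfies $\varphi$. The strategy for building $\tau$ with $l\le n$ is to find an $l$-chain-valued analogue: I want an $l$-substitution $\tau$ and a surjection-like correspondence so that the $n$-chains "reachable" under $\tau$ are contained in the image of $g$, hence still satisfy $\varphi$. Here is where Proposition~\ref{main:one} and local $n$-tabularity enter. The relevant counting device: consider the equivalence on $k$-chains (or on the relevant finite index set) given by $A\sim A'$ iff $g(A)=g(A')$ and $A,A'$ have the same length; the number of equivalence classes is bounded by the number of $n$-chains of each length, which in turn — because at each "level" only the bit-vector $\beta^{0}_{1}\wedge\cdots\wedge\beta^{0}_{n}$ matters and there are $2^{n}$ such — is controlled by a quantity depending only on $n$ and $d$. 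Proposition~\ref{main:one} then furnishes a function through which one can factor $g$, and this factorization is realized syntactically by a substitution of arity $\le n$: one defines $\tau(x_{i})$ as a suitable Boolean combination of formulas $\for_{l}(\cdot)$ and iterated $\lozenge$'s picking out, for each short chain, the first-coordinate bit dictated by $g$. One then verifies $\tau$ unifies $\varphi$ (its reachable $n$-chains land in $\mathrm{im}(g)$) and that $\sigma$ factors as $\upsilon\circ\tau$ up to ${\mathcal{GC}}_{det}^{d}$-equivalence, giving $(l,\tau)\preccurlyeq_{{\mathcal{GC}}_{det}^{d}}^{n}(k,\sigma)$.

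**The main obstacle.** The delicate point is the syntactic realization: showing that the abstract factorization of the morphism $g$ through a lower-arity index set can be *written down* as an honest $l$-substitution with $l\le n$, and that the witnessing $\upsilon$ exists. Concretely, one must define $\tau(x_{i})$ so that on every $n$-chain the truth value of $\tau(x_{i})$ agrees with what the composite "$g$ then read off bit $i$" prescribes, and this requires expressing membership in each $\sim$-class by a formula in at most $n$ variables — which is exactly what local $\pi$-tabularity (Proposition~\ref{proposition:locally:tabular:classes:of:frames}, for $\pi=n$) guarantees is possible in principle, but assembling the pieces into a single coherent substitution, and then exhibiting $\upsilon$ with $\upsilon\circ\tau\equiv\sigma$, is the technical heart. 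I expect this to occupy most of the proof, with the chain-semantics lemmas of the previous section doing the routine bookkeeping and Proposition~\ref{main:one} supplying the precise combinatorial bound that makes $l\le n$ work.
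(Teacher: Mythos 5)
You have the right scaffolding (chain semantics, the morphism $g$ sending a $k$-chain to the unique $n$-chain whose characteristic formula's $\sigma$-image it satisfies, a counting appeal to Proposition~\ref{main:one}, and a $\tau$ built from disjunctions of characteristic formulas), and this is indeed the paper's route. But the technical heart is missing, and you acknowledge as much: you never construct the object that actually makes the arity reduction work, namely a \emph{surjective} $d$-$(k,n)$-morphism $f$ from $\CHA_{d}^{k}$ to $\CHA_{d}^{n}$ such that $f(A){=}f(B)$ implies $g(A){=}g(B)$. Surjectivity is what guarantees that \emph{every} $n$-chain, not just those in the image of $g$, satisfies $\tau(\varphi)$; the kernel condition is what makes the factorization through $\sigma$ possible. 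Moreover, a single application of Proposition~\ref{main:one} per length does not suffice: the factored map must also satisfy the forward and backward conditions, and the paper secures this by an induction on the length $d^{\prime}$, partitioning $\CHA_{=d^{\prime}}^{k}$ into the fibers $S((\delta^{1},\ldots,\delta^{d^{\prime}}))$ of the already-built $f_{d^{\prime}-1}$ applied to tails, and applying Proposition~\ref{main:one} \emph{fiber by fiber} against $T((\delta^{1},\ldots,\delta^{d^{\prime}}))$ with the equivalence induced by $g$. You also only check one of the two cardinality inequalities: the bound ${\parallel}T{\parallel}{\leq}{\parallel}S{\parallel}$ needs ${\parallel}S{\parallel}{\geq}2^{k}{\geq}2^{n}$, which is exactly where surjectivity of $f_{d^{\prime}-1}$ and the hypothesis $k{>}\pi{=}n$ enter; your sketch never uses $k{>}\pi$ at all.

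Your proposed escape through local $\pi$-tabularity for the ``syntactic realization'' is a wrong turn: local tabularity plays no role in this proposition (it is used only later, in Proposition~\ref{lemma:locally:tabular:implies:unitary:or:finitary}, to pass from $\pi$-reasonableness to finitarity/unitarity), and it would not by itself produce the required witness for $\preccurlyeq_{{\mathcal{GC}}_{det}^{d}}^{n}$. Once $f$ is available, the realization is completely explicit: $\tau(x_{i})$ is the disjunction of $\for_{n}(f(A))$ over the $k$-chains $A$ with $A{\models_{k}}\sigma(x_{i})$, and the witnessing substitution is $\nu(x_{i})$, the disjunction of $\for_{k}(A)$ over the $k$-chains $A$ with $f(A){\models_{n}}x_{i}$; one then proves by induction on formulas (using surjectivity of $f$, the morphism conditions, and the kernel refinement) that $B{\models_{n}}\tau(\psi)$ iff every, equivalently some, $f$-preimage of $B$ satisfies $\sigma(\psi)$, and that $\nu(\tau(x_{i}))$ is equivalent to $\sigma(x_{i})$ on all $k$-chains, which yields both that $(n,\tau)$ unifies $\varphi$ and that $(n,\tau){\preccurlyeq_{{\mathcal{GC}}_{det}^{d}}^{n}}(k,\sigma)$. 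Without the construction of $f$ and these verification lemmas, the proposal is a plan rather than a proof.
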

\begin{proof}
Let $\varphi{\in}\FOR_{n}$.
Suppose $\varphi$ is $({\mathcal{GC}}_{det}^{d},n)$-unifiable.
Let $(k,\sigma)$ be a $({\mathcal{GC}}_{det}^{d},n)$-unifier of $\varphi$ such that $k{>}\pi$.
Hence, ${\mathcal{GC}}_{det}^{d}{\models}\sigma(\varphi)$.
Moreover, since $n{\leq}\pi$, therefore $k{\geq}n$.
Let $g$ be a $d$-$(k,n)$-morphism such that for all $(\alpha^{0},\ldots,\alpha^{d^{\prime}}),(\beta^{0},\ldots,\beta^{d^{\prime\prime}}){\in}\CHA_{d}^{k}$, if $g((\alpha^{0},\ldots,\alpha^{d^{\prime}})){=}g((\beta^{0},\ldots,\beta^{d^{\prime\prime}}))$ then for all $i{\in}\{1,\ldots,n\}$, $(\alpha^{0},\ldots,\alpha^{d^{\prime}}){\models_{k}}\sigma(x_{i})$ iff $(\beta^{0},\ldots,\beta^{d^{\prime\prime}}){\models_{k}}\sigma(x_{i})$.
The proof of the existence of $g$ is presented in Section~\ref{section:about:the:proof:of:the:existence:of:g:bounded:deterministic}.
Let $f$ be a surjective $d$-$(k,n)$-morphism such that for all $(\alpha^{0},\ldots,\alpha^{d^{\prime}}),(\beta^{0},\ldots,\beta^{d^{\prime\prime}}){\in}\CHA_{d}^{k}$, if $f((\alpha^{0},\ldots,\alpha^{d^{\prime}})){=}f((\beta^{0},\ldots,\beta^{d^{\prime\prime}}))$ then $g((\alpha^{0},\ldots,\alpha^{d^{\prime}})){=}g((\beta^{0},\ldots,\beta^{d^{\prime\prime}}))$.
The proof of the existence of $f$ is presented in Section~\ref{section:about:the:proof:of:the:existence:of:f:bounded:deterministic}.
Let $(n,\tau),(k,\nu)$ be the $n$-substitutions defined by
\begin{itemize}
\item $\tau(x_{i}){=}\bigvee\{\for_{n}(f((\alpha^{0},\ldots,\alpha^{d^{\prime}}))):\ (\alpha^{0},\ldots,\alpha^{d^{\prime}}){\in}\CHA_{d}^{k}$ is such that $(\alpha^{0},\ldots,\alpha^{d^{\prime}}){\models_{k}}\sigma(x_{i})\}$,
\item $\nu(x_{i}){=}\bigvee\{\for_{k}((\alpha^{0},\ldots,\alpha^{d^{\prime}})):\ (\alpha^{0},\ldots,\alpha^{d^{\prime}}){\in}\CHA_{d}^{k}$ is such that $f((\alpha^{0},\ldots,\alpha^{d^{\prime}})){\models_{n}}x_{i}\}$,
\end{itemize}
where $i$ ranges over $\{1,\ldots,n\}$.
Now, we show that $\varphi$ is $({\mathcal{GC}}_{det}^{d},n)$-$\pi$-reasonable.
This necessitates our proving Lemmas~\ref{lemma:6:about:psi:beta:B:equivalent:condi}--\ref{lemma:9:about:beta:B:nu:tau:sigma:x:i}.
%
%
\begin{lemma}\label{lemma:6:about:psi:beta:B:equivalent:condi}
Let $\psi{\in}\FOR_{n}$.
For all $(\beta^{0},\ldots,\beta^{d^{\prime\prime}}){\in}\CHA_{d}^{n}$, the following conditions are equivalent:
\begin{enumerate}
\item there exists $(\alpha^{0},\ldots,\alpha^{d^{\prime}}){\in}\CHA_{d}^{k}$ such that $f((\alpha^{0},\ldots,\alpha^{d^{\prime}})){=}(\beta^{0},\ldots,\beta^{d^{\prime\prime}})$ and $(\alpha^{0},\ldots,\alpha^{d^{\prime}}){\models_{k}}\sigma(\psi)$,
\item for all $(\alpha^{0},\ldots,\alpha^{d^{\prime}}){\in}\CHA_{d}^{k}$, if $f((\alpha^{0},\ldots,\alpha^{d^{\prime}})){=}(\beta^{0},\ldots,\beta^{d^{\prime\prime}})$ then $(\alpha^{0},\ldots,\alpha^{d^{\prime}}){\models_{k}}\sigma(\psi)$,
\item $(\beta^{0},\ldots,\beta^{d^{\prime\prime}}){\models_{n}}\tau(\psi)$.
\end{enumerate}
\end{lemma}
\begin{lemma}\label{lemma:7:about:beta:B:i:following:equivalent:conditions}
For all $(\beta^{0},\ldots,\beta^{d^{\prime\prime}}){\in}\CHA_{d}^{k}$ and for all $i{\in}\{1,\ldots,n\}$, the following conditions are equivalent:
\begin{enumerate}
\item $(\beta^{0},\ldots,\beta^{d^{\prime\prime}}){\models_{k}}\nu(x_{i})$,
\item $f((\beta^{0},\ldots,\beta^{d^{\prime\prime}})){\models_{n}}x_{i}$.
\end{enumerate}
\end{lemma}
\begin{lemma}\label{lemma:8:about:beta:B:MOD:gamma:C:are:equi}
Let $(\beta^{0},\ldots,\beta^{d^{\prime\prime}}){\in}\CHA_{d}^{k}$ and $(\gamma^{0},\ldots,\gamma^{d^{\prime\prime\prime}}){\in}\CHA_{d}^{n}$.
The following
\linebreak
conditions are equivalent:
\begin{enumerate}
\item $f((\beta^{0},\ldots,\beta^{d^{\prime\prime}})){=}(\gamma^{0},\ldots,\gamma^{d^{\prime\prime\prime}})$,
\item $(\beta^{0},\ldots,\beta^{d^{\prime\prime}}){\models_{k}}\nu(\for_{n}((\gamma^{0},\ldots,\gamma^{d^{\prime\prime\prime}})))$.
\end{enumerate}
\end{lemma}
\begin{lemma}\label{lemma:9:about:beta:B:nu:tau:sigma:x:i}
For all $(\beta^{0},\ldots,\beta^{d^{\prime\prime}}){\in}\CHA_{d}^{k}$ and for all $i{\in}\{1,\ldots,n\}$, the following conditions are equivalent:
\begin{enumerate}
\item $(\beta^{0},\ldots,\beta^{d^{\prime\prime}}){\models_{k}}\nu(\tau(x_{i}))$,
\item $(\beta^{0},\ldots,\beta^{d^{\prime\prime}}){\models_{k}}\sigma(x_{i})$.
\end{enumerate}
\end{lemma}
Since ${\mathcal{GC}}_{det}^{d}{\models}\sigma(\varphi)$, therefore by Proposition~\ref{proposition:C:determines:alt1:encore}, for all $(\alpha^{0},\ldots,\alpha^{d^{\prime}}){\in}\CHA_{d}^{k}$, $(\alpha^{0},\ldots,\alpha^{d^{\prime}}){\models_{k}}\sigma(\varphi)$.
Thus, by Lemma~\ref{lemma:6:about:psi:beta:B:equivalent:condi}, for all $(\beta^{0},\ldots,\beta^{d^{\prime\prime}}){\in}\CHA_{d}^{n}$, $(\beta^{0},\ldots,\beta^{d^{\prime\prime}}){\models_{n}}\tau(\varphi)$.
Consequently, by Proposition~\ref{proposition:C:determines:alt1:encore}, ${\mathcal{GC}}_{det}^{d}{\models}\tau(\varphi)$.
Hence, $(n,\tau)$ is a $({\mathcal{GC}}_{det}^{d},n)$-unifier of $\varphi$.
Moreover, by Lemma~\ref{lemma:9:about:beta:B:nu:tau:sigma:x:i}, $(n,\tau){\preccurlyeq_{{\mathcal{GC}}_{det}^{d}}^{n}}(k,\sigma)$.
Since $n{\leq}\pi$, therefore $\varphi$ is $({\mathcal{GC}}_{det}^{d},n)$-$\pi$-reasonable.
\end{proof}
The next result follows from Propositions~\ref{proposition:locally:tabular:classes:of:frames}, \ref{proposition:C:determines:alt1}, \ref{lemma:filtering:implies:unitary:or:nullary}, \ref{lemma:locally:tabular:implies:unitary:or:finitary}, \ref{ALT1:bounded:is:filtering} and~\ref{main:proposition:bounded:deterministic}.
\begin{proposition}\label{final:proposition:for:this:section}
For all $\varphi{\in}\FOR_{n}$, if $\varphi$ is $({\mathcal{GC}}_{det}^{d},n)$-unifiable then $\varphi$ is $({\mathcal{GC}}_{det}^{d},n)$-unitary.
\end{proposition}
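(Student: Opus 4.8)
The plan is to assemble the two dichotomies for $({\mathcal{GC}}_{det}^{d},n)$-unifiable $n$-formulas that have already been established in the preceding propositions and to observe that their only common refinement is unitarity. So fix $\varphi{\in}\FOR_{n}$ and assume $\varphi$ is $({\mathcal{GC}}_{det}^{d},n)$-unifiable.

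First I would apply Proposition~\ref{ALT1:bounded:is:filtering} to obtain that $\varphi$ is $({\mathcal{GC}}_{det}^{d},n)$-filtering, and then Proposition~\ref{lemma:filtering:implies:unitary:or:nullary} to conclude that $\varphi$ is either $({\mathcal{GC}}_{det}^{d},n)$-nullary or $({\mathcal{GC}}_{det}^{d},n)$-unitary.

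Second, taking $\pi{=}n$, I would apply Proposition~\ref{main:proposition:bounded:deterministic} to obtain that $\varphi$ is $({\mathcal{GC}}_{det}^{d},n)$-$\pi$-reasonable. To feed this into Proposition~\ref{lemma:locally:tabular:implies:unitary:or:finitary} I also need ${\mathcal{GC}}_{det}^{d}$ to be locally $\pi$-tabular, i.e.\ locally $n$-tabular. Proposition~\ref{proposition:locally:tabular:classes:of:frames} gives this for ${\mathcal C}^{d}_{det}$, and Proposition~\ref{proposition:C:determines:alt1} tells us that ${\mathcal C}^{d}_{det}$ and ${\mathcal{GC}}_{det}^{d}$ validate exactly the same $n$-formulas, so $\equiv_{{\mathcal C}^{d}_{det}}^{n}$ and $\equiv_{{\mathcal{GC}}_{det}^{d}}^{n}$ coincide and ${\mathcal{GC}}_{det}^{d}$ is locally $n$-tabular as well. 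Proposition~\ref{lemma:locally:tabular:implies:unitary:or:finitary} then yields that $\varphi$ is either $({\mathcal{GC}}_{det}^{d},n)$-finitary or $({\mathcal{GC}}_{det}^{d},n)$-unitary.

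To finish, I would invoke the fact that the types nullary, infinitary, finitary, unitary are jointly exhaustive and pairwise exclusive for each unifiable formula: the conjunction of ``nullary or unitary'' and ``finitary or unitary'' leaves only unitary. I do not expect any genuine obstacle at this last stage, since all the real content sits in Propositions~\ref{ALT1:bounded:is:filtering} and~\ref{main:proposition:bounded:deterministic} — the former via the explicit substitution $\mu$ witnessing filtering, the latter via the chain semantics together with the surjective collapsing morphisms $f$ and $g$ and Lemmas~\ref{lemma:6:about:psi:beta:B:equivalent:condi}--\ref{lemma:9:about:beta:B:nu:tau:sigma:x:i}. Were one to rebuild the whole chain from scratch, the hard part would be Proposition~\ref{main:proposition:bounded:deterministic}, and within it the construction of $f$ and the verification of those four lemmas; but here that is taken as given.
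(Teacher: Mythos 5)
Your proposal is correct and follows exactly the route the paper intends: the paper's proof of this proposition is precisely the combination of Propositions~\ref{proposition:locally:tabular:classes:of:frames}, \ref{proposition:C:determines:alt1}, \ref{lemma:filtering:implies:unitary:or:nullary}, \ref{lemma:locally:tabular:implies:unitary:or:finitary}, \ref{ALT1:bounded:is:filtering} and~\ref{main:proposition:bounded:deterministic}, intersecting the dichotomies ``nullary or unitary'' and ``finitary or unitary'' to conclude unitarity. Your extra remark that local $n$-tabularity transfers from ${\mathcal C}^{d}_{det}$ to ${\mathcal{GC}}_{det}^{d}$ via Proposition~\ref{proposition:C:determines:alt1} is exactly the role that proposition plays in the paper's citation list.
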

Now, our main result can be stated as follows.
\begin{proposition}\label{main:result:all:unifiable:formulas:are:unitary:proposition}
${\mathcal{C}}_{det}^{d}$ and ${\mathcal{GC}}_{det}^{d}$ are $n$-unitary.
\end{proposition}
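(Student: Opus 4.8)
The plan is to deduce Proposition~\ref{main:result:all:unifiable:formulas:are:unitary:proposition} from the work already assembled, chiefly Proposition~\ref{final:proposition:for:this:section}, together with the transfer results relating ${\mathcal C}_{det}^{d}$, ${\mathcal{GC}}_{det}^{d}$ and $\Alt_{1}+\square^{d}\bot$. First I would recall that, by Proposition~\ref{final:proposition:for:this:section}, every $({\mathcal{GC}}_{det}^{d},n)$-unifiable $n$-formula is $({\mathcal{GC}}_{det}^{d},n)$-unitary; so ${\mathcal{GC}}_{det}^{d}$ is $n$-unitary essentially by definition, provided one checks that being unitary for a single formula uniformly lifts to the class-level notion. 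This is immediate since there is no universally quantified formula hiding a nonexistence clause: $n$-unitariness just requires each unifiable formula to have a minimal complete set of cardinality $1$, which is exactly what Proposition~\ref{final:proposition:for:this:section} supplies.

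The substantive step is transferring this to ${\mathcal C}_{det}^{d}$. The key observation is Proposition~\ref{proposition:C:determines:alt1}, which states that for all $\varphi{\in}\FOR_{n}$, ${\mathcal C}^{d}_{det}{\models}\varphi$ iff ${\mathcal{GC}}^{d}_{det}{\models}\varphi$. From this I would argue that the two classes induce the same unification structure on $\FOR_{n}$: a substitution $(k,\sigma)$ is a $({\mathcal C}_{det}^{d},n)$-unifier of $\varphi$ iff it is a $({\mathcal{GC}}_{det}^{d},n)$-unifier of $\varphi$, since unification is defined via ${\mathcal C}{\models}\sigma(\varphi)$ and the two classes validate the same formulas. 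Likewise the preorder $\preccurlyeq_{{\mathcal C}}^{n}$ on $\SUB_{n}$ is defined solely through conditions of the form ${\mathcal C}{\models}\upsilon(\sigma(x_{i}))\leftrightarrow\tau(x_{i})$, so $\preccurlyeq_{{\mathcal C}_{det}^{d}}^{n}$ and $\preccurlyeq_{{\mathcal{GC}}_{det}^{d}}^{n}$ coincide, and hence completeness, minimality and cardinality of complete sets all agree. Therefore $({\mathcal C}_{det}^{d},n)$-unitariness of a formula is equivalent to its $({\mathcal{GC}}_{det}^{d},n)$-unitariness, and ${\mathcal C}_{det}^{d}$ is $n$-unitary as well.

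I expect the only mild subtlety — the ``main obstacle'', such as it is — to be making the equivalence of the two preorders fully rigorous: one must note that in $(k,\sigma){\preccurlyeq_{{\mathcal C}}^{n}}(l,\tau)$ the witnessing substitution $(m,\upsilon)$ is quantified existentially over all $k$-substitutions independently of ${\mathcal C}$, so the same $\upsilon$ works for both classes precisely because ${\mathcal C}^{d}_{det}$ and ${\mathcal{GC}}^{d}_{det}$ agree on all formulas (Proposition~\ref{proposition:C:determines:alt1}, applied to the formula $\upsilon(\sigma(x_{i}))\leftrightarrow\tau(x_{i})$). Once that is spelled out, the proof is a short chain: Proposition~\ref{final:proposition:for:this:section} gives $({\mathcal{GC}}_{det}^{d},n)$-unitariness for every unifiable $\varphi$; the coincidence of unifiers and of $\preccurlyeq$ transports this to ${\mathcal C}_{det}^{d}$; and unwinding the definitions of ``$n$-unitary'' for a class of frames finishes both halves of the statement. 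No new combinatorial content is needed beyond what Sections~\ref{section:unification:type:alt1:boxdbot} and~\ref{new:section:Main:result} have already established.
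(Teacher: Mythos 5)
Your proposal is correct and follows exactly the paper's own route: the paper proves this proposition simply by citing Propositions~\ref{proposition:C:determines:alt1} and~\ref{final:proposition:for:this:section}, and your argument just spells out the routine transfer (coincidence of unifiers, of $\preccurlyeq$, and hence of minimal complete sets) that this citation leaves implicit. No differences of substance.
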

\begin{proof}
By Propositions~\ref{proposition:C:determines:alt1} and~\ref{final:proposition:for:this:section}.
\end{proof}
\section{Definition of the function $g$ used in Section~\ref{new:section:Main:result}}\label{section:about:the:proof:of:the:existence:of:g:bounded:deterministic}
Let $n{\geq}1$.
Let $d{\geq}2$.
Let $(k,\sigma){\in}\SUB_{n}$.
{\bf Now, we define the function $g$ used in Section~\ref{new:section:Main:result}.}
Let $g$ be the function from $\CHA_{d}^{k}$ to $\CHA_{d}^{n}$ such that
\begin{itemize}
\item $g((\alpha^{0},\ldots,\alpha^{d^{\prime}}))$ is the unique $(\beta^{0},\ldots,\beta^{d^{\prime\prime}}){\in}\CHA_{d}^{n}$ such that $(\alpha^{0},\ldots,\alpha^{d^{\prime}}){\models_{k}}\sigma(\for_{n}((\beta^{0},\ldots,\beta^{d^{\prime\prime}})))$,
\end{itemize}
where $(\alpha^{0},\ldots,\alpha^{d^{\prime}})$ ranges over $\CHA_{d}^{k}$.
Notice that by Propositions~\ref{lemma:2:about:k:alpha:sigma:beta:B} and~\ref{lemma:3:about:k:alpha:A:gamma:C:B:equals:C}, $g$ is well-defined.
Moreover, by Proposition~\ref{simple:proposition:about:formulas:associated:to:chains}, for all $(\alpha^{0},\ldots,\alpha^{d^{\prime}}){\in}\CHA_{d}^{k}$, $g((\alpha^{0},\ldots,\alpha^{d^{\prime}})){\in}\CHA_{{=}d^{\prime}}^{n}$.
Propositions~\ref{first:lemma:about:g:alt1} and~\ref{second:lemma:about:g:alt1} show that $g$ possesses the properties required in Section~\ref{new:section:Main:result}.
\begin{proposition}\label{first:lemma:about:g:alt1}
$g$ is a $d$-$(k,n)$-morphism.
\end{proposition}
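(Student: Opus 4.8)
The goal is to prove Proposition~\ref{first:lemma:about:g:alt1}, namely that the function $g$ defined in Section~\ref{section:about:the:proof:of:the:existence:of:g:bounded:deterministic} is a $d$-$(k,n)$-morphism. Recall that $g((\alpha^{0},\ldots,\alpha^{d^{\prime}}))$ is the unique chain $(\beta^{0},\ldots,\beta^{d^{\prime\prime}}){\in}\CHA_{d}^{n}$ satisfying $(\alpha^{0},\ldots,\alpha^{d^{\prime}}){\models_{k}}\sigma(\for_{n}((\beta^{0},\ldots,\beta^{d^{\prime\prime}})))$, and that by Proposition~\ref{simple:proposition:about:formulas:associated:to:chains} we already know $d^{\prime\prime}{=}d^{\prime}$. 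So to verify that $g$ is a morphism I must check the \textbf{forward condition} and the \textbf{backward condition}; since $g$ preserves the length of chains ($d^{\prime\prime}{=}d^{\prime}$), the two conditions in fact collapse into one: I must show that if $d^{\prime}{\geq}1$ then $f((\alpha^{1},\ldots,\alpha^{d^{\prime}})){=}(\beta^{1},\ldots,\beta^{d^{\prime}})$, i.e. that $g$ commutes with the ``tail'' operation on chains. (Here I write $f$ for $g$ as in the morphism definition; there is no clash with the $f$ of Section~\ref{new:section:Main:result}.)

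The plan is as follows. Fix $(\alpha^{0},\ldots,\alpha^{d^{\prime}}){\in}\CHA_{d}^{k}$ with $d^{\prime}{\geq}1$, and write $(\beta^{0},\ldots,\beta^{d^{\prime}}){=}g((\alpha^{0},\ldots,\alpha^{d^{\prime}}))$, so that by definition of $g$ we have $(\alpha^{0},\ldots,\alpha^{d^{\prime}}){\models_{k}}\sigma(\for_{n}((\beta^{0},\ldots,\beta^{d^{\prime}})))$. Since $d^{\prime}{\geq}1$, by the inductive clause defining $\for_{n}$ this formula is $\sigma(x_{1}^{\beta^{0}_{1}}\wedge\ldots\wedge x_{n}^{\beta^{0}_{n}}\wedge\lozenge\for_{n}((\beta^{1},\ldots,\beta^{d^{\prime}})))$, which equals $\sigma(x_{1})^{\beta^{0}_{1}}\wedge\ldots\wedge\sigma(x_{n})^{\beta^{0}_{n}}\wedge\lozenge\sigma(\for_{n}((\beta^{1},\ldots,\beta^{d^{\prime}})))$ because $\sigma$ is a homomorphism. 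From the satisfaction clause for $\lozenge$ in chain semantics, $(\alpha^{0},\ldots,\alpha^{d^{\prime}}){\models_{k}}\lozenge\sigma(\for_{n}((\beta^{1},\ldots,\beta^{d^{\prime}})))$ together with $d^{\prime}{\geq}1$ gives $(\alpha^{1},\ldots,\alpha^{d^{\prime}}){\models_{k}}\sigma(\for_{n}((\beta^{1},\ldots,\beta^{d^{\prime}})))$. But $(\alpha^{1},\ldots,\alpha^{d^{\prime}}){\in}\CHA_{d}^{k}$ and $(\beta^{1},\ldots,\beta^{d^{\prime}}){\in}\CHA_{d}^{n}$, and by Proposition~\ref{lemma:3:about:k:alpha:A:gamma:C:B:equals:C} the chain witnessing such a satisfaction is unique; since $g((\alpha^{1},\ldots,\alpha^{d^{\prime}}))$ is by definition \emph{the} chain so witnessed (it exists by Proposition~\ref{lemma:2:about:k:alpha:sigma:beta:B}), we conclude $g((\alpha^{1},\ldots,\alpha^{d^{\prime}})){=}(\beta^{1},\ldots,\beta^{d^{\prime}})$, which is exactly both the forward and backward conditions (the length equality $d^{\prime\prime}{=}d^{\prime}$ being Proposition~\ref{simple:proposition:about:formulas:associated:to:chains}).

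I do not expect any serious obstacle: the argument is essentially unwinding the definitions of $\for_{n}$, of chain satisfaction for $\lozenge$, and of $g$, and then invoking the uniqueness supplied by Proposition~\ref{lemma:3:about:k:alpha:A:gamma:C:B:equals:C} (with well-definedness from Proposition~\ref{lemma:2:about:k:alpha:sigma:beta:B}). The one point to be careful about is the bookkeeping of the indices $d^{\prime}$ versus $d^{\prime\prime}$ in the statement of the morphism conditions: one must observe up front, via Proposition~\ref{simple:proposition:about:formulas:associated:to:chains}, that $g$ sends a $d^{\prime}$-$n$-chain's domain to a chain of the \emph{same} length $d^{\prime}$, so that ``$d^{\prime\prime}{\geq}1$'' and ``$d^{\prime}{\geq}1$'' are equivalent and the forward and backward conditions have identical content. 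With that remark in place the proof is a few lines.

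\begin{proof}
Let $(\alpha^{0},\ldots,\alpha^{d^{\prime}}){\in}\CHA_{d}^{k}$.
Let $(\beta^{0},\ldots,\beta^{d^{\prime\prime}}){\in}\CHA_{d}^{n}$ be such that $g((\alpha^{0},\ldots,\alpha^{d^{\prime}})){=}(\beta^{0},\ldots,\beta^{d^{\prime\prime}})$.
Hence, $(\alpha^{0},\ldots,\alpha^{d^{\prime}}){\models_{k}}\sigma(\for_{n}((\beta^{0},\ldots,\beta^{d^{\prime\prime}})))$.
By Proposition~\ref{simple:proposition:about:formulas:associated:to:chains}, $d^{\prime}{=}d^{\prime\prime}$.
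Thus, $d^{\prime}{\geq}1$ iff $d^{\prime\prime}{\geq}1$, so the forward condition and the backward condition have the same content and it suffices to prove: if $d^{\prime}{\geq}1$ then $g((\alpha^{1},\ldots,\alpha^{d^{\prime}})){=}(\beta^{1},\ldots,\beta^{d^{\prime\prime}})$.
Suppose $d^{\prime}{\geq}1$.
Since $d^{\prime\prime}{=}d^{\prime}{\geq}1$, by definition of $\for_{n}$, $\for_{n}((\beta^{0},\ldots,\beta^{d^{\prime\prime}})){=}x_{1}^{\beta^{0}_{1}}\wedge\ldots\wedge x_{n}^{\beta^{0}_{n}}\wedge\lozenge\for_{n}((\beta^{1},\ldots,\beta^{d^{\prime\prime}}))$.
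Since $\sigma$ is a homomorphism, $\sigma(\for_{n}((\beta^{0},\ldots,\beta^{d^{\prime\prime}}))){=}\sigma(x_{1})^{\beta^{0}_{1}}\wedge\ldots\wedge\sigma(x_{n})^{\beta^{0}_{n}}\wedge\lozenge\sigma(\for_{n}((\beta^{1},\ldots,\beta^{d^{\prime\prime}})))$.
Hence, $(\alpha^{0},\ldots,\alpha^{d^{\prime}}){\models_{k}}\lozenge\sigma(\for_{n}((\beta^{1},\ldots,\beta^{d^{\prime\prime}})))$.
Since $d^{\prime}{\geq}1$, therefore $(\alpha^{1},\ldots,\alpha^{d^{\prime}}){\models_{k}}\sigma(\for_{n}((\beta^{1},\ldots,\beta^{d^{\prime\prime}})))$.
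Moreover, by Proposition~\ref{lemma:2:about:k:alpha:sigma:beta:B}, let $(\gamma^{0},\ldots,\gamma^{d^{\prime\prime\prime}}){\in}\CHA_{d}^{n}$ be such that $g((\alpha^{1},\ldots,\alpha^{d^{\prime}})){=}(\gamma^{0},\ldots,\gamma^{d^{\prime\prime\prime}})$.
Hence, $(\alpha^{1},\ldots,\alpha^{d^{\prime}}){\models_{k}}\sigma(\for_{n}((\gamma^{0},\ldots,\gamma^{d^{\prime\prime\prime}})))$.
Since $(\alpha^{1},\ldots,\alpha^{d^{\prime}}){\models_{k}}\sigma(\for_{n}((\beta^{1},\ldots,\beta^{d^{\prime\prime}})))$, therefore by Proposition~\ref{lemma:3:about:k:alpha:A:gamma:C:B:equals:C}, $(\gamma^{0},\ldots,\gamma^{d^{\prime\prime\prime}}){=}(\beta^{1},\ldots,\beta^{d^{\prime\prime}})$.
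Consequently, $g((\alpha^{1},\ldots,\alpha^{d^{\prime}})){=}(\beta^{1},\ldots,\beta^{d^{\prime\prime}})$.
Thus, both the forward condition and the backward condition hold and $g$ is a $d$-$(k,n)$-morphism.
\end{proof}
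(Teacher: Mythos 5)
Your proof is correct and follows essentially the same route as the paper's: unwind the definition of $g$ and of $\for_{n}$, use that $\sigma$ is a homomorphism to peel off the $\lozenge$-conjunct, pass to the tail chains via the chain semantics of $\lozenge$, and identify the result with $g((\alpha^{1},\ldots,\alpha^{d^{\prime}}))$ by the uniqueness given by Proposition~\ref{lemma:3:about:k:alpha:A:gamma:C:B:equals:C}. The only (harmless) difference is organizational: you invoke Proposition~\ref{simple:proposition:about:formulas:associated:to:chains} up front to get $d^{\prime}{=}d^{\prime\prime}$ and collapse the forward and backward conditions into one, whereas the paper extracts $d^{\prime\prime}{\geq}1$ from the satisfaction of the expanded formula and treats the two conditions in turn.
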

\begin{proof}
Let $(\alpha^{0},\ldots,\alpha^{d^{\prime}}){\in}\CHA_{d}^{k}$ and $(\beta^{0},\ldots,\beta^{d^{\prime\prime}}){\in}\CHA_{d}^{n}$ be such that $g((\alpha^{0},\ldots,\alpha^{d^{\prime}})){=}(\beta^{0},\ldots,\beta^{d^{\prime\prime}})$.
Hence, $(\alpha^{0},\ldots,\alpha^{d^{\prime}}){\models_{k}}\sigma(\for_{n}((\beta^{0},\ldots,\beta^{d^{\prime\prime}})))$.
Thus, if $d^{\prime\prime}{\geq}1$ then $(\alpha^{0},\ldots,\alpha^{d^{\prime}}){\models_{k}}\sigma(x_{1})^{\beta^{0}_{1}}\wedge\ldots\wedge\sigma(x_{n})^{\beta^{0}_{n}}\wedge\lozenge\sigma(\for_{n}((\beta^{1},\ldots,\beta^{d^{\prime\prime}})))$ else $(\alpha^{0},\ldots,\alpha^{d^{\prime}}){\models_{k}}\sigma(x_{1})^{\beta^{0}_{1}}\wedge\ldots\wedge\sigma(x_{n})^{\beta^{0}_{n}}\wedge\square\bot$.
Consequently, if $d^{\prime}{\geq}1$ then $d^{\prime\prime}{\geq}1$ and $(\alpha^{1},\ldots,\alpha^{d^{\prime}}){\models_{k}}\sigma(\for_{n}((\beta^{1},\ldots,\beta^{d^{\prime\prime}})))$, i.e. $g((\alpha^{1},\ldots,\alpha^{d^{\prime}})){=}(\beta^{1},\ldots,\beta^{d^{\prime\prime}})$.
Moreover, if $d^{\prime\prime}{\geq}1$ then $d^{\prime}{\geq}1$ and $(\alpha^{1},\ldots,\alpha^{d^{\prime}}){\models_{k}}\sigma(\for_{n}((\beta^{1},\ldots,\beta^{d^{\prime\prime}})))$, i.e. $g((\alpha^{1},\ldots,\alpha^{d^{\prime}})){=}(\beta^{1},\ldots,\beta^{d^{\prime\prime}})$.
\end{proof}
\begin{proposition}\label{second:lemma:about:g:alt1}
For all $(\alpha^{0},\ldots,\alpha^{d^{\prime}}),(\beta^{0},\ldots,\beta^{d^{\prime\prime}}){\in}\CHA_{d}^{k}$, if $g((\alpha^{0},\ldots,\alpha^{d^{\prime}})){=}g((\beta^{0},\ldots,\beta^{d^{\prime\prime}}))$ then for all $i{\in}\{1,\ldots,n\}$, $(\alpha^{0},\ldots,\alpha^{d^{\prime}}){\models_{k}}\sigma(x_{i})$ iff $(\beta^{0},\ldots,\beta^{d^{\prime\prime}}){\models_{k}}\sigma(x_{i})$.
\end{proposition}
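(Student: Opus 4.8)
The plan is to unfold the definition of $g$ and read off the ``zeroth layer'' of the chain it produces. Recall from Section~\ref{section:about:the:proof:of:the:existence:of:g:bounded:deterministic} that, for a chain $(\alpha^{0},\ldots,\alpha^{d^{\prime}}){\in}\CHA_{d}^{k}$, the value $g((\alpha^{0},\ldots,\alpha^{d^{\prime}}))$ is by definition the unique $(\gamma^{0},\ldots,\gamma^{d^{\prime\prime}}){\in}\CHA_{d}^{n}$ such that $(\alpha^{0},\ldots,\alpha^{d^{\prime}}){\models_{k}}\sigma(\for_{n}((\gamma^{0},\ldots,\gamma^{d^{\prime\prime}})))$, existence and uniqueness being guaranteed by Propositions~\ref{lemma:2:about:k:alpha:sigma:beta:B} and~\ref{lemma:3:about:k:alpha:A:gamma:C:B:equals:C}. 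The crucial elementary remark is that, whatever the length of the chain, the formula $\for_{n}((\gamma^{0},\ldots,\gamma^{d^{\prime\prime}}))$ has $x_{1}^{\gamma^{0}_{1}}\wedge\ldots\wedge x_{n}^{\gamma^{0}_{n}}$ among its conjuncts. Hence, applying the homomorphism $\sigma$ and the clauses defining ${\models_{k}}$ for conjunction and negation, $(\alpha^{0},\ldots,\alpha^{d^{\prime}}){\models_{k}}\sigma(x_{i})^{\gamma^{0}_{i}}$ for each $i{\in}\{1,\ldots,n\}$; equivalently, $(\alpha^{0},\ldots,\alpha^{d^{\prime}}){\models_{k}}\sigma(x_{i})$ iff $\gamma^{0}_{i}{=}1$.

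With this in hand the statement is immediate. First I would take arbitrary $(\alpha^{0},\ldots,\alpha^{d^{\prime}}),(\beta^{0},\ldots,\beta^{d^{\prime\prime}}){\in}\CHA_{d}^{k}$ with $g((\alpha^{0},\ldots,\alpha^{d^{\prime}})){=}g((\beta^{0},\ldots,\beta^{d^{\prime\prime}}))$ and denote this common value $(\gamma^{0},\ldots,\gamma^{d^{\prime\prime\prime}})$. Then I would apply the remark above twice, once to $(\alpha^{0},\ldots,\alpha^{d^{\prime}})$ and once to $(\beta^{0},\ldots,\beta^{d^{\prime\prime}})$, obtaining for every $i{\in}\{1,\ldots,n\}$ that $(\alpha^{0},\ldots,\alpha^{d^{\prime}}){\models_{k}}\sigma(x_{i})$ iff $\gamma^{0}_{i}{=}1$ iff $(\beta^{0},\ldots,\beta^{d^{\prime\prime}}){\models_{k}}\sigma(x_{i})$, which is exactly the desired equivalence.

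There is essentially no real obstacle here; the only point that needs a little care is the case split built into the definition of $\for_{n}$, namely whether $d^{\prime\prime}{\geq}1$ or $d^{\prime\prime}{=}0$. The argument is uniform across both cases because the head conjunct $x_{1}^{\gamma^{0}_{1}}\wedge\ldots\wedge x_{n}^{\gamma^{0}_{n}}$ is present in either case, so in fact no split is needed; if one prefers, this zeroth-layer fact can be isolated as a one-line observation, since it is precisely the piece of $\sigma(\for_{n}(\cdot))$ that already surfaces in the proof of Proposition~\ref{first:lemma:about:g:alt1}. In particular, no induction on chain length is required, in contrast to the neighbouring propositions.
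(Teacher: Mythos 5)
Your proof is correct and is essentially the paper's own argument: unfold the definition of $g$, observe that both chains satisfy $\sigma(\for_{n}((\gamma^{0},\ldots,\gamma^{d^{\prime\prime\prime}})))$ for the common value $(\gamma^{0},\ldots,\gamma^{d^{\prime\prime\prime}})$, and read off from the head conjunct that each chain satisfies $\sigma(x_{i})^{\gamma^{0}_{i}}$ for every $i$, whence the biconditional. Your extra remark that the case split in the definition of $\for_{n}$ is immaterial (and that no induction is needed) matches the paper, which likewise proceeds without any induction.
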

\begin{proof}
Let $(\alpha^{0},\ldots,\alpha^{d^{\prime}}),(\beta^{0},\ldots,\beta^{d^{\prime\prime}}){\in}\CHA_{d}^{k}$.
Suppose $g((\alpha^{0},\ldots,\alpha^{d^{\prime}})){=}g((\beta^{0},\ldots,\beta^{d^{\prime\prime}}))$.
Hence, let $(\gamma^{0},\ldots,\gamma^{d^{\prime\prime\prime}}){\in}\CHA_{d}^{n}$ be such that $g((\alpha^{0},\ldots,\alpha^{d^{\prime}})){=}(\gamma^{0},\ldots,\gamma^{d^{\prime\prime\prime}})$ and $g((\beta^{0},\ldots,\beta^{d^{\prime\prime}})){=}(\gamma^{0},\ldots,\gamma^{d^{\prime\prime\prime}})$.
Thus, $(\alpha^{0},\ldots,\alpha^{d^{\prime}}){\models_{k}}\sigma(\for_{n}((\gamma^{0},\ldots,\gamma^{d^{\prime\prime\prime}})))$ and $(\beta^{0},\ldots,\beta^{d^{\prime}}){\models_{k}}\sigma(\for_{n}((\gamma^{0},\ldots,\gamma^{d^{\prime\prime\prime}})))$.
Consequently, $(\alpha^{0},\ldots,\alpha^{d^{\prime}}){\models_{k}}\sigma(x_{1})^{\gamma^{0}_{1}}\wedge\ldots\wedge\sigma(x_{n})^{\gamma^{0}_{n}}$ and $(\beta^{0},\ldots,\beta^{d^{\prime}}){\models_{k}}\sigma(x_{1})^{\gamma^{0}_{1}}\wedge\ldots\wedge\sigma(x_{n})^{\gamma^{0}_{n}}$.
Hence, for all $i{\in}\{1,\ldots,n\}$, $(\alpha^{0},\ldots,\alpha^{d^{\prime}}){\models_{k}}\sigma(x_{i})^{\gamma^{0}_{i}}$ and $(\beta^{0},\ldots,\beta^{d^{\prime\prime}}){\models_{k}}\sigma(x_{i})^{\gamma^{0}_{i}}$.
Thus, for all $i{\in}\{1,\ldots,n\}$, $(\alpha^{0},\ldots,\alpha^{d^{\prime}}){\models_{k}}\sigma(x_{i})$ iff $(\beta^{0},\ldots,\beta^{d^{\prime\prime}}){\models_{k}}\sigma(x_{i})$.
\end{proof}
\section{Definition of the function $f$ used in Section~\ref{new:section:Main:result}}\label{section:about:the:proof:of:the:existence:of:f:bounded:deterministic}
Let $n{\geq}1$.
Let $d{\geq}2$.
Let $(k,\sigma){\in}\SUB_{n}$ be such that $k{\geq}n$.
Let $g$ be a $d$-$(k,n)$-morphism such that for all $(\alpha^{0},\ldots,\alpha^{d^{\prime}}),(\beta^{0},\ldots,\beta^{d^{\prime\prime}}){\in}\CHA_{d}^{k}$, if $g((\alpha^{0},\ldots,\alpha^{d^{\prime}})){=}g((\beta^{0},\ldots,\beta^{d^{\prime\prime}}))$ then for all $i{\in}\{1,\ldots,n\}$, $(\alpha^{0},\ldots,\alpha^{d^{\prime}}){\models_{k}}\sigma(x_{i})$ iff $(\beta^{0},\ldots,\beta^{d^{\prime\prime}}){\models_{k}}\sigma(x_{i})$.
The proof of the existence of $g$ has been presented in Section~\ref{section:about:the:proof:of:the:existence:of:g:bounded:deterministic}.
In order to define the function $f$ used in Section~\ref{new:section:Main:result}, we need define for each $d^{\prime}{\in}\N$ such that $d^{\prime}{<}d$, a function $f_{d^{\prime}}$ from $\CHA_{=d^{\prime}}^{k}$ to $\CHA_{=d^{\prime}}^{n}$.
Firstly, we define the function $f_{0}$.
Secondly, for each $d^{\prime}{\in}\N$ such that $1{\leq}d^{\prime}{<}d$, assuming the function $f_{d^{\prime}{-}1}$ has been defined, we define the function $f_{d^{\prime}}$.
Let $U{=}\{g((\alpha^{0})):\ (\alpha^{0}){\in}\CHA_{=0}^{k}\}$.
By Proposition~\ref{simple:proposition:about:formulas:associated:to:chains}, $U{\subseteq}\CHA_{=0}^{n}$.
Let $h$ be a function from $U$ to $\CHA_{=0}^{k}$ such that for all $(\alpha^{0}){\in}\CHA_{=0}^{k}$, $g(h(g((\alpha^{0})))){=}g((\alpha^{0}))$.
Obviously, $h$ is injective.
Hence, ${\parallel}U{\parallel}{=}{\parallel}\{h(g((\alpha^{0}))):\ (\alpha^{0}){\in}\CHA_{=0}^{k}\}{\parallel}$.
Since $k{\geq}n$, therefore ${\parallel}\CHA_{=0}^{n}\setminus U{\parallel}{\leq}{\parallel}\CHA_{=0}^{k}\setminus\{h(g((\alpha^{0}))):\ (\alpha^{0}){\in}\CHA_{=0}^{k}\}{\parallel}$.
Let $S$ be a subset of $\CHA_{=0}^{k}\setminus\{h(g((\alpha^{0}))):\ (\alpha^{0}){\in}\CHA_{=0}^{k}\}$ such that ${\parallel}S{\parallel}{=}{\parallel}\CHA_{=0}^{n}\setminus U{\parallel}$.
Let $f^{\ast}_{0}$ be a one-to-one correspondence between $S$ and $\CHA_{=0}^{n}\setminus U$.
{\bf Now, we define the function $f_{0}$.}
Let $f_{0}$ be the function from $\CHA_{=0}^{k}$ to $\CHA_{=0}^{n}$ such that
\begin{itemize}
\item if $(\alpha^{0}){\in}S$ then $f_{0}((\alpha^{0})){=}f^{\ast}_{0}((\alpha^{0}))$ else $f_{0}((\alpha^{0})){=}g((\alpha^{0}))$,
\end{itemize}
where $(\alpha^{0})$ ranges over $\CHA_{=0}^{k}$.
Lemmas~\ref{first:lemma:about:f:ast:alt1} and~\ref{second:lemma:about:f:ast:alt1} show that $f_{0}$ possesses interesting properties.
%
%
\begin{lemma}\label{first:lemma:about:f:ast:alt1}
$f_{0}$ is surjective.
\end{lemma}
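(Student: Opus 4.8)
The plan is to show that every $(\beta^{0}){\in}\CHA_{=0}^{n}$ is in the range of $f_{0}$, splitting on whether $(\beta^{0})$ belongs to $U$ or to its complement $\CHA_{=0}^{n}\setminus U$. First I would handle the case $(\beta^{0}){\in}\CHA_{=0}^{n}\setminus U$: by construction $f^{\ast}_{0}$ is a one-to-one correspondence between $S$ and $\CHA_{=0}^{n}\setminus U$, so there exists $(\alpha^{0}){\in}S$ with $f^{\ast}_{0}((\alpha^{0})){=}(\beta^{0})$, and since $(\alpha^{0}){\in}S$ the defining clause of $f_{0}$ gives $f_{0}((\alpha^{0})){=}f^{\ast}_{0}((\alpha^{0})){=}(\beta^{0})$.

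Next I would handle the case $(\beta^{0}){\in}U$. By definition of $U$, there exists $(\alpha^{0}){\in}\CHA_{=0}^{k}$ such that $g((\alpha^{0})){=}(\beta^{0})$. The subtlety is that I must produce a preimage of $(\beta^{0})$ under $f_{0}$, and $f_{0}$ only agrees with $g$ off the set $S$; so I should choose the witness to lie outside $S$. The natural choice is $h((\beta^{0}))$: since $(\beta^{0}){\in}U$, $h((\beta^{0}))$ is defined and lies in $\CHA_{=0}^{k}$, and by the defining property of $h$ we have $g(h((\beta^{0}))){=}g(h(g((\alpha^{0})))){=}g((\alpha^{0})){=}(\beta^{0})$. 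Crucially, $h((\beta^{0}))$ belongs to $\{h(g((\alpha^{0}))):\ (\alpha^{0}){\in}\CHA_{=0}^{k}\}$, whereas $S$ was chosen as a subset of the complement $\CHA_{=0}^{k}\setminus\{h(g((\alpha^{0}))):\ (\alpha^{0}){\in}\CHA_{=0}^{k}\}$; hence $h((\beta^{0})){\notin}S$. Therefore the defining clause of $f_{0}$ yields $f_{0}(h((\beta^{0}))){=}g(h((\beta^{0}))){=}(\beta^{0})$.

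Combining the two cases, every element of $\CHA_{=0}^{n}$ has a preimage under $f_{0}$, so $f_{0}$ is surjective. I expect the only delicate point to be the verification that $h((\beta^{0}))$ escapes $S$ — this is precisely why $S$ was carved out of the complement of the image of $h$, and it is what makes the "else" branch of $f_{0}$'s definition available for witnessing surjectivity onto $U$; the rest is bookkeeping with the defining properties of $h$ and $f^{\ast}_{0}$.
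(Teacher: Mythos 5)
Your proof is correct and follows essentially the same route as the paper's: case split on whether $(\beta^{0})$ lies in $\CHA_{=0}^{n}\setminus U$ (witnessed via the bijection $f^{\ast}_{0}$ on $S$) or in $U$ (witnessed by $h((\beta^{0}))$, using $g(h(g((\alpha^{0})))){=}g((\alpha^{0}))$). You even make explicit the point the paper leaves implicit, namely that $h((\beta^{0}))$ avoids $S$ because $S$ was chosen inside the complement of the image of $h$, so the $g$-branch of $f_{0}$'s definition applies.
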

\begin{lemma}\label{second:lemma:about:f:ast:alt1}
For all $(\alpha^{0}),(\beta^{0}){\in}\CHA_{=0}^{k}$, if $f_{0}((\alpha^{0})){=}f_{0}((\beta^{0}))$ then $g((\alpha^{0})){=}g((\beta^{0}))$.
\end{lemma}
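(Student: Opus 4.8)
The plan is to argue by a short case analysis on whether $(\alpha^{0})$ and $(\beta^{0})$ belong to $S$, exploiting the fact that the two clauses in the definition of $f_{0}$ have disjoint ranges.

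First I would record the key disjointness observation: by construction $f^{\ast}_{0}$ is a one-to-one correspondence between $S$ and $\CHA_{=0}^{n}\setminus U$, hence takes all its values in $\CHA_{=0}^{n}\setminus U$, whereas $g$ takes all its values in $U$ (this is precisely the definition of $U$). Consequently, for any $(\alpha^{0}){\in}\CHA_{=0}^{k}$, the value $f_{0}((\alpha^{0}))$ lies in $\CHA_{=0}^{n}\setminus U$ when $(\alpha^{0}){\in}S$ and in $U$ when $(\alpha^{0}){\notin}S$. Since $U$ and $\CHA_{=0}^{n}\setminus U$ are disjoint, from the hypothesis $f_{0}((\alpha^{0})){=}f_{0}((\beta^{0}))$ I would deduce that $(\alpha^{0}){\in}S$ if and only if $(\beta^{0}){\in}S$, which eliminates the two ``mixed'' cases.

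It then remains to treat the two uniform cases. If $(\alpha^{0}){\notin}S$ and $(\beta^{0}){\notin}S$, then $f_{0}((\alpha^{0})){=}g((\alpha^{0}))$ and $f_{0}((\beta^{0})){=}g((\beta^{0}))$, so the hypothesis immediately gives $g((\alpha^{0})){=}g((\beta^{0}))$. If $(\alpha^{0}){\in}S$ and $(\beta^{0}){\in}S$, then $f^{\ast}_{0}((\alpha^{0})){=}f^{\ast}_{0}((\beta^{0}))$; since $f^{\ast}_{0}$, being a one-to-one correspondence, is injective, this yields $(\alpha^{0}){=}(\beta^{0})$ and therefore $g((\alpha^{0})){=}g((\beta^{0}))$. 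In all cases the conclusion holds.

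I do not anticipate any genuine difficulty: the lemma is essentially bookkeeping. The only point that needs a moment's attention is the disjointness $U\cap(\CHA_{=0}^{n}\setminus U){=}\emptyset$, which is what makes the mixed cases vacuous, together with the fact that a one-to-one correspondence is in particular injective; everything else is read off directly from the definition of $f_{0}$.
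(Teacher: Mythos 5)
Your proof is correct and essentially coincides with the paper's own argument: the paper runs a case analysis on membership of $(\alpha^{0})$ and $(\beta^{0})$ in $S$, dismissing the mixed case by exactly the disjointness of $U$ and $\CHA_{=0}^{n}\setminus U$ that you isolate up front, and handling the two uniform cases via injectivity of $f^{\ast}_{0}$ and the definition of $f_{0}$, just as you do. Your reorganization (eliminating the mixed cases before the case split) is only a cosmetic difference.
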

Let $d^{\prime}{\in}\N$ be such that $1{\leq}d^{\prime}{<}d$ and a surjective function $f_{d^{\prime}-1}$ from $\CHA_{=d^{\prime}-1}^{k}$ to $\CHA_{=d^{\prime}-1}^{n}$ has been defined such that for all $(\alpha^{1},\ldots,\alpha^{d^{\prime}}),(\beta^{1},\ldots,\beta^{d^{\prime}}){\in}\CHA_{=d^{\prime}-1}^{k}$, if $f_{d^{\prime}-1}((\alpha^{1},\ldots\alpha^{d^{\prime}})){=}f_{d^{\prime}-1}((\beta^{1},\ldots,\beta^{d^{\prime}}))$ then $g((\alpha^{1},\ldots\alpha^{d^{\prime}})){=}g((\beta^{1},\ldots,\beta^{d^{\prime}}))$.
For all $(\delta^{1},\ldots,\delta^{d^{\prime}}){\in}\CHA_{=d^{\prime}-1}^{n}$, let $S((\delta^{1},\ldots,\delta^{d^{\prime}})){=}\{(\beta^{0},\ldots,\beta^{d^{\prime}}):\ (\beta^{0},\ldots,\beta^{d^{\prime}}){\in}\CHA_{=d^{\prime}}^{k}$ and $f_{d^{\prime}{-}1}((\beta^{1},\ldots,\beta^{d^{\prime}})){=}(\delta^{1},\ldots,\delta^{d^{\prime}})\}$ and $T((\delta^{1},\ldots,\delta^{d^{\prime}})){=}\{(\epsilon^{0},\ldots,\epsilon^{d^{\prime}}):\ (\epsilon^{0},\ldots,\epsilon^{d^{\prime}}){\in}\CHA_{=d^{\prime}}^{n}$ and $(\epsilon^{1},\ldots,\epsilon^{d^{\prime}}){=}(\delta^{1},\ldots,\delta^{d^{\prime}})\}$.
For all $(\delta^{1},\ldots,\delta^{d^{\prime}}){\in}\CHA_{=d^{\prime}-1}^{n}$, let $\sim_{(\delta^{1},\ldots,\delta^{d^{\prime}})}$ be the equivalence relation on $S((\delta^{1},\ldots,\delta^{d^{\prime}}))$ such that
\begin{itemize}
\item $(\beta^{0},\ldots,\beta^{d^{\prime}}){\sim_{(\delta^{1},\ldots,\delta^{d^{\prime}})}}(\gamma^{0},\ldots,\gamma^{d^{\prime}})$ iff $g((\beta^{0},\ldots,\beta^{d^{\prime}})){=}g((\gamma^{0},\ldots,\gamma^{d^{\prime}}))$,
\end{itemize}
where $(\beta^{0},\ldots,\beta^{d^{\prime}}),(\gamma^{0},\ldots,\gamma^{d^{\prime}})$ range over $S((\delta^{1},\ldots,\delta^{d^{\prime}}))$.
The next result will allow us to use Proposition~\ref{main:one}.
\begin{proposition}\label{cardinality:ordered:appropriate:way}
For all $(\delta^{1},\ldots,\delta^{d^{\prime}}){\in}\CHA_{=d^{\prime}-1}^{n}$,
\begin{enumerate}
\item ${\parallel}S((\delta^{1},\ldots,\delta^{d^{\prime}}))/{\sim_{(\delta^{1},\ldots,\delta^{d^{\prime}})}}{\parallel}{\leq}{\parallel}T((\delta^{1},\ldots,\delta^{d^{\prime}})){\parallel}$,
\item ${\parallel}T((\delta^{1},\ldots,\delta^{d^{\prime}})){\parallel}{\leq}{\parallel}S((\delta^{1},\ldots,\delta^{d^{\prime}})){\parallel}$.
\end{enumerate}
\end{proposition}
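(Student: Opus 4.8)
The plan is to evaluate both sides directly in terms of the cardinalities of $\BIT_{n}$ and $\BIT_{k}$. First I would note that $T((\delta^{1},\ldots,\delta^{d^{\prime}}))$ is exactly the set of chains of the form $(\epsilon^{0},\delta^{1},\ldots,\delta^{d^{\prime}})$ with $\epsilon^{0}$ ranging over $\BIT_{n}$, so ${\parallel}T((\delta^{1},\ldots,\delta^{d^{\prime}})){\parallel}{=}{\parallel}\BIT_{n}{\parallel}$. For item~$(2)$, since $f_{d^{\prime}-1}$ is surjective there exists $(\beta^{1},\ldots,\beta^{d^{\prime}}){\in}\CHA_{=d^{\prime}-1}^{k}$ with $f_{d^{\prime}-1}((\beta^{1},\ldots,\beta^{d^{\prime}})){=}(\delta^{1},\ldots,\delta^{d^{\prime}})$; then every chain $(\beta^{0},\beta^{1},\ldots,\beta^{d^{\prime}})$ with $\beta^{0}{\in}\BIT_{k}$ belongs to $S((\delta^{1},\ldots,\delta^{d^{\prime}}))$, so ${\parallel}S((\delta^{1},\ldots,\delta^{d^{\prime}})){\parallel}{\geq}{\parallel}\BIT_{k}{\parallel}$. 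Since $k{\geq}n$, we get ${\parallel}\BIT_{k}{\parallel}{\geq}{\parallel}\BIT_{n}{\parallel}$, and item~$(2)$ follows.

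For item~$(1)$, observe that $\sim_{(\delta^{1},\ldots,\delta^{d^{\prime}})}$ identifies two chains exactly when $g$ assigns them the same value, so the quotient set $S((\delta^{1},\ldots,\delta^{d^{\prime}}))/{\sim_{(\delta^{1},\ldots,\delta^{d^{\prime}})}}$ is in bijection with the image of $S((\delta^{1},\ldots,\delta^{d^{\prime}}))$ under $g$; hence it suffices to bound the number of distinct values taken by $g$ on $S((\delta^{1},\ldots,\delta^{d^{\prime}}))$ by ${\parallel}\BIT_{n}{\parallel}$. The key step is to show that the ``tail'' $g$-value is constant on $S((\delta^{1},\ldots,\delta^{d^{\prime}}))$: every chain $(\beta^{0},\ldots,\beta^{d^{\prime}})$ in that set has $f_{d^{\prime}-1}((\beta^{1},\ldots,\beta^{d^{\prime}})){=}(\delta^{1},\ldots,\delta^{d^{\prime}})$, so by the induction hypothesis on $f_{d^{\prime}-1}$ the chain $g((\beta^{1},\ldots,\beta^{d^{\prime}}))$ is one and the same element $(\eta^{1},\ldots,\eta^{d^{\prime}}){\in}\CHA_{=d^{\prime}-1}^{n}$ (using Proposition~\ref{simple:proposition:about:formulas:associated:to:chains} for the length) regardless of the chosen chain. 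Now $g$ is a $d$-$(k,n)$-morphism, so writing $g((\beta^{0},\ldots,\beta^{d^{\prime}})){=}(\zeta^{0},\ldots,\zeta^{d^{\prime}})$ and applying the forward condition, which is available since $d^{\prime}{\geq}1$, yields $(\zeta^{1},\ldots,\zeta^{d^{\prime}}){=}g((\beta^{1},\ldots,\beta^{d^{\prime}})){=}(\eta^{1},\ldots,\eta^{d^{\prime}})$. Thus $g((\beta^{0},\ldots,\beta^{d^{\prime}}))$ is of the form $(\zeta^{0},\eta^{1},\ldots,\eta^{d^{\prime}})$ with $\zeta^{0}{\in}\BIT_{n}$, so $g$ takes at most ${\parallel}\BIT_{n}{\parallel}{=}{\parallel}T((\delta^{1},\ldots,\delta^{d^{\prime}})){\parallel}$ distinct values on $S((\delta^{1},\ldots,\delta^{d^{\prime}}))$, which proves item~$(1)$.

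The only delicate point is item~$(1)$: one must be sure that the tail $g$-value is genuinely the same for all chains in $S((\delta^{1},\ldots,\delta^{d^{\prime}}))$ — this is precisely where the inductive property relating $f_{d^{\prime}-1}$ and $g$ is used — and then extract the head coordinate correctly via the forward condition of the morphism, while keeping in mind the harmless degenerate case where $S((\delta^{1},\ldots,\delta^{d^{\prime}}))$ is empty, for which item~$(1)$ holds trivially. Everything else is elementary counting of chains together with the assumption $k{\geq}n$.
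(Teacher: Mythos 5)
Your proof is correct and follows essentially the same route as the paper: both items rest on ${\parallel}T((\delta^{1},\ldots,\delta^{d^{\prime}})){\parallel}{=}2^{n}$, on the surjectivity of $f_{d^{\prime}-1}$ together with $k{\geq}n$ for item~$(2)$, and, for item~$(1)$, on the fact that the inductive property forces all tails of chains in $S((\delta^{1},\ldots,\delta^{d^{\prime}}))$ to share one $g$-value while the morphism condition then determines $g$ on $S((\delta^{1},\ldots,\delta^{d^{\prime}}))$ up to its head in $\BIT_{n}$ — the paper merely phrases item~$(1)$ as a contradiction with $p{>}{\parallel}T((\delta^{1},\ldots,\delta^{d^{\prime}})){\parallel}$ pairwise non-equivalent chains instead of your direct count of the image of $g$. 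One cosmetic remark: the appeal to Proposition~\ref{simple:proposition:about:formulas:associated:to:chains} is superfluous (and strictly speaking inapplicable here, since in this section $g$ is an arbitrary $d$-$(k,n)$-morphism rather than the one built from $\sigma$); the forward condition alone gives everything you need about the tail.
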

\begin{proof}
Let $(\delta^{1},\ldots,\delta^{d^{\prime}}){\in}\CHA_{=d^{\prime}-1}^{n}$.
Obviously, ${\parallel}T((\delta^{1},\ldots,\delta^{d^{\prime}})){\parallel}{=}2^{n}$.
\\
\\
$(i)$ ---
For the sake of the contradiction, suppose ${\parallel}S((\delta^{1},\ldots,\delta^{d^{\prime}}))/{\sim_{(\delta^{1},\ldots,\delta^{d^{\prime}})}}{\parallel}{>}{\parallel}T((\delta^{1},\ldots,\delta^{d^{\prime}})){\parallel}$.
Let $p{\in}\N$ and $(\beta^{0,1},\ldots,\beta^{d^{\prime},1}),\ldots,(\beta^{0,p},\ldots,\beta^{d^{\prime},p}){\in}S((\delta^{1},\ldots,\delta^{d^{\prime}}))$ be such that $p{>}{\parallel}T((\delta^{1},\ldots,\delta^{d^{\prime}})){\parallel}$ and for all $q,r{\in}\N$, if $1{\leq}q,r{\leq}p$ and $q{\not=}r$ then $(\beta^{0,q},\ldots,\beta^{d^{\prime},q}){\not\sim_{(\delta^{1},\ldots,\delta^{d^{\prime}})}}(\beta^{0,r},\ldots,\beta^{d^{\prime},r})$.
Thus, $f_{d^{\prime}{-}1}((\beta^{1,1},\ldots,\beta^{d^{\prime},1})){=}(\delta^{1},\ldots,\delta^{d^{\prime}})$, $\ldots$, $f_{d^{\prime}{-}1}((\beta^{1,p},\ldots,\beta^{d^{\prime},p})){=}(\delta^{1},\ldots,\delta^{d^{\prime}})$.
Consequently, let $(\epsilon^{1},\ldots,\epsilon^{d^{\prime}}){\in}\CHA_{=d^{\prime}-1}^{n}$ be such that $g((\beta^{1,1},\ldots,\beta^{d^{\prime},1})){=}(\epsilon^{1},\ldots,\epsilon^{d^{\prime}})$, $\ldots$, $g((\beta^{1,p},\ldots,\beta^{d^{\prime},p})){=}(\epsilon^{1},\ldots,\epsilon^{d^{\prime}})$.
Since $g$ is a $d$-$(k,n)$-morphism, therefore let $\epsilon^{0,1},\ldots,\epsilon^{0,p}{\in}\BIT_{n}$ be such that $g((\beta^{0,1},\beta^{1,1},\ldots,\beta^{d^{\prime},1})){=}(\epsilon^{0,1},\epsilon^{1},\ldots,\epsilon^{d^{\prime}})$, $\ldots$, $g((\beta^{0,p},\beta^{1,p},\ldots,\beta^{d^{\prime},p})){=}(\epsilon^{0,p},\epsilon^{1},\ldots,\epsilon^{d^{\prime}})$.
Since for all $q,r{\in}\N$, if $1{\leq}q,r{\leq}p$ and $q{\not=}r$ then $(\beta^{0,q},\ldots,\beta^{d^{\prime},q}){\not\sim_{(\delta^{1},\ldots,\delta^{d^{\prime}})}}(\beta^{0,r},\ldots,\beta^{d^{\prime},r})$, therefore for all $q,r{\in}\N$, if $1{\leq}q,r{\leq}p$ and $q{\not=}r$ then $g((\beta^{0,q},\ldots,\beta^{d^{\prime},q}){\not=}g((\beta^{0,r},\ldots,\beta^{d^{\prime},r}))$.
Since $g((\beta^{0,1},\beta^{1,1},\ldots,\beta^{d^{\prime},1})){=}(\epsilon^{0,1},\epsilon^{1},\ldots,\epsilon^{d^{\prime}})$, $\ldots$, $g((\beta^{0,p},\beta^{1,p},\ldots,\beta^{d^{\prime},p})){=}(\epsilon^{0,p},\epsilon^{1},\ldots,\epsilon^{d^{\prime}})$, therefore for all $q,r{\in}\N$, if $1{\leq}q,r{\leq}p$ and $q{\not=}r$ then $\epsilon^{0,q}{\not=}\epsilon^{0,r}$.
Hence, $p{\leq}2^{n}$.
Since ${\parallel}T((\delta^{1},\ldots,\delta^{d^{\prime}})){\parallel}{=}2^{n}$, therefore $p{\leq}{\parallel}T((\delta^{1},\ldots,\delta^{d^{\prime}})){\parallel}$: a contradiction.
\\
\\
$(ii)$ ---
Since $f_{d^{\prime}{-}1}$ is surjective, therefore obviously, ${\parallel}S((\delta^{1},\ldots,\delta^{d^{\prime}})){\parallel}{\geq}2^{k}$.
Since $k{\geq}n$ and ${\parallel}T((\delta^{1},\ldots,\delta^{d^{\prime}})){\parallel}{=}2^{n}$, therefore ${\parallel}T((\delta^{1},\ldots,\delta^{d^{\prime}})){\parallel}{\leq}{\parallel}S((\delta^{1},\ldots,\delta^{d^{\prime}})){\parallel}$.
\end{proof}
Hence, for all $(\delta^{1},\ldots,\delta^{d^{\prime}}){\in}\CHA_{=d^{\prime}-1}^{n}$, by Propositions~\ref{main:one} and~\ref{cardinality:ordered:appropriate:way}, let $f^{(\delta^{1},\ldots,\delta^{d^{\prime}})}_{d^{\prime}}$ be a surjective function from $S((\delta^{1},\ldots,\delta^{d^{\prime}}))$ to $T((\delta^{1},\ldots,\delta^{d^{\prime}}))$ such that for all $(\beta^{0},\ldots,\beta^{d^{\prime}}),(\gamma^{0},\ldots,\gamma^{d^{\prime}}){\in}S((\delta^{1},\ldots,\delta^{d^{\prime}}))$, if $f^{(\delta^{1},\ldots,\delta^{d^{\prime}})}_{d^{\prime}}((\beta^{0},\ldots,\beta^{d^{\prime}})){=}f^{(\delta^{1},\ldots,\delta^{d^{\prime}})}_{d^{\prime}}((\gamma^{0},\ldots,\gamma^{d^{\prime}}))$ then $(\beta^{0},\ldots,\beta^{d^{\prime}}){\sim_{(\delta^{1},\ldots,\delta^{d^{\prime}})}}(\gamma^{0},\ldots,\gamma^{d^{\prime}})$.
Notice that this is the only place in the paper where we use Proposition~\ref{main:one}.
{\bf Now, we define the function $f_{d^{\prime}}$.}
Let $f_{d^{\prime}}$ be the function from $\CHA_{=d^{\prime}}^{k}$ to $\CHA_{=d^{\prime}}^{n}$ such that
\begin{itemize}
\item $f_{d^{\prime}}((\beta^{0},\ldots,\beta^{d^{\prime}})){=}f_{d^{\prime}}^{f_{d^{\prime}-1}((\beta^{1},\ldots,\beta^{d^{\prime}}))}((\beta^{0},\ldots,\beta^{d^{\prime}}))$,
\end{itemize}
where $(\beta^{0},\ldots,\beta^{d^{\prime}})$ ranges over $\CHA_{=d^{\prime}}^{k}$.
Lemmas~\ref{first:lemma:about:f:ast:alt1:bis} and~\ref{second:lemma:about:f:ast:alt1:bis} show that $f_{d^{\prime}}$ possesses interesting properties.
%
%
\begin{lemma}\label{first:lemma:about:f:ast:alt1:bis}
$f_{d^{\prime}}$ is surjective.
\end{lemma}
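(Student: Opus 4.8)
The plan is to unwind the definition of $f_{d^{\prime}}$ and reduce its surjectivity to the two surjectivity facts already available: the surjectivity of $f_{d^{\prime}-1}$, which is part of the inductive hypothesis spelled out just before Proposition~\ref{cardinality:ordered:appropriate:way}, and the surjectivity of each partial map $f^{(\delta^{1},\ldots,\delta^{d^{\prime}})}_{d^{\prime}}$ from $S((\delta^{1},\ldots,\delta^{d^{\prime}}))$ onto $T((\delta^{1},\ldots,\delta^{d^{\prime}}))$, which was obtained from Propositions~\ref{main:one} and~\ref{cardinality:ordered:appropriate:way}. The proof is then pure bookkeeping.

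Concretely, I would fix an arbitrary $(\epsilon^{0},\ldots,\epsilon^{d^{\prime}}){\in}\CHA_{=d^{\prime}}^{n}$ and put $(\delta^{1},\ldots,\delta^{d^{\prime}}){=}(\epsilon^{1},\ldots,\epsilon^{d^{\prime}})$, an element of $\CHA_{=d^{\prime}-1}^{n}$. By definition of $T((\delta^{1},\ldots,\delta^{d^{\prime}}))$, the chain $(\epsilon^{0},\ldots,\epsilon^{d^{\prime}})$ lies in $T((\delta^{1},\ldots,\delta^{d^{\prime}}))$, since its tail equals $(\delta^{1},\ldots,\delta^{d^{\prime}})$. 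Using the surjectivity of $f^{(\delta^{1},\ldots,\delta^{d^{\prime}})}_{d^{\prime}}$ onto $T((\delta^{1},\ldots,\delta^{d^{\prime}}))$, I would pick $(\beta^{0},\ldots,\beta^{d^{\prime}}){\in}S((\delta^{1},\ldots,\delta^{d^{\prime}}))$ with $f^{(\delta^{1},\ldots,\delta^{d^{\prime}})}_{d^{\prime}}((\beta^{0},\ldots,\beta^{d^{\prime}})){=}(\epsilon^{0},\ldots,\epsilon^{d^{\prime}})$. By the very definition of $S((\delta^{1},\ldots,\delta^{d^{\prime}}))$ we have $f_{d^{\prime}-1}((\beta^{1},\ldots,\beta^{d^{\prime}})){=}(\delta^{1},\ldots,\delta^{d^{\prime}})$, so the superscript selected in the defining clause of $f_{d^{\prime}}$ at $(\beta^{0},\ldots,\beta^{d^{\prime}})$ is exactly $(\delta^{1},\ldots,\delta^{d^{\prime}})$; hence $f_{d^{\prime}}((\beta^{0},\ldots,\beta^{d^{\prime}})){=}f^{(\delta^{1},\ldots,\delta^{d^{\prime}})}_{d^{\prime}}((\beta^{0},\ldots,\beta^{d^{\prime}})){=}(\epsilon^{0},\ldots,\epsilon^{d^{\prime}})$. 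Since $(\epsilon^{0},\ldots,\epsilon^{d^{\prime}})$ was arbitrary, $f_{d^{\prime}}$ is surjective. (The implicit nonemptiness of $S((\delta^{1},\ldots,\delta^{d^{\prime}}))$ is already subsumed by the existence of the surjection $f^{(\delta^{1},\ldots,\delta^{d^{\prime}})}_{d^{\prime}}$ onto the nonempty set $T((\delta^{1},\ldots,\delta^{d^{\prime}}))$, or alternatively it follows from the surjectivity of $f_{d^{\prime}-1}$.)

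I do not expect any genuine obstacle here: everything reduces to matching the index $(\delta^{1},\ldots,\delta^{d^{\prime}})$ of the partial surjection used to build the preimage against the superscript $f_{d^{\prime}-1}((\beta^{1},\ldots,\beta^{d^{\prime}}))$ that appears in the definition of $f_{d^{\prime}}$. The only thing that needs care is keeping the tuples and their tails straight; the real content of the construction lives in Proposition~\ref{cardinality:ordered:appropriate:way} and the appeal to Proposition~\ref{main:one}, both of which are already in hand.
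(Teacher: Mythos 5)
Your proposal is correct and matches the paper's own proof essentially line for line: both pick an arbitrary chain in $\CHA_{=d^{\prime}}^{n}$, observe it lies in $T$ of its tail, pull back through the surjection $f^{(\delta^{1},\ldots,\delta^{d^{\prime}})}_{d^{\prime}}$ to an element of $S((\delta^{1},\ldots,\delta^{d^{\prime}}))$, and use the defining clause of $f_{d^{\prime}}$ together with $f_{d^{\prime}-1}((\beta^{1},\ldots,\beta^{d^{\prime}})){=}(\delta^{1},\ldots,\delta^{d^{\prime}})$ to conclude. No gaps.
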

\begin{lemma}\label{second:lemma:about:f:ast:alt1:bis}
For all $(\alpha^{0},\ldots,\alpha^{d^{\prime}}),(\beta^{0},\ldots,\beta^{d^{\prime}}){\in}\CHA_{=d^{\prime}}^{k}$, if $f_{d^{\prime}}((\alpha^{0},\ldots,\alpha^{d^{\prime}})){=}f_{d^{\prime}}((\beta^{0},\ldots,\beta^{d^{\prime}}))$ then $g((\alpha^{0},\ldots,\alpha^{d^{\prime}})){=}g((\beta^{0},\ldots,\beta^{d^{\prime}}))$.
\end{lemma}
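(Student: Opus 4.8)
The plan is to unwind the definitions of $f_{d^{\prime}}$, of the auxiliary maps $f_{d^{\prime}}^{(\delta^{1},\ldots,\delta^{d^{\prime}})}$ and of the sets $S((\delta^{1},\ldots,\delta^{d^{\prime}}))$ and $T((\delta^{1},\ldots,\delta^{d^{\prime}}))$, and then to invoke the $\sim$-preservation property that each $f_{d^{\prime}}^{(\delta^{1},\ldots,\delta^{d^{\prime}})}$ was chosen to satisfy. So, suppose $(\alpha^{0},\ldots,\alpha^{d^{\prime}}),(\beta^{0},\ldots,\beta^{d^{\prime}}){\in}\CHA_{=d^{\prime}}^{k}$ are such that $f_{d^{\prime}}((\alpha^{0},\ldots,\alpha^{d^{\prime}})){=}f_{d^{\prime}}((\beta^{0},\ldots,\beta^{d^{\prime}}))$, and put $(\delta^{1},\ldots,\delta^{d^{\prime}}){=}f_{d^{\prime}-1}((\alpha^{1},\ldots,\alpha^{d^{\prime}}))$.

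The first step is to show that $f_{d^{\prime}-1}((\beta^{1},\ldots,\beta^{d^{\prime}}))$ also equals $(\delta^{1},\ldots,\delta^{d^{\prime}})$, so that both chains lie in the single set $S((\delta^{1},\ldots,\delta^{d^{\prime}}))$. Indeed, $(\alpha^{0},\ldots,\alpha^{d^{\prime}}){\in}S((\delta^{1},\ldots,\delta^{d^{\prime}}))$ by the choice of $(\delta^{1},\ldots,\delta^{d^{\prime}})$, hence, by the definition of $f_{d^{\prime}}$, the chain $f_{d^{\prime}}((\alpha^{0},\ldots,\alpha^{d^{\prime}})){=}f_{d^{\prime}}^{(\delta^{1},\ldots,\delta^{d^{\prime}})}((\alpha^{0},\ldots,\alpha^{d^{\prime}}))$ belongs to $T((\delta^{1},\ldots,\delta^{d^{\prime}}))$, so that the subchain formed by its entries indexed $1,\ldots,d^{\prime}$ is exactly $(\delta^{1},\ldots,\delta^{d^{\prime}})$. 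By the same argument applied to $(\beta^{0},\ldots,\beta^{d^{\prime}})$, the subchain formed by the entries of $f_{d^{\prime}}((\beta^{0},\ldots,\beta^{d^{\prime}}))$ indexed $1,\ldots,d^{\prime}$ is $f_{d^{\prime}-1}((\beta^{1},\ldots,\beta^{d^{\prime}}))$. Since $f_{d^{\prime}}((\alpha^{0},\ldots,\alpha^{d^{\prime}})){=}f_{d^{\prime}}((\beta^{0},\ldots,\beta^{d^{\prime}}))$, comparing these two subchains yields $f_{d^{\prime}-1}((\beta^{1},\ldots,\beta^{d^{\prime}})){=}(\delta^{1},\ldots,\delta^{d^{\prime}})$.

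Consequently $(\beta^{0},\ldots,\beta^{d^{\prime}}){\in}S((\delta^{1},\ldots,\delta^{d^{\prime}}))$ as well, so the definition of $f_{d^{\prime}}$ gives $f_{d^{\prime}}^{(\delta^{1},\ldots,\delta^{d^{\prime}})}((\alpha^{0},\ldots,\alpha^{d^{\prime}})){=}f_{d^{\prime}}((\alpha^{0},\ldots,\alpha^{d^{\prime}})){=}f_{d^{\prime}}((\beta^{0},\ldots,\beta^{d^{\prime}})){=}f_{d^{\prime}}^{(\delta^{1},\ldots,\delta^{d^{\prime}})}((\beta^{0},\ldots,\beta^{d^{\prime}}))$. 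By the defining property of $f_{d^{\prime}}^{(\delta^{1},\ldots,\delta^{d^{\prime}})}$, this forces $(\alpha^{0},\ldots,\alpha^{d^{\prime}}){\sim_{(\delta^{1},\ldots,\delta^{d^{\prime}})}}(\beta^{0},\ldots,\beta^{d^{\prime}})$, which by the definition of $\sim_{(\delta^{1},\ldots,\delta^{d^{\prime}})}$ is precisely $g((\alpha^{0},\ldots,\alpha^{d^{\prime}})){=}g((\beta^{0},\ldots,\beta^{d^{\prime}}))$, as desired.

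The only mildly delicate point, and hence the main obstacle such as it is, is the first step: recognising that the index $f_{d^{\prime}-1}((\alpha^{1},\ldots,\alpha^{d^{\prime}}))$ used to evaluate $f_{d^{\prime}}((\alpha^{0},\ldots,\alpha^{d^{\prime}}))$ is recoverable from $f_{d^{\prime}}((\alpha^{0},\ldots,\alpha^{d^{\prime}}))$ itself, as the subchain formed by its entries indexed $1,\ldots,d^{\prime}$, which is precisely what taking $T((\delta^{1},\ldots,\delta^{d^{\prime}}))$ as the codomain of $f_{d^{\prime}}^{(\delta^{1},\ldots,\delta^{d^{\prime}})}$ guarantees. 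Once this is in hand, two chains identified by $f_{d^{\prime}}$ are forced to sit inside a single $S((\delta^{1},\ldots,\delta^{d^{\prime}}))$, and the $\sim$-preservation built into $f_{d^{\prime}}^{(\delta^{1},\ldots,\delta^{d^{\prime}})}$ closes the argument; note that the inductive hypothesis on $f_{d^{\prime}-1}$ is not needed for this particular lemma, being used only to sustain the recursion that produces the $f_{d^{\prime}}$.
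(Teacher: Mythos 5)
Your argument is correct and is essentially the paper's own proof of the inductive case $d^{\prime}{\geq}1$: both recover the index from the value of $f_{d^{\prime}}$ via its membership in $T(f_{d^{\prime}-1}((\cdot^{1},\ldots,\cdot^{d^{\prime}})))$, conclude that the two chains lie in one set $S((\delta^{1},\ldots,\delta^{d^{\prime}}))$, and then use the $\sim_{(\delta^{1},\ldots,\delta^{d^{\prime}})}$-preservation built into $f^{(\delta^{1},\ldots,\delta^{d^{\prime}})}_{d^{\prime}}$ to get $g((\alpha^{0},\ldots,\alpha^{d^{\prime}})){=}g((\beta^{0},\ldots,\beta^{d^{\prime}}))$. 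The only cosmetic difference is that the paper's written proof also treats a $d^{\prime}{=}0$ case via the corresponding lemma about $f_{0}$, which does not arise under the section's standing hypothesis $1{\leq}d^{\prime}{<}d$ that you (correctly) work under.
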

{\bf Now, we define the function $f$ used in Section~\ref{new:section:Main:result}.}
Let $f$ be the function from $\CHA_{d}^{k}$ to $\CHA_{d}^{n}$ such that
\begin{itemize}
\item $f((\beta^{0},\ldots,\beta^{d^{\prime}})){=}f_{d^{\prime}}((\beta^{0},\ldots,\beta^{d^{\prime}}))$,
\end{itemize}
where $(\beta^{0},\ldots,\beta^{d^{\prime}})$ ranges over $\CHA_{d}^{k}$.
Propositions~\ref{first:lemma:about:f:ast:alt1:ter}--\ref{second:lemma:about:f:ast:alt1:ter} show that $f$ possesses the properties required in Section~\ref{new:section:Main:result}.
\begin{proposition}\label{first:lemma:about:f:ast:alt1:ter}
$f$ is surjective.
\end{proposition}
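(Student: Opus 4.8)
The plan is to reduce the surjectivity of $f$ to the surjectivity of the component functions $f_{d^{\prime}}$, which has already been obtained layer by layer in Lemmas~\ref{first:lemma:about:f:ast:alt1} and~\ref{first:lemma:about:f:ast:alt1:bis}. First I would recall that, by definition, $\CHA_{d}^{n}{=}\bigcup\{\CHA_{{=}d^{\prime}}^{n}:\ d^{\prime}{\in}\N$ and $d^{\prime}{<}d\}$ and likewise $\CHA_{d}^{k}{=}\bigcup\{\CHA_{{=}d^{\prime}}^{k}:\ d^{\prime}{\in}\N$ and $d^{\prime}{<}d\}$, and that $f$ acts on each layer $\CHA_{{=}d^{\prime}}^{k}$ exactly as $f_{d^{\prime}}$, sending it into the corresponding layer $\CHA_{{=}d^{\prime}}^{n}$.

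Then, given an arbitrary $(\gamma^{0},\ldots,\gamma^{d^{\prime}}){\in}\CHA_{d}^{n}$, I would fix the unique $d^{\prime}{\in}\N$ with $d^{\prime}{<}d$ such that $(\gamma^{0},\ldots,\gamma^{d^{\prime}}){\in}\CHA_{{=}d^{\prime}}^{n}$. If $d^{\prime}{=}0$, Lemma~\ref{first:lemma:about:f:ast:alt1} gives that $f_{0}$ is surjective; if $d^{\prime}{\geq}1$, Lemma~\ref{first:lemma:about:f:ast:alt1:bis}, applied at level $d^{\prime}$, gives that $f_{d^{\prime}}$ is surjective. Either way, there exists $(\alpha^{0},\ldots,\alpha^{d^{\prime}}){\in}\CHA_{{=}d^{\prime}}^{k}$ such that $f_{d^{\prime}}((\alpha^{0},\ldots,\alpha^{d^{\prime}})){=}(\gamma^{0},\ldots,\gamma^{d^{\prime}})$. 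Since $(\alpha^{0},\ldots,\alpha^{d^{\prime}}){\in}\CHA_{d}^{k}$ and $f((\alpha^{0},\ldots,\alpha^{d^{\prime}})){=}f_{d^{\prime}}((\alpha^{0},\ldots,\alpha^{d^{\prime}}))$ by the definition of $f$, this $(\alpha^{0},\ldots,\alpha^{d^{\prime}})$ is a preimage of $(\gamma^{0},\ldots,\gamma^{d^{\prime}})$ under $f$, and surjectivity follows.

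There is essentially no obstacle here, since all the combinatorial work has been pushed into the construction of the $f_{d^{\prime}}$. The only point deserving a word of care is that these functions are built by recursion on $d^{\prime}$, so one should observe that surjectivity has indeed been secured at \emph{every} level $d^{\prime}{<}d$, not merely at a single fixed one: the hypothesis needed to construct $f_{d^{\prime}}$ is precisely the surjectivity of $f_{d^{\prime}{-}1}$, which is Lemma~\ref{first:lemma:about:f:ast:alt1} when $d^{\prime}{=}1$ and an earlier instance of Lemma~\ref{first:lemma:about:f:ast:alt1:bis} when $d^{\prime}{\geq}2$, so the recursion is well-founded and each $f_{d^{\prime}}$ with $0{\leq}d^{\prime}{<}d$ is surjective as required.
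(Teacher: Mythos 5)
Your proposal is correct and follows essentially the same route as the paper: given a chain in $\CHA_{d}^{n}$, locate it in its layer $\CHA_{=d^{\prime}}^{n}$, invoke the surjectivity of $f_{d^{\prime}}$ (Lemmas~\ref{first:lemma:about:f:ast:alt1} and~\ref{first:lemma:about:f:ast:alt1:bis}), and conclude via the fact that $f$ agrees with $f_{d^{\prime}}$ on that layer. Your additional remark that surjectivity is secured at every level of the recursion is a fair point of care but adds nothing beyond what the paper's construction already guarantees.
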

\begin{proof}
Let $(\gamma^{0},\ldots,\gamma^{d^{\prime}}){\in}\CHA_{d}^{n}$.
Hence, $(\gamma^{0},\ldots,\gamma^{d^{\prime}}){\in}\CHA_{=d^{\prime}}^{n}$.
Since $f_{d^{\prime}}$ is surjective, therefore let $(\beta^{0},\ldots,\beta^{d^{\prime}}){\in}\CHA_{=d^{\prime}}^{k}$ be such that $f_{d^{\prime}}((\beta^{0},\ldots,\beta^{d^{\prime}})){=}(\gamma^{0},\ldots,\gamma^{d^{\prime}}))$.
Since $f((\beta^{0},\ldots,\beta^{d^{\prime}})){=}f_{d^{\prime}}((\beta^{0},\ldots,\beta^{d^{\prime}}))$, therefore $f((\beta^{0},\ldots,\beta^{d^{\prime}})){=}(\gamma^{0},\ldots,\gamma^{d^{\prime}}))$.
\end{proof}
\begin{proposition}\label{first:lemma:about:f:ast:alt1:ter:morphism}
$f$ is a $d$-$(k,n)$-morphism.
\end{proposition}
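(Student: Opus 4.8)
The plan is to read the morphism conditions directly off the layered definition of $f$ via the functions $f_{d^{\prime}}$. The key structural fact, built into the construction, is that $f_{d^{\prime}}$ maps $\CHA_{=d^{\prime}}^{k}$ into $\CHA_{=d^{\prime}}^{n}$ and, for $d^{\prime}{\geq}1$, the tail of $f_{d^{\prime}}((\beta^{0},\ldots,\beta^{d^{\prime}}))$ is exactly $f_{d^{\prime}-1}((\beta^{1},\ldots,\beta^{d^{\prime}}))$. The latter holds because $f_{d^{\prime}}((\beta^{0},\ldots,\beta^{d^{\prime}})){=}f_{d^{\prime}}^{f_{d^{\prime}-1}((\beta^{1},\ldots,\beta^{d^{\prime}}))}((\beta^{0},\ldots,\beta^{d^{\prime}}))$ lies in $T(f_{d^{\prime}-1}((\beta^{1},\ldots,\beta^{d^{\prime}})))$, and every member of $T((\delta^{1},\ldots,\delta^{d^{\prime}}))$ has tail $(\delta^{1},\ldots,\delta^{d^{\prime}})$ by the definition of that set.

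So first I would fix $(\alpha^{0},\ldots,\alpha^{d^{\prime}}){\in}\CHA_{d}^{k}$ and $(\beta^{0},\ldots,\beta^{d^{\prime\prime}}){\in}\CHA_{d}^{n}$ with $f((\alpha^{0},\ldots,\alpha^{d^{\prime}})){=}(\beta^{0},\ldots,\beta^{d^{\prime\prime}})$. Since $f((\alpha^{0},\ldots,\alpha^{d^{\prime}})){=}f_{d^{\prime}}((\alpha^{0},\ldots,\alpha^{d^{\prime}})){\in}\CHA_{=d^{\prime}}^{n}$, we obtain $d^{\prime\prime}{=}d^{\prime}$; in particular $d^{\prime}{\geq}1$ iff $d^{\prime\prime}{\geq}1$, which settles the length requirements in both the forward and the backward conditions. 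Then, assuming $d^{\prime}{\geq}1$ (equivalently $d^{\prime\prime}{\geq}1$), I would prove $f((\alpha^{1},\ldots,\alpha^{d^{\prime}})){=}(\beta^{1},\ldots,\beta^{d^{\prime\prime}})$: by definition $f((\alpha^{1},\ldots,\alpha^{d^{\prime}})){=}f_{d^{\prime}-1}((\alpha^{1},\ldots,\alpha^{d^{\prime}}))$, while $(\beta^{0},\ldots,\beta^{d^{\prime}}){=}f_{d^{\prime}}((\alpha^{0},\ldots,\alpha^{d^{\prime}})){=}f_{d^{\prime}}^{f_{d^{\prime}-1}((\alpha^{1},\ldots,\alpha^{d^{\prime}}))}((\alpha^{0},\ldots,\alpha^{d^{\prime}}))$, which belongs to $T(f_{d^{\prime}-1}((\alpha^{1},\ldots,\alpha^{d^{\prime}})))$; hence $(\beta^{1},\ldots,\beta^{d^{\prime}}){=}f_{d^{\prime}-1}((\alpha^{1},\ldots,\alpha^{d^{\prime}}))$. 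Combining the two, $f((\alpha^{1},\ldots,\alpha^{d^{\prime}})){=}(\beta^{1},\ldots,\beta^{d^{\prime\prime}})$, so both morphism conditions hold. The case $d^{\prime}{=}0$ needs no treatment, since both conditions are then vacuous.

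There is no real mathematical difficulty here; the argument is pure bookkeeping. The only point requiring care is keeping the lengths $d^{\prime}$ and $d^{\prime\prime}$ aligned and recalling that the component maps $f_{d^{\prime}}^{(\delta^{1},\ldots,\delta^{d^{\prime}})}$ were chosen with codomain $T((\delta^{1},\ldots,\delta^{d^{\prime}}))$, so that the tail of the output of $f_{d^{\prime}}$ is precisely $f_{d^{\prime}-1}$ applied to the tail of the input --- which is exactly what the forward and backward conditions demand.
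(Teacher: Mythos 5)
Your proposal is correct and uses exactly the same ingredients as the paper's proof: the fact that $f_{d^{\prime}}$ lands in $\CHA_{=d^{\prime}}^{n}$ (forcing $d^{\prime}{=}d^{\prime\prime}$) and the fact that the value of $f_{d^{\prime}}^{(\delta^{1},\ldots,\delta^{d^{\prime}})}$ lies in $T((\delta^{1},\ldots,\delta^{d^{\prime}}))$, whose members have tail $(\delta^{1},\ldots,\delta^{d^{\prime}})$, so the tail of $f_{d^{\prime}}$'s output is $f_{d^{\prime}-1}$ of the input's tail. The only difference is presentational: the paper argues by contradiction, whereas you derive the forward and backward conditions directly, which is an equally valid (and arguably cleaner) rendering of the same bookkeeping.
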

\begin{proof}
For the sake of the contradiction, suppose $f$ is not a $d$-$(k,n)$-morphism.
Hence, let $(\alpha^{0},\ldots,\alpha^{d^{\prime}}){\in}\CHA_{d}^{k}$ and $(\beta^{0},\ldots,\beta^{d^{\prime\prime}}){\in}\CHA_{d}^{n}$ be such that $f((\alpha^{0},\ldots,\alpha^{d^{\prime}})){=}(\beta^{0},\ldots,\beta^{d^{\prime\prime}})$ and either $d^{\prime}{\geq}1$ and $d^{\prime\prime}{=}0$, or $d^{\prime\prime}{\geq}1$ and $d^{\prime}{=}0$, or $d^{\prime}{\geq}1$, $d^{\prime\prime}{\geq}1$ and $f((\alpha^{1},\ldots,\alpha^{d^{\prime}})){\not=}(\beta^{1},\ldots,\beta^{d^{\prime\prime}})$.
Thus, $f_{d^{\prime}}((\alpha^{0},\ldots,\alpha^{d^{\prime}})){=}(\beta^{0},\ldots,\beta^{d^{\prime\prime}})$.
Since
\linebreak
$f_{d^{\prime}}((\alpha^{0},\ldots,\alpha^{d^{\prime}})){\in}\CHA_{=d^{\prime}}^{n}$ and $(\beta^{0},\ldots,\beta^{d^{\prime\prime}}){\in}\CHA_{=d^{\prime\prime}}^{n}$, therefore $d^{\prime}{=}d^{\prime\prime}$.
Since either $d^{\prime}{\geq}1$ and $d^{\prime\prime}{=}0$, or $d^{\prime\prime}{\geq}1$ and $d^{\prime}{=}0$, or $d^{\prime}{\geq}1$, $d^{\prime\prime}{\geq}1$ and $f((\alpha^{1},\ldots,\alpha^{d^{\prime}})){\not=}(\beta^{1},\ldots,\beta^{d^{\prime\prime}})$, therefore $d^{\prime}{\geq}1$, $d^{\prime\prime}{\geq}1$ and $f((\alpha^{1},\ldots,\alpha^{d^{\prime}})){\not=}(\beta^{1},\ldots,\beta^{d^{\prime\prime}})$.
Consequently, $f_{d^{\prime}-1}((\alpha^{1},\ldots,\alpha^{d^{\prime}})){\not=}(\beta^{1},\ldots,\beta^{d^{\prime\prime}})$.
Let $(\gamma^{1},\ldots,\gamma^{d^{\prime}}){\in}\CHA_{=d^{\prime}-1}^{n}$ be such that
\linebreak
$f_{d^{\prime}-1}((\alpha^{1},\ldots,\alpha^{d^{\prime}})){=}(\gamma^{1},\ldots,\gamma^{d^{\prime}})$.
Since $f_{d^{\prime}-1}((\alpha^{1},\ldots,\alpha^{d^{\prime}})){\not=}(\beta^{1},\ldots,\beta^{d^{\prime\prime}})$, therefore $(\gamma^{1},\ldots,\gamma^{d^{\prime}}){\not=}(\beta^{1},\ldots,\beta^{d^{\prime\prime}})$.
Since $d^{\prime}{\geq}1$, therefore $f_{d^{\prime}}((\alpha^{0},\ldots,\alpha^{d^{\prime}})){=}f_{d^{\prime}}^{f_{d^{\prime}-1}((\alpha^{1},\ldots,\alpha^{d^{\prime}}))}((\alpha^{0},\ldots,\alpha^{d^{\prime}}))$.
Since $f_{d^{\prime}}((\alpha^{0},\ldots,\alpha^{d^{\prime}})){=}(\beta^{0},\ldots,\beta^{d^{\prime\prime}})$ and
\linebreak
$f_{d^{\prime}-1}((\alpha^{1},\ldots,\alpha^{d^{\prime}})){=}(\gamma^{1},\ldots,\gamma^{d^{\prime}})$, therefore $f_{d^{\prime}}^{(\gamma^{1},\ldots,\gamma^{d^{\prime}})}((\alpha^{0},\ldots,\alpha^{d^{\prime}})){=}(\beta^{0},\ldots,
$\linebreak$
\beta^{d^{\prime\prime}})$.
Since $f_{d^{\prime}}^{(\gamma^{1},\ldots,\gamma^{d^{\prime}})}((\alpha^{0},\ldots,\alpha^{d^{\prime}})){\in}T((\gamma^{1},\ldots,\gamma^{d^{\prime}}))$ and $(\beta^{0},\ldots,\beta^{d^{\prime\prime}}){\in}T((\beta^{1},\ldots,\beta^{d^{\prime\prime}}))$, therefore $(\gamma^{1},\ldots,\gamma^{d^{\prime}}){=}(\beta^{1},\ldots,\beta^{d^{\prime\prime}})$: a contradiction.
\end{proof}
\begin{proposition}\label{second:lemma:about:f:ast:alt1:ter}
For all $(\alpha^{0},\ldots,\alpha^{d^{\prime}}),(\beta^{0},\ldots,\beta^{d^{\prime\prime}}){\in}\CHA_{d}^{k}$, if $f((\alpha^{0},\ldots,\alpha^{d^{\prime}})){=}f((\beta^{0},\ldots,\beta^{d^{\prime\prime}}))$ then $g((\alpha^{0},\ldots,\alpha^{d^{\prime}})){=}g((\beta^{0},\ldots,\beta^{d^{\prime\prime}}))$.
\end{proposition}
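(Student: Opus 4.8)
The plan is to reduce the statement directly to Lemma~\ref{second:lemma:about:f:ast:alt1:bis}, which is precisely the assertion we want but restricted to chains of one fixed length. First I would fix two chains $(\alpha^{0},\ldots,\alpha^{d^{\prime}}),(\beta^{0},\ldots,\beta^{d^{\prime\prime}}){\in}\CHA_{d}^{k}$ and assume $f((\alpha^{0},\ldots,\alpha^{d^{\prime}})){=}f((\beta^{0},\ldots,\beta^{d^{\prime\prime}}))$. Unfolding the definition of $f$, this says $f_{d^{\prime}}((\alpha^{0},\ldots,\alpha^{d^{\prime}})){=}f_{d^{\prime\prime}}((\beta^{0},\ldots,\beta^{d^{\prime\prime}}))$.

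The next step is to show $d^{\prime}{=}d^{\prime\prime}$. This follows because $f_{d^{\prime}}$ takes its values in $\CHA_{=d^{\prime}}^{n}$ and $f_{d^{\prime\prime}}$ takes its values in $\CHA_{=d^{\prime\prime}}^{n}$, so the common value of $f((\alpha^{0},\ldots,\alpha^{d^{\prime}}))$ and $f((\beta^{0},\ldots,\beta^{d^{\prime\prime}}))$ simultaneously belongs to $\CHA_{=d^{\prime}}^{n}$ and to $\CHA_{=d^{\prime\prime}}^{n}$; since these sets are pairwise disjoint for distinct lengths, $d^{\prime}{=}d^{\prime\prime}$. Consequently both chains lie in $\CHA_{=d^{\prime}}^{k}$ and satisfy $f_{d^{\prime}}((\alpha^{0},\ldots,\alpha^{d^{\prime}})){=}f_{d^{\prime}}((\beta^{0},\ldots,\beta^{d^{\prime\prime}}))$.

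Finally I would invoke Lemma~\ref{second:lemma:about:f:ast:alt1:bis} on these two chains of common length $d^{\prime}$ to conclude $g((\alpha^{0},\ldots,\alpha^{d^{\prime}})){=}g((\beta^{0},\ldots,\beta^{d^{\prime\prime}}))$, which is the desired conclusion. I do not expect any genuine obstacle here: the only mildly delicate point is the length-matching $d^{\prime}{=}d^{\prime\prime}$, and the real content has already been carried out in Lemmas~\ref{first:lemma:about:f:ast:alt1:bis}--\ref{second:lemma:about:f:ast:alt1:bis} and Proposition~\ref{cardinality:ordered:appropriate:way}, where the inductive construction of the $f_{d^{\prime}}$ and their compatibility with $g$ are established.
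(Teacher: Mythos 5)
Your proposal is correct and follows essentially the same route as the paper: unfold $f$ into $f_{d^{\prime}}$, deduce $d^{\prime}{=}d^{\prime\prime}$ from the fact that $f_{d^{\prime}}$ takes values in $\CHA_{=d^{\prime}}^{n}$ and these sets are disjoint for distinct lengths, and then invoke the fixed-length compatibility of $f_{d^{\prime}}$ with $g$. The only difference is that the paper treats the case $d^{\prime}{=}d^{\prime\prime}{=}0$ separately via Lemma~\ref{second:lemma:about:f:ast:alt1} (about $f_{0}$), because Lemma~\ref{second:lemma:about:f:ast:alt1:bis} is stated under the standing assumption $1{\leq}d^{\prime}{<}d$; your uniform appeal to Lemma~\ref{second:lemma:about:f:ast:alt1:bis} is thus a minor citation mismatch for the length-$0$ case rather than a mathematical gap.
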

\begin{proof}
Let $(\alpha^{0},\ldots,\alpha^{d^{\prime}}),(\beta^{0},\ldots,\beta^{d^{\prime\prime}}){\in}\CHA_{d}^{k}$.
Suppose $f((\alpha^{0},\ldots,\alpha^{d^{\prime}})){=}f((\beta^{0},\ldots,\beta^{d^{\prime\prime}}))$.
Hence, $f_{d^{\prime}}((\alpha^{0},\ldots,\alpha^{d^{\prime}})){=}f_{d^{\prime\prime}}((\beta^{0},\ldots,\beta^{d^{\prime\prime}}))$.
We consider the following $2$ cases.
\\
\\
{\bf Case $d^{\prime}{=}0$ and $d^{\prime\prime}{=}0$.}
Since $f_{d^{\prime}}((\alpha^{0},\ldots,\alpha^{d^{\prime}})){=}f_{d^{\prime\prime}}((\beta^{0},\ldots,\beta^{d^{\prime\prime}}))$, $d^{\prime}{=}0$ and $d^{\prime\prime}{=}0$, therefore by Lemma~\ref{second:lemma:about:f:ast:alt1}, $g((\alpha^{0},\ldots,\alpha^{d^{\prime}})){=}g((\beta^{0},\ldots,\beta^{d^{\prime}}))$.
\\
\\
{\bf Case either $d^{\prime}{\geq}1$, or $d^{\prime\prime}{\geq}1$.}
Since $f_{d^{\prime}}((\alpha^{0},\ldots,\alpha^{d^{\prime}})){=}f_{d^{\prime\prime}}((\beta^{0},\ldots,\beta^{d^{\prime\prime}}))$, $f_{d^{\prime}}((\alpha^{0},
\linebreak
\ldots,\alpha^{d^{\prime}})){\in}\CHA_{=d^{\prime}}^{n}$ and $f_{d^{\prime\prime}}((\beta^{0},\ldots,\beta^{d^{\prime\prime}})){\in}\CHA_{=d^{\prime\prime}}^{n}$, therefore $d^{\prime}{=}d^{\prime\prime}$.
Since$f_{d^{\prime}}((\alpha^{0},\ldots,\alpha^{d^{\prime}})){=}f_{d^{\prime\prime}}((\beta^{0},\ldots,\beta^{d^{\prime\prime}}))$, therefore by Lemma~\ref{second:lemma:about:f:ast:alt1:bis}, $g((\alpha^{0},\ldots,\alpha^{d^{\prime}})){=}g((\beta^{0},\ldots,\beta^{d^{\prime\prime}}))$.
\end{proof}
\section{Conclusion}
In this paper, we have proved that for all $d{\geq}2$, in $\Alt_{1}+\square^{d}\bot$, unifiable formulas are unitary for elementary unification.
Here are open questions concerning unification types for elementary unification and unification with constants:
\begin{description}
\item[$1.$] determine for all $d{\geq}2$, the unification type of the locally tabular modal logic $\K+\square^{d}\bot$ for elementary unification,
\item[$2.$] determine for all $d{\geq}2$, the unification type of the locally tabular modal logics $\Alt_{1}+\square^{d}\bot$ and $\K+\square^{d}\bot$ for unification with constants,
\item[$3.$] determine the unification type of other locally tabular modal logics like the ones studied in~\cite{Miyazaki:2004,Nagle:Thomason:1985,Shapirovsky:Shehtman:2016} for elementary unification and unification with constants,
\item[$4.$] determine the unification type of $\Alt_{1}+\lozenge\top$ for elementary unification.
\end{description}
We conjecture that the modal logics mentioned in Items~$1$--$3$ are either finitary, or unitary within the corresponding considered contexts of unification.
As for the unification type within the context of elementary unification considered in Item~$4$, it is still a mystery.
\\
\\
On the side of computability and complexity, it is known that elementary unification is in $\PSPACE$ for $\Alt_{1}$~\cite{Balbiani:Tinchev:2016} and decidable for $\LTL$~\cite{Rybakov:2008}.
As for $\Alt_{1}+\lozenge\top$, the membership in $\NP$ of its elementary unification problem is a direct consequence of the fact that in this modal logic, one can easily determine if a given variable-free formula is equivalent to $\bot$, or is equivalent to $\top$.
Here are open questions concerning the computability and the complexity of elementary unification and unification with constants:
\begin{description}
\item[$5.$] determine for all $d{\geq}2$, the complexity of elementary unification for the locally tabular modal logics $\Alt_{1}+\square^{d}\bot$ and $\K+\square^{d}\bot$,
\item[$6.$] determine for all $d{\geq}2$, the complexity of unification with constants for the locally tabular modal logics $\Alt_{1}+\square^{d}\bot$ and $\K+\square^{d}\bot$,
\item[$7.$] determine the complexity of elementary unification and unification with constants for other locally tabular modal logics like the ones studied in~\cite{Miyazaki:2004,Nagle:Thomason:1985,Shapirovsky:Shehtman:2016},
\item[$8.$] determine the computability of unification with constants for $\LTL$ and $\Alt_{1}+\lozenge\top$.
\end{description}
The local tabularity of the modal logics mentioned in Items~$5$--$7$ implies the decidability of the corresponding considered unification problems.
As for the unification problems with constants considered in Item~$8$, its computability is still a mystery.
\section*{Acknowledgements}
The preparation of this paper has been supported by {\em Bulgarian Science Fund}\/ (Project \emph{DN02/15/19.12.2016}) and {\em Universit\'e Paul Sabatier}\/ (Programme \emph{Professeurs invit\'es 2018}).
%
%
%
%
\bibliographystyle{named}
\section*{Appendix}
{\bf Proof of Proposition~\ref{main:one}.}
$(i)\Rightarrow(ii)$ ---
Suppose ${\parallel}S/{\sim}{\parallel}{\leq}{\parallel}T{\parallel}{\leq}{\parallel}S{\parallel}$.
Let $h$ be a function from $S/{\sim}$ to $S$ such that for all $\alpha{\in}S$, $h(\lbrack\alpha\rbrack){\in}\lbrack\alpha\rbrack$.
Obviously, $h$ is injective.
Let $S_{0}{=}\{h(\lbrack\alpha\rbrack):\ \alpha{\in}S\}$.
Since $h$ is injective, therefore ${\parallel}S/{\sim}{\parallel}{=}{\parallel}S_{0}{\parallel}$.
Since ${\parallel}S/{\sim}{\parallel}{\leq}{\parallel}T{\parallel}$, therefore ${\parallel}S_{0}{\parallel}{\leq}{\parallel}T{\parallel}$.
Let $T_{0}$ be a subset of $T$ such that ${\parallel}T_{0}{\parallel}{=}{\parallel}S_{0}{\parallel}$.
Let $f_{0}$ be a one-to-one correspondence between $S_{0}$ and $T_{0}$.
Let $T_{1}{=}T\backslash T_{0}$.
Notice that $T_{0}$ and $T_{1}$ make a partition of $T$.
Since ${\parallel}T{\parallel}{\leq}{\parallel}S{\parallel}$ and ${\parallel}T_{0}{\parallel}{=}{\parallel}S_{0}{\parallel}$, therefore ${\parallel}T_{1}{\parallel}{\leq}{\parallel}S\backslash S_{0}{\parallel}$.
Let $S_{1}$ be a subset of $S\backslash S_{0}$ such that ${\parallel}S_{1}{\parallel}{=}{\parallel}T_{1}{\parallel}$.
Let $f_{1}$ be a one-to-one correspondence between $S_{1}$ and $T_{1}$.
Let $S_{2}{=}(S\backslash S_{0})\backslash S_{1}$.
Let $f_{2}$ be the function from $S_{2}$ to $T$ such that for all $\alpha{\in}S_{2}$, $f_{2}(\alpha){=}f_{0}(h(\lbrack\alpha\rbrack))$.
Let $f$ be the function from $S$ to $T$ defined by $f\arrowvert S_{0}{=}f_{0}$, $f\arrowvert S_{1}{=}f_{1}$ and $f\arrowvert S_{2}{=}f_{2}$.
%
%
\\
\\
{\bf Lemma}
$f$ is surjective.
%
%
\\
\\
{\bf Proof.}
Let $\beta{\in}T$.
We consider the following $2$ cases.
\\
\\
{\em Case $\beta{\in}T_{0}$.}
Since $f_{0}$ is one-to-one, therefore let $\alpha{\in}S_{0}$ be such that $f_{0}(\alpha){=}\beta$.
Thus, $\alpha{\in}S$.
Moreover, $f(\alpha){=}f_{0}(\alpha)$.
Since $f_{0}(\alpha){=}\beta$, therefore $f(\alpha){=}\beta$.
\\
\\
{\em Case $\beta{\in}T_{1}$.}
Since $f_{1}$ is one-to-one, therefore let $\alpha{\in}S_{1}$ be such that $f_{1}(\alpha){=}\beta$.
Hence, $\alpha{\in}S$.
Moreover, $f(\alpha){=}f_{1}(\alpha)$.
Since $f_{1}(\alpha){=}\beta$, therefore $f(\alpha){=}\beta$.
%
%
\\
\\
{\bf Lemma}
For all $\alpha,\beta{\in}S$, if $f(\alpha){=}f(\beta)$ then $\alpha{\sim}\beta$.
%
%
\\
\\
{\bf Proof.}
Let $\alpha,\beta{\in}S$ be such that $f(\alpha){=}f(\beta)$.
We consider the following $6$ cases.
\\
\\
{\em Case $\alpha{\in}S_{0}$ and $\beta{\in}S_{0}$.}
Consequently, $f(\alpha){=}f_{0}(\alpha)$ and $f(\beta){=}f_{0}(\beta)$.
Since $f(\alpha){=}f(\beta)$, therefore $f_{0}(\alpha){=}f_{0}(\beta)$.
Since $f_{0}$ is one-to-one, therefore $\alpha{=}\beta$.
Thus, $\alpha{\sim}\beta$.
\\
\\
{\em Case $\alpha{\in}S_{0}$ and $\beta{\in}S_{1}$.}
Consequently, $f(\alpha){=}f_{0}(\alpha)$ and $f(\beta){=}f_{1}(\beta)$.
Since $f(\alpha){=}f(\beta)$, therefore $f_{0}(\alpha){=}f_{1}(\beta)$.
Since $f_{0}(\alpha){\in}T_{0}$ and $f_{1}(\beta){\in}T_{1}$, therefore $T_{0}$ and $T_{1}$ do not make a partition of $T$: a contradiction.
\\
\\
{\em Case $\alpha{\in}S_{0}$ and $\beta{\in}S_{2}$.}
Hence, $f(\alpha){=}f_{0}(\alpha)$ and $f(\beta){=}f_{2}(\beta)$.
Since $f(\alpha){=}f(\beta)$, therefore $f_{0}(\alpha){=}f_{2}(\beta)$.
Thus, $f_{0}(\alpha){=}f_{0}(h(\lbrack\beta\rbrack))$.
Since $f_{0}$ is one-to-one, therefore $\alpha{=}h(\lbrack\beta\rbrack)$.
Since $h(\lbrack\beta\rbrack){\in}\lbrack\beta\rbrack$, therefore $\alpha{\in}\lbrack\beta\rbrack$.
Consequently, $\alpha{\sim}\beta$.
\\
\\
{\em Case $\alpha{\in}S_{1}$ and $\beta{\in}S_{1}$.}
Hence, $f(\alpha){=}f_{1}(\alpha)$ and $f(\beta){=}f_{1}(\beta)$.
Since $f(\alpha){=}f(\beta)$, therefore $f_{1}(\alpha){=}f_{1}(\beta)$.
Since $f_{1}$ is one-to-one, therefore $\alpha{=}\beta$.
Thus, $\alpha{\sim}\beta$.
\\
\\
{\em Case $\alpha{\in}S_{1}$ and $\beta{\in}S_{2}$.}
Hence, $f(\alpha){=}f_{1}(\alpha)$ and $f(\beta){=}f_{2}(\beta)$.
Since $f(\alpha){=}f(\beta)$, therefore $f_{1}(\alpha){=}f_{2}(\beta)$.
Thus, $f_{1}(\alpha){=}f_{0}(h(\lbrack\beta\rbrack))$.
Since $f_{1}(\alpha){\in}T_{1}$ and $f_{0}(h(\lbrack\beta\rbrack)){\in}T_{0}$, therefore $T_{0}$ and $T_{1}$ do not make a partition of $T$: a contradiction.
\\
\\
{\em Case $\alpha{\in}S_{2}$ and $\beta{\in}S_{2}$.}
Hence, $f(\alpha){=}f_{2}(\alpha)$ and $f(\beta){=}f_{2}(\beta)$.
Since $f(\alpha){=}f(\beta)$, therefore $f_{2}(\alpha){=}f_{2}(\beta)$.
Consequently, $f_{0}(h(\lbrack\alpha\rbrack)){=}f_{0}(h(\lbrack\beta\rbrack))$.
Since $f_{0}$ is one-to-one, therefore $h(\lbrack\alpha\rbrack){=}h(\lbrack\beta\rbrack)$.
Since $h(\lbrack\alpha\rbrack){\in}\lbrack\alpha\rbrack$ and $h(\lbrack\beta\rbrack){\in}\lbrack\beta\rbrack$, therefore $\lbrack\alpha\rbrack\cap\lbrack\beta\rbrack{\not=}\emptyset$.
Thus, $\alpha{\sim}\beta$.
\\
\\
$(ii)\Rightarrow(i)$ ---
Suppose $f$ is a surjective function from $S$ to $T$ such that for all $\alpha,\beta{\in}S$, if $f(\alpha){=}f(\beta)$ then $\alpha{\sim}\beta$.
For the sake of the contradiction, suppose either ${\parallel}S/{\sim}{\parallel}{>}{\parallel}T{\parallel}$, or ${\parallel}T{\parallel}{>}{\parallel}S{\parallel}$.
Since $f$ is surjective, therefore ${\parallel}T{\parallel}{\leq}{\parallel}S{\parallel}$.
Since either ${\parallel}S/{\sim}{\parallel}{>}{\parallel}T{\parallel}$, or ${\parallel}T{\parallel}{>}{\parallel}S{\parallel}$, therefore ${\parallel}S/{\sim}{\parallel}{>}{\parallel}T{\parallel}$.
Let $p{\in}\N$ and $\beta^{1},\ldots,\beta^{p}{\in}S$ be such that $p{>}{\parallel}T{\parallel}$ and for all $q,r{\in}\N$, if $1{\leq}q,r{\leq}p$ and $q{\not=}r$ then $\beta^{q}{\not\sim}\beta^{r}$.
Hence, for all $q,r{\in}\N$, if $1{\leq}q,r{\leq}p$ and $q{\not=}r$ then $f(\beta^{q}){\not=}f(\beta^{r})$.
Thus, $p{\leq}{\parallel}T{\parallel}$: a contradiction.
\\
\\
\\
\\
{\bf Proof of Lemma~\ref{lemma:6:about:psi:beta:B:equivalent:condi}.}
By induction on $\psi{\in}\FOR_{n}$.
We only consider the following $2$ cases.
\\
\\
{\bf Case $\psi{=}x_{i}$.}
Let $(\beta^{0},\ldots,\beta^{d^{\prime\prime}}){\in}\CHA_{d}^{n}$.
\begin{description}
\item[$(i)\Rightarrow(ii)$ ---] Suppose $(\alpha^{0},\ldots,\alpha^{d^{\prime}}){\in}\CHA_{d}^{k}$ is such that $f((\alpha^{0},\ldots,\alpha^{d^{\prime}})){=}(\beta^{0},\ldots,\beta^{d^{\prime\prime}})$ and $(\alpha^{0},\ldots,\alpha^{d^{\prime}}){\models_{k}}\sigma(x_{i})$.
Let $(\gamma^{0},\ldots,\gamma^{d^{\prime\prime\prime}}){\in}\CHA_{d}^{k}$ be such that $f((\gamma^{0},\ldots,\gamma^{d^{\prime\prime\prime}})){=}(\beta^{0},\ldots,\beta^{d^{\prime\prime}})$.
Since $f((\alpha^{0},\ldots,\alpha^{d^{\prime}})){=}(\beta^{0},\ldots,\beta^{d^{\prime\prime}})$, therefore $f((\alpha^{0},\ldots,\alpha^{d^{\prime}})){=}f((\gamma^{0},\ldots,\gamma^{d^{\prime\prime\prime}}))$.
Hence, $g((\alpha^{0},\ldots,\alpha^{d^{\prime}})){=}g((\gamma^{0},\ldots,\gamma^{d^{\prime\prime\prime}}))$.
Thus, $(\alpha^{0},\ldots,\alpha^{d^{\prime}}){\models_{k}}\sigma(x_{i})$ iff $(\gamma^{0},\ldots,\gamma^{d^{\prime\prime\prime}}){\models_{k}}\sigma(x_{i})$.
Since $(\alpha^{0},\ldots,\alpha^{d^{\prime}}){\models_{k}}\sigma(x_{i})$, therefore $(\gamma^{0},\ldots,\gamma^{d^{\prime\prime\prime}}){\models_{k}}\sigma(x_{i})$.
\item[$(ii)\Rightarrow(iii)$ ---] Suppose for all $(\alpha^{0},\ldots,\alpha^{d^{\prime}}){\in}\CHA_{d}^{k}$, if $f((\alpha^{0},\ldots,\alpha^{d^{\prime}})){=}(\beta^{0},\ldots,\beta^{d^{\prime\prime}})$ then $(\alpha^{0},\ldots,\alpha^{d^{\prime}}){\models_{k}}\sigma(x_{i})$.
Since $f$ is surjective, therefore let $(\gamma^{0},\ldots,\gamma^{d^{\prime\prime\prime}}){\in}\CHA_{d}^{k}$ be such that $f((\gamma^{0},\ldots,\gamma^{d^{\prime\prime\prime}})){=}(\beta^{0},\ldots,\beta^{d^{\prime\prime}})$.
Since for all $(\alpha^{0},\ldots,\alpha^{d^{\prime}}){\in}\CHA_{d}^{k}$, if $f((\alpha^{0},\ldots,\alpha^{d^{\prime}})){=}(\beta^{0},\ldots,\beta^{d^{\prime\prime}})$ then $(\alpha^{0},\ldots,\alpha^{d^{\prime}}){\models_{k}}\sigma(x_{i})$, therefore $(\gamma^{0},\ldots,\gamma^{d^{\prime\prime\prime}}){\models_{k}}\sigma(x_{i})$.
Consequently, $(\beta^{0},\ldots,\beta^{d^{\prime\prime}}){\models_{n}}\for_{n}(f((\gamma^{0},\ldots,\gamma^{d^{\prime\prime\prime}})))\rightarrow\tau(x_{i})$.
Since$f((\gamma^{0},\ldots,\gamma^{d^{\prime\prime\prime}})){=}(\beta^{0},\ldots,\beta^{d^{\prime\prime}})$, therefore $(\beta^{0},\ldots,\beta^{d^{\prime\prime}}){\models_{n}}\for_{n}((\beta^{0},\ldots,\beta^{d^{\prime\prime}}))\rightarrow\tau(x_{i})$.
Since by Proposition~\ref{lemma:1:about:alpha:beta:equivalent:conditions}, $(\beta^{0},\ldots,\beta^{d^{\prime\prime}})\models_{n}\for_{n}((\beta^{0},\ldots,\beta^{d^{\prime\prime}}))$, therefore $(\beta^{0},\ldots,\beta^{d^{\prime\prime}})\models_{n}\tau(x_{i})$.
\item[$(iii)\Rightarrow(i)$ ---] Suppose $(\beta^{0},\ldots,\beta^{d^{\prime\prime}}){\models_{n}}\tau(x_{i})$.
Let $(\alpha^{0},\ldots,\alpha^{d^{\prime}}){\in}\CHA_{d}^{k}$ be such that $(\alpha^{0},\ldots,\alpha^{d^{\prime}}){\models_{k}}\sigma(x_{i})$ and $(\beta^{0},\ldots,\beta^{d^{\prime\prime}}){\models_{n}}\for_{n}(f((\alpha^{0},\ldots,\alpha^{d^{\prime}})))$.
Hence, by Proposition~\ref{lemma:1:about:alpha:beta:equivalent:conditions}, $f((\alpha^{0},\ldots,\alpha^{d^{\prime}})){=}(\beta^{0},\ldots,\beta^{d^{\prime\prime}})$.
\end{description}
{\bf Case $\psi{=}\square\chi$.}
Let $(\beta^{0},\ldots,\beta^{d^{\prime\prime}}){\in}\CHA_{d}^{n}$.
\begin{description}
\item[$(i)\Rightarrow(ii)$ ---] Suppose $(\alpha^{0},\ldots,\alpha^{d^{\prime}}){\in}\CHA_{d}^{k}$ is such that $f((\alpha^{0},\ldots,\alpha^{d^{\prime}})){=}(\beta^{0},\ldots,\beta^{d^{\prime\prime}})$ and $(\alpha^{0},\ldots,\alpha^{d^{\prime}}){\models_{k}}\sigma(\square\chi)$.
Let $(\gamma^{0},\ldots,\gamma^{d^{\prime\prime\prime}}){\in}\CHA_{d}^{k}$ be such that $f((\gamma^{0},\ldots,\gamma^{d^{\prime\prime\prime}})){=}(\beta^{0},\ldots,\beta^{d^{\prime\prime}})$.
Suppose $(\gamma^{0},\ldots,\gamma^{d^{\prime\prime\prime}}){\not\models_{k}}\sigma(\square\chi)$.
Thus, $d^{\prime\prime\prime}{\geq}1$ and $(\gamma^{1},\ldots,\gamma^{d^{\prime\prime\prime}}){\not\models_{k}}\sigma(\chi)$.
Since $f$ is a $d$-$(k,n)$-morphism and $f((\gamma^{0},\ldots,\gamma^{d^{\prime\prime\prime}})){=}(\beta^{0},\ldots,\beta^{d^{\prime\prime}})$, therefore $d^{\prime\prime}{\geq}1$ and $f((\gamma^{1},\ldots,\gamma^{d^{\prime\prime\prime}})){=}(\beta^{1},\ldots,\beta^{d^{\prime\prime}})$.
Since $f$ is a $d$-$(k,n)$-morphism and $f((\alpha^{0},\ldots,\alpha^{d^{\prime}})){=}(\beta^{0},\ldots,\beta^{d^{\prime\prime}})$, therefore $d^{\prime}{\geq}1$ and $f((\alpha^{1},\ldots,\alpha^{d^{\prime}})){=}(\beta^{1},\ldots,\beta^{d^{\prime\prime}})$.
Since $(\gamma^{1},\ldots,\gamma^{d^{\prime\prime\prime}}){\not\models_{k}}\sigma(\chi)$ and $f((\gamma^{1},\ldots,\gamma^{d^{\prime\prime\prime}})){=}(\beta^{1},\ldots,\beta^{d^{\prime\prime}})$, therefore by induction hypothesis, $(\alpha^{1},\ldots,\alpha^{d^{\prime}}){\not\models_{k}}\sigma(\chi)$.
Hence, $(\alpha^{0},\ldots,\alpha^{d^{\prime}}){\not\models_{k}}\sigma(\square\chi)$: a contradiction.
\item[$(ii)\Rightarrow(iii)$ ---] Suppose for all $(\alpha^{0},\ldots,\alpha^{d^{\prime}}){\in}\CHA_{d}^{k}$, if $f((\alpha^{0},\ldots,\alpha^{d^{\prime}})){=}(\beta^{0},\ldots,\beta^{d^{\prime\prime}})$ then $(\alpha^{0},\ldots,\alpha^{d^{\prime}}){\models_{k}}\sigma(\square\chi)$.
Suppose $(\beta^{0},\ldots,\beta^{d^{\prime\prime}}){\not\models_{n}}\tau(\square\chi)$.
Consequently, $d^{\prime\prime}{\geq}1$ and $(\beta^{1},\ldots,\beta^{d^{\prime\prime}}){\not\models_{n}}\tau(\chi)$.
Since $f$ is surjective, therefore let $(\gamma^{0},\ldots,\gamma^{d^{\prime\prime\prime}}){\in}\CHA_{d}^{k}$ be such that $f((\gamma^{0},\ldots,\gamma^{d^{\prime\prime\prime}})){=}(\beta^{0},\ldots,\beta^{d^{\prime\prime}})$.
Since for all $(\alpha^{0},\ldots,\alpha^{d^{\prime}}){\in}\CHA_{d}^{k}$, if $f((\alpha^{0},\ldots,\alpha^{d^{\prime}})){=}(\beta^{0},\ldots,\beta^{d^{\prime\prime}})$ then $(\alpha^{0},\ldots,\alpha^{d^{\prime}}){\models_{k}}\sigma(\square\chi)$, therefore $(\gamma^{0},\ldots,\gamma^{d^{\prime\prime\prime}}){\models_{k}}\sigma(\square\chi)$.
Since $f$ is a $d$-$(k,n)$-morphism, $d^{\prime\prime}{\geq}1$ and $f((\gamma^{0},\ldots,\gamma^{d^{\prime\prime\prime}})){=}(\beta^{0},\ldots,\beta^{d^{\prime\prime}})$, therefore $d^{\prime\prime\prime}{\geq}1$ and $f((\gamma^{1},\ldots,\gamma^{d^{\prime\prime\prime}})){=}(\beta^{1},\ldots,\beta^{d^{\prime\prime}})$.
Since $(\beta^{1},\ldots,\beta^{d^{\prime\prime}}){\not\models_{n}}\tau(\chi)$, therefore by induction hypothesis, $(\gamma^{1},\ldots,\gamma^{d^{\prime\prime\prime}}){\not\models_{k}}\sigma(\chi)$.
Thus, $(\gamma^{0},\ldots,\gamma^{d^{\prime\prime\prime}}){\not\models_{k}}\sigma(\square\chi)$: a contradiction.
\item[$(iii)\Rightarrow(i)$ ---] Suppose $(\beta^{0},\ldots,\beta^{d^{\prime\prime}}){\models_{n}}\tau(\square\chi)$.
Since $f$ is surjective, therefore let $(\alpha^{0},\ldots,\alpha^{d^{\prime}}){\in}\CHA_{d}^{k}$ be such that $f((\alpha^{0},\ldots,\alpha^{d^{\prime}})){=}(\beta^{0},\ldots,\beta^{d^{\prime\prime}})$.
Suppose $(\alpha^{0},\ldots,\alpha^{d^{\prime}}){\not\models_{k}}\sigma(\square\chi)$.
Consequently, $d^{\prime}{\geq}1$ and $(\alpha^{1},\ldots,\alpha^{d^{\prime}}){\not\models_{k}}\sigma(\chi)$.
Since $f$ is a $d$-$(k,n)$-morphism and $f((\alpha^{0},\ldots,\alpha^{d^{\prime}})){=}(\beta^{0},\ldots,\beta^{d^{\prime\prime}})$, therefore $d^{\prime\prime}{\geq}1$ and $f((\alpha^{1},\ldots,\alpha^{d^{\prime}})){=}(\beta^{1},\ldots,\beta^{d^{\prime\prime}})$.
Since $(\alpha^{1},\ldots,\alpha^{d^{\prime}}){\not\models_{k}}\sigma(\chi)$, therefore by induction hypothesis, $(\beta^{1},\ldots,\beta^{d^{\prime\prime}}){\not\models_{n}}\tau(\chi)$.
Hence, $(\beta^{0},\ldots,\beta^{d^{\prime\prime}}){\not\models_{n}}\tau(\square\chi)$: a contradiction.
\end{description}
\vspace{0.93cm}
{\bf Proof of Lemma~\ref{lemma:7:about:beta:B:i:following:equivalent:conditions}.}
Let $(\beta^{0},\ldots,\beta^{d^{\prime\prime}}){\in}\CHA_{d}^{k}$ and $i{\in}\{1,\ldots,n\}$.
\\
\\
$(i)\Rightarrow(ii)$ ---
Suppose $(\beta^{0},\ldots,\beta^{d^{\prime\prime}}){\models_{k}}\nu(x_{i})$.
Let $(\alpha^{0},\ldots,\alpha^{d^{\prime}}){\in}\CHA_{d}^{k}$ be such that $f((\alpha^{0},\ldots,\alpha^{d^{\prime}})){\models_{n}}x_{i}$ and $(\beta^{0},\ldots,\beta^{d^{\prime\prime}}){\models_{k}}\for_{k}((\alpha^{0},\ldots,\alpha^{d^{\prime}}))$.
Thus, by Proposition~\ref{lemma:1:about:alpha:beta:equivalent:conditions}, $(\beta^{0},\ldots,\beta^{d^{\prime\prime}}){=}(\alpha^{0},\ldots,\alpha^{d^{\prime}})$.
Since $f((\alpha^{0},\ldots,\alpha^{d^{\prime}})){\models_{n}}x_{i}$, therefore $f((\beta^{0},\ldots,\beta^{d^{\prime\prime}}))\models_{n}x_{i}$.
\\
\\
$(ii)\Rightarrow(i)$ ---
Suppose $f((\beta^{0},\ldots,\beta^{d^{\prime\prime}})){\models_{n}}x_{i}$.
Consequently, $(\beta^{0},\ldots,\beta^{d^{\prime\prime}}){\models_{k}}\for_{k}((\beta^{0},\ldots,\beta^{d^{\prime\prime}}))\rightarrow\nu(x_{i})$.
Since by Proposition~\ref{lemma:1:about:alpha:beta:equivalent:conditions}, $(\beta^{0},\ldots,\beta^{d^{\prime\prime}}){\models_{k}}\for_{k}((\beta^{0},\ldots,\beta^{d^{\prime\prime}}))$, therefore $(\beta^{0},\ldots,\beta^{d^{\prime\prime}}){\models_{k}}\nu(x_{i})$.
\\
\\
\\
\\
{\bf Proof of Lemma~\ref{lemma:8:about:beta:B:MOD:gamma:C:are:equi}.}
By $\ll$-induction on $(d^{\prime\prime},d^{\prime\prime\prime})$.
We consider the following $2$ cases.
\\
\\
{\bf Case $d^{\prime\prime}{=}0$ and $d^{\prime\prime\prime}{=}0$.}
\begin{description}
\item[$(i)\Rightarrow(ii)$ ---] Suppose $f((\beta^{0},\ldots,\beta^{d^{\prime\prime}})){=}(\gamma^{0},\ldots,\gamma^{d^{\prime\prime\prime}})$.
Since for all $i{\in}\{1,\ldots,n\}$, $(\gamma^{0},\ldots,\gamma^{d^{\prime\prime\prime}}){\models_{n}}x_{i}^{\gamma^{0}_{i}}$, therefore for all $i{\in}\{1,\ldots,n\}$, $f((\beta^{0},\ldots,\beta^{d^{\prime\prime}})){\models_{n}}x_{i}^{\gamma^{0}_{i}}$.
Thus, for all $i{\in}\{1,\ldots,n\}$, by Lemma~\ref{lemma:7:about:beta:B:i:following:equivalent:conditions}, $(\beta^{0},\ldots,\beta^{d^{\prime\prime}}){\models_{k}}\nu(x_{i})^{\gamma^{0}_{i}}$.
Hence, $(\beta^{0},\ldots,\beta^{d^{\prime\prime}}){\models_{k}}\nu(x_{1})^{\gamma^{0}_{1}}\wedge\ldots\wedge\nu(x_{n})^{\gamma^{0}_{n}}$.
Since $d^{\prime\prime}{=}0$ and $d^{\prime\prime\prime}{=}0$, therefore $(\beta^{0},\ldots,\beta^{d^{\prime\prime}}){\models_{k}}\nu(\for_{n}((\gamma^{0},\ldots,\gamma^{d^{\prime\prime\prime}})))$.
\item[$(ii)\Rightarrow(i)$ ---] Suppose $(\beta^{0},\ldots,\beta^{d^{\prime\prime}}){\models_{k}}\nu(\for_{n}((\gamma^{0},\ldots,\gamma^{d^{\prime\prime\prime}})))$.
Consequently, $(\beta^{0},\ldots,\beta^{d^{\prime\prime}}){\models_{k}}\nu(x_{1})^{\gamma^{0}_{1}}\wedge\ldots\wedge\nu(x_{n})^{\gamma^{0}_{n}}$.
Hence, for all $i{\in}\{1,\ldots,n\}$, $(\beta^{0},\ldots,\beta^{d^{\prime\prime}}){\models_{k}}\nu(x_{i})^{\gamma^{0}_{i}}$.
Thus, for all $i{\in}\{1,\ldots,n\}$, by Lemma~\ref{lemma:7:about:beta:B:i:following:equivalent:conditions}, $f((\beta^{0},\ldots,\beta^{d^{\prime\prime}})){\models_{n}}x_{i}^{\gamma^{0}_{i}}$.
Since $d^{\prime\prime}{=}0$ and $d^{\prime\prime\prime}{=}0$, therefore by Proposition~\ref{simple:proposition:about:morphisms}, $f((\beta^{0},\ldots,\beta^{d^{\prime\prime}})){=}(\gamma^{0},\ldots,\gamma^{d^{\prime\prime\prime}})$.
\end{description}
{\bf Case either $d^{\prime\prime}{\geq}1$, or $d^{\prime\prime\prime}{\geq}1$.}
\begin{description}
\item[$(i)\Rightarrow(ii)$ ---] Suppose $f((\beta^{0},\ldots,\beta^{d^{\prime\prime}})){=}(\gamma^{0},\ldots,\gamma^{d^{\prime\prime\prime}})$.
Since for all $i{\in}\{1,\ldots,n\}$, $(\gamma^{0},\ldots,\gamma^{d^{\prime\prime\prime}}){\models_{n}}x_{i}^{\gamma^{0}_{i}}$, therefore for all $i{\in}\{1,\ldots,n\}$, $f((\beta^{0},\ldots,\beta^{d^{\prime\prime}})){\models_{n}}x_{i}^{\gamma^{0}_{i}}$.
Moreover, since $f$ is a $d$-$(k,n)$-morphism and either $d^{\prime\prime}{\geq}1$, or $d^{\prime\prime\prime}{\geq}1$, therefore $d^{\prime\prime}{\geq}1$, $d^{\prime\prime\prime}{\geq}1$ and $f((\beta^{1},\ldots,\beta^{d^{\prime\prime}})){=}(\gamma^{1},\ldots,\gamma^{d^{\prime\prime\prime}})$.
Hence, for all $i{\in}\{1,\ldots,n\}$, by Lemma~\ref{lemma:7:about:beta:B:i:following:equivalent:conditions}, $(\beta^{0},\ldots,\beta^{d^{\prime\prime}}){\models_{k}}\nu(x_{i})^{\gamma^{0}_{i}}$.
Moreover, by induction hypothesis, $(\beta^{1},\ldots,\beta^{d^{\prime\prime}}){\models_{k}}\nu(\for_{n}((\gamma^{1},\ldots,\gamma^{d^{\prime\prime\prime}})))$.
Consequently, $(\beta^{0},\ldots,\beta^{d^{\prime\prime}}){\models_{k}}\nu(x_{1})^{\gamma^{0}_{1}}\wedge\ldots\wedge\nu(x_{n})^{\gamma^{0}_{n}}$.
Moreover, $(\beta^{0},\ldots,\beta^{d^{\prime\prime}}){\models_{k}}\lozenge\nu(\for_{n}((\gamma^{1},\ldots,\gamma^{d^{\prime\prime\prime}})))$.
Thus, $(\beta^{0},\ldots,\beta^{d^{\prime\prime}}){\models_{k}}\nu(\for_{n}((\gamma^{0},\ldots,\gamma^{d^{\prime\prime\prime}})))$.
\item[$(ii)\Rightarrow(i)$ ---] Suppose $(\beta^{0},\ldots,\beta^{d^{\prime\prime}}){\models_{k}}\nu(\for_{n}((\gamma^{0},\ldots,\gamma^{d^{\prime\prime\prime}})))$.
Hence, if $d^{\prime\prime\prime}{\geq}1$ then $(\beta^{0},\ldots,\beta^{d^{\prime\prime}}){\models_{k}}\nu(x_{1})^{\gamma^{0}_{1}}\wedge\ldots\wedge\nu(x_{n})^{\gamma^{0}_{n}}\wedge\lozenge\nu(\for_{n}((\gamma^{1},\ldots,\gamma^{d^{\prime\prime\prime}})))$ else $(\beta^{0},\ldots,\beta^{d^{\prime\prime}}){\models_{k}}\nu(x_{1})^{\gamma^{0}_{1}}\wedge\ldots\wedge\nu(x_{n})^{\gamma^{0}_{n}}\wedge\square\bot$.
Since either $d^{\prime\prime}{\geq}1$, or $d^{\prime\prime\prime}{\geq}1$, therefore $d^{\prime\prime}{\geq}1$, $d^{\prime\prime\prime}{\geq}1$ and for all $i{\in}\{1,\ldots,n\}$, $(\beta^{0},\ldots,\beta^{d^{\prime\prime}}){\models_{k}}\nu(x_{i})^{\gamma^{0}_{i}}$.
Moreover, $(\beta^{1},\ldots,\beta^{d^{\prime\prime}}){\models_{k}}\nu(\for_{n}((\gamma^{1},\ldots,\gamma^{d^{\prime\prime\prime}})))$.
Thus, for all $i{\in}\{1,\ldots,n\}$, by Lemma~\ref{lemma:7:about:beta:B:i:following:equivalent:conditions}, $f((\beta^{0},\ldots,\beta^{d^{\prime\prime}})){\models_{n}}x_{i}^{\gamma^{0}_{i}}$.
Moreover, by induction hypothesis, $f((\beta^{1},\ldots,\beta^{d^{\prime\prime}})){=}(\gamma^{1},\ldots,\gamma^{d^{\prime\prime\prime}})$.
Consequently, by Proposition~\ref{simple:proposition:about:morphisms}, $f((\beta^{0},\ldots,\beta^{d^{\prime\prime}})){=}(\gamma^{0},\ldots,\gamma^{d^{\prime\prime\prime}})$.
\end{description}
\vspace{0.93cm}
{\bf Proof of Lemma~\ref{lemma:9:about:beta:B:nu:tau:sigma:x:i}.}
Let $(\beta^{0},\ldots,\beta^{d^{\prime\prime}}){\in}\CHA_{d}^{k}$ and $i{\in}\{1,\ldots,n\}$.
\\
\\
$(i)\Rightarrow(ii)$ ---
Suppose $(\beta^{0},\ldots,\beta^{d^{\prime\prime}}){\models_{k}}\nu(\tau(x_{i}))$.
Let $(\alpha^{0},\ldots,\alpha^{d^{\prime}}){\in}\CHA_{d}^{k}$ be such that $(\alpha^{0},\ldots,\alpha^{d^{\prime}}){\models_{k}}\sigma(x_{i})$ and $(\beta^{0},\ldots,\beta^{d^{\prime\prime}}){\models_{k}}\nu(\for_{n}(f((\alpha^{0},\ldots,\alpha^{d^{\prime}}))))$.
Hence, by Lemma~\ref{lemma:8:about:beta:B:MOD:gamma:C:are:equi}, $f((\beta^{0},\ldots,\beta^{d^{\prime\prime}})){=}f((\alpha^{0},\ldots,\alpha^{d^{\prime}}))$.
Thus, $g((\beta^{0},\ldots,\beta^{d^{\prime\prime}})){=}g((\alpha^{0},\ldots,\alpha^{d^{\prime}}))$.
Since $(\alpha^{0},\ldots,\alpha^{d^{\prime}}){\models_{k}}\sigma(x_{i})$, therefore $(\beta^{0},\ldots,\beta^{d^{\prime\prime}}){\models_{k}}\sigma(x_{i})$.
\\
\\
$(ii)\Rightarrow(i)$ ---
Suppose $(\beta^{0},\ldots,\beta^{d^{\prime\prime}}){\models_{k}}\sigma(x_{i})$.
Consequently, $(\beta^{0},\ldots,\beta^{d^{\prime\prime}}){\models_{k}}\nu(\for_{n}(f((\beta^{0},\ldots,\beta^{d^{\prime\prime}}))))\rightarrow\nu(\tau(x_{i}))$.
Since by Lemma~\ref{lemma:8:about:beta:B:MOD:gamma:C:are:equi}, $(\beta^{0},\ldots,\beta^{d^{\prime\prime}}){\models_{k}}\nu(\for_{n}(f((\beta^{0},\ldots,\beta^{d^{\prime\prime}}))))$, therefore $(\beta^{0},\ldots,\beta^{d^{\prime\prime}}){\models_{k}}\nu(\tau(x_{i}))$.
\\
\\
\\
\\
{\bf Proof of Lemma~\ref{first:lemma:about:f:ast:alt1}.}
Let $(\beta^{0}){\in}\CHA_{=0}^{n}$.
We consider the following $2$ cases.
\\
\\
{\bf Case $(\beta^{0}){\in}\CHA_{=0}^{n}\setminus U$.}
Since $f^{\ast}_{0}$ is one-to-one, therefore let $(\alpha^{0}){\in}\CHA_{=0}^{k}$ be such that $(\alpha^{0}){\in}S$ and $f^{\ast}_{0}((\alpha^{0})){=}(\beta^{0})$.
Consequently, $f_{0}((\alpha^{0})){=}f^{\ast}_{0}((\alpha^{0}))$.
Since $f^{\ast}_{0}((\alpha^{0})){=}(\beta^{0})$, therefore $f_{0}((\alpha^{0})){=}(\beta^{0})$.
\\
\\
{\bf Case $(\beta^{0}){\not\in}\CHA_{=0}^{n}\setminus U$.}
Thus, $(\beta^{0}){\in}U$.
Consequently, let $(\alpha^{0}){\in}\CHA_{=0}^{k}$ be such that $(\alpha^{0}){=}h((\beta^{0}))$.
Hence, $f_{0}((\alpha^{0})){=}g((\alpha^{0}))$.
Since $g((\alpha^{0})){=}(\beta^{0})$, therefore $f_{0}((\alpha^{0})){=}(\beta^{0})$.
\\
\\
\\
\\
{\bf Proof of Lemma~\ref{second:lemma:about:f:ast:alt1}.}
Let $(\alpha^{0}),(\beta^{0}){\in}\CHA_{=0}^{k}$.
Suppose $f_{0}((\alpha^{0})){=}f_{0}((\beta^{0}))$.
We consider the following $3$ cases.
\\
\\
{\bf Case $\alpha^{0}{\in}S$ and $\beta^{0}{\in}S$.}
Hence, $f_{0}((\alpha^{0})){=}f^{\ast}_{0}((\alpha^{0}))$ and $f_{0}((\beta^{0})){=}f^{\ast}_{0}((\beta^{0}))$.
Since $f_{0}((\alpha^{0})){=}f_{0}((\beta^{0}))$, therefore $f^{\ast}_{0}((\alpha^{0})){=}f^{\ast}_{0}((\beta^{0}))$.
Since $f^{\ast}_{0}$ is one-to-one, therefore $\alpha^{0}{=}\beta^{0}$.
Consequently, $g((\alpha^{0})){=}g((\beta^{0}))$.
\\
\\
{\bf Case $\alpha^{0}{\in}S$ and $\beta^{0}{\not\in}S$.}
Thus, $f_{0}((\alpha^{0})){=}f^{\ast}_{0}((\alpha^{0}))$ and $f_{0}((\beta^{0})){=}g((\beta^{0}))$.
Since
\linebreak
$f_{0}((\alpha^{0})){=}f_{0}((\beta^{0}))$, therefore $f^{\ast}_{0}((\alpha^{0})){=}g((\beta^{0}))$.
Since $f^{\ast}_{0}((\alpha^{0})){\in}\CHA_{=0}^{n}\setminus U$ and $g((\beta^{0})){\in}U$, therefore $\CHA_{=0}^{n}\setminus U$ and $U$ do not make a partition of $\CHA_{=0}^{n}$: a contradiction.
\\
\\
{\bf Case $\alpha^{0}{\not\in}S$ and $\beta^{0}{\not\in}S$.}
Hence, $f_{0}((\alpha^{0})){=}g((\alpha^{0}))$ and $f_{0}((\beta^{0})){=}g((\beta^{0}))$.
Since
\linebreak
$f_{0}((\alpha^{0})){=}f_{0}((\beta^{0}))$, therefore $g((\alpha^{0})){=}g((\beta^{0}))$.
\\
\\
\\
\\
{\bf Proof of Lemma~\ref{first:lemma:about:f:ast:alt1:bis}.}
Let $(\delta^{0},\ldots,\delta^{d^{\prime}}){\in}\CHA_{=d^{\prime}}^{n}$.
Hence, $(\delta^{0},\ldots,\delta^{d^{\prime}}){\in}T((\delta^{1},\ldots,\delta^{d^{\prime}}))$.
Since $f_{d^{\prime}}^{(\delta^{1},\ldots,\delta^{d^{\prime}})}$ is surjective, therefore let $(\beta^{0},\ldots,\beta^{d^{\prime}}){\in}S((\delta^{1},\ldots,\delta^{d^{\prime}}))$ be such that $f_{d^{\prime}}^{(\delta^{1},\ldots,\delta^{d^{\prime}})}((\beta^{0},\ldots,\beta^{d^{\prime}})){=}(\delta^{0},\ldots,\delta^{d^{\prime}}))$.
Thus, $f_{d^{\prime}{-}1}((\beta^{1},\ldots,\beta^{d^{\prime}})){=}(\delta^{1},\ldots,\delta^{d^{\prime}})$.
Moreover, $f_{d^{\prime}}((\beta^{0},\ldots,\beta^{d^{\prime}})){=}f_{d^{\prime}}^{f_{d^{\prime}{-}1}((\beta^{1},\ldots,\beta^{d^{\prime}}))}((\beta^{0},\ldots,\beta^{d^{\prime}}))$.
\linebreak
Consequently, $f_{d^{\prime}}((\beta^{0},\ldots,\beta^{d^{\prime}})){=}f_{d^{\prime}}^{(\delta^{1},\ldots,\delta^{d^{\prime}})}((\beta^{0},\ldots,\beta^{d^{\prime}}))$.
Since $f_{d^{\prime}}^{(\delta^{1},\ldots,\delta^{d^{\prime}})}((\beta^{0},\ldots,\beta^{d^{\prime}})){=}(\delta^{0},\ldots,\delta^{d^{\prime}})$, therefore $f_{d^{\prime}}((\beta^{0},\ldots,\beta^{d^{\prime}})){=}(\delta^{0},\ldots,\delta^{d^{\prime}})$.
\\
\\
\\
\\
{\bf Proof of Lemma~\ref{second:lemma:about:f:ast:alt1:bis}.}
Let $(\alpha^{0},\ldots,\alpha^{d^{\prime}}),(\beta^{0},\ldots,\beta^{d^{\prime}}){\in}\CHA_{=d^{\prime}}^{k}$.
Suppose $f_{d^{\prime}}((\alpha^{0},\ldots,\alpha^{d^{\prime}})){=}f_{d^{\prime}}((\beta^{0},\ldots,\beta^{d^{\prime}}))$.
We consider the following $2$ cases.
\\
\\
{\bf Case $d^{\prime}{=}0$.}
Since $f_{d^{\prime}}((\alpha^{0},\ldots,\alpha^{d^{\prime}})){=}f_{d^{\prime}}((\beta^{0},\ldots,\beta^{d^{\prime}}))$, therefore by Lemma~\ref{second:lemma:about:f:ast:alt1},
\linebreak
$g((\alpha^{0},\ldots,\alpha^{d^{\prime}})){=}g((\beta^{0},\ldots,\beta^{d^{\prime}}))$.
\\
\\
{\bf Case $d^{\prime}{\geq}1$.}
Hence, $f_{d^{\prime}}((\alpha^{0},\ldots,\alpha^{d^{\prime}})){=}f_{d^{\prime}}^{f_{d^{\prime}{-}1}((\alpha^{1},\ldots,\alpha^{d^{\prime}}))}((\alpha^{0},\ldots,\alpha^{d^{\prime}}))$ and $f_{d^{\prime}}((\beta^{0},\ldots,\beta^{d^{\prime}})){=}f_{d^{\prime}}^{f_{d^{\prime}{-}1}((\beta^{1},\ldots,\beta^{d^{\prime}}))}((\beta^{0},\ldots,\beta^{d^{\prime}}))$.
Since $f_{d^{\prime}}((\alpha^{0},\ldots,\alpha^{d^{\prime}})){=}f_{d^{\prime}}((\beta^{0},\ldots,\beta^{d^{\prime}}))$, therefore $f_{d^{\prime}}^{f_{d^{\prime}{-}1}((\alpha^{1},\ldots,\alpha^{d^{\prime}}))}((\alpha^{0},\ldots,\alpha^{d^{\prime}})){=}f_{d^{\prime}}^{f_{d^{\prime}{-}1}((\beta^{1},\ldots,\beta^{d^{\prime}}))}((\beta^{0},\ldots,\beta^{d^{\prime}}))$.
\linebreak
Let $(\gamma^{1},\ldots,\gamma^{d^{\prime}}),(\delta^{1},\ldots,\delta^{d^{\prime}}){\in}\CHA_{=d^{\prime}{-}1}^{n}$ be such that $f_{d^{\prime}{-}1}((\alpha^{1},\ldots,\alpha^{d^{\prime}}))=(\gamma^{1},\ldots,\gamma^{d^{\prime}})$ and $f_{d^{\prime}{-}1}((\beta^{1},\ldots,\beta^{d^{\prime}}))=(\delta^{1},\ldots,\delta^{d^{\prime}})$.
Since $f_{d^{\prime}}^{f_{d^{\prime}{-}1}((\alpha^{1},\ldots,\alpha^{d^{\prime}}))}((\alpha^{0},\ldots,\alpha^{d^{\prime}})){=}f_{d^{\prime}}^{f_{d^{\prime}{-}1}((\beta^{1},\ldots,\beta^{d^{\prime}}))}((\beta^{0},\ldots,\beta^{d^{\prime}}))$, therefore $f_{d^{\prime}}^{(\gamma^{1},\ldots,\gamma^{d^{\prime}})}((\alpha^{0},\ldots,\alpha^{d^{\prime}})){=}f_{d^{\prime}}^{(\delta^{1},\ldots,\delta^{d^{\prime}})}((\beta^{0},\ldots,\beta^{d^{\prime}}))$.
Since $f_{d^{\prime}}^{(\gamma^{1},\ldots,\gamma^{d^{\prime}})}((\alpha^{0},\ldots,\alpha^{d^{\prime}})){\in}T((\gamma^{1},\ldots,\gamma^{d^{\prime}}))$ and
\linebreak
$f_{d^{\prime}}^{(\delta^{1},\ldots,\delta^{d^{\prime}})}((\alpha^{0},\ldots,\alpha^{d^{\prime}})){\in}T((\delta^{1},\ldots,\delta^{d^{\prime}}))$, therefore $(\gamma^{1},\ldots,\gamma^{d^{\prime}}){=}(\delta^{1},\ldots,\delta^{d^{\prime}})$.
\linebreak
Since $f_{d^{\prime}}^{(\gamma^{1},\ldots,\gamma^{d^{\prime}})}((\alpha^{0},\ldots,\alpha^{d^{\prime}})){=}f_{d^{\prime}}^{(\delta^{1},\ldots,\delta^{d^{\prime}})}((\beta^{0},\ldots,\beta^{d^{\prime}}))$, therefore $(\alpha^{0},\ldots,\alpha^{d^{\prime}}){\sim_{(\gamma^{1},\ldots,\gamma^{d^{\prime}})}}(\beta^{0},\ldots,\beta^{d^{\prime}})$ and $(\alpha^{0},\ldots,\alpha^{d^{\prime}}){\sim_{(\delta^{1},\ldots,\delta^{d^{\prime}})}}(\beta^{0},\ldots,\beta^{d^{\prime}})$.
Consequently,
\linebreak
$g((\alpha^{0},\ldots,\alpha^{d^{\prime}})){=}g((\beta^{0},\ldots,\beta^{d^{\prime}}))$.
\end{document}